\documentclass[11pt]{article}

\usepackage[T1]{fontenc}
\usepackage[utf8]{inputenc}
\usepackage[hyphens]{url}
\usepackage{amsmath,amssymb,amsfonts,amsthm,bm,bbm,mathtools}
\usepackage{graphicx,microtype,booktabs,multirow}
\usepackage{libertine}
\usepackage{xcolor}
\usepackage{enumitem}
\usepackage{setspace}
\usepackage[hang,flushmargin]{footmisc}
\usepackage[round,longnamesfirst,authoryear]{natbib}
\usepackage[
  colorlinks=true,
  hyperfootnotes=false,
  linkcolor=blue,
  citecolor=blue,
  urlcolor=blue,
  pdftitle={Ranking Experiments under Sequential Sampling},
  pdfauthor={Zihao Li and Tianhao Liu}
]{hyperref}
\usepackage[nameinlink,capitalize]{cleveref}

\usepackage[paperwidth=8.5in,paperheight=11in]{geometry}
\makeatletter
\usepackage{fancyhdr}
\makeatother

\bibpunct{(}{)}{;}{a}{,}{,}

\theoremstyle{plain}
\newtheorem{theorem}{Theorem}
\newtheorem{lemma}{Lemma}
\newtheorem{proposition}{Proposition}

\theoremstyle{definition}
\newtheorem{definition}{Definition}

\theoremstyle{remark}

\newtheorem*{remark*}{Remark}

\crefname{assumption}{assumption}{assumptions}
\Crefname{assumption}{Assumption}{Assumptions}

\newcommand{\E}{\mathbb E}
\newcommand{\Prob}{\mathbb P}
\newcommand{\R}{\mathbb R}
\newcommand{\N}{\mathbb N}
\newcommand{\one}{\mathbf 1}
\newcommand{\Law}{\mathcal L}
\newcommand{\cP}{\mathcal P}
\newcommand{\cF}{\mathcal F}

\newcommand{\KL}[2]{\mathrm D\!\left(#1\middle\|#2\right)}

\newcommand{\supp}{\operatorname{supp}}
\newcommand{\argmax}{\operatorname*{arg\,max}}
\newcommand{\dd}{\,\mathrm d}
\newcommand{\scord}{\mathrm{d}}
\newcommand{\stopord}{\mathrm{s}}

\title{\textbf{Ranking Experiments under Sequential Sampling}}
\author{%
  Zihao Li%
  \thanks{%
    Department of Economics, Columbia University;
    \href{mailto:zl3366@columbia.edu}{zl3366@columbia.edu}.
  }%
  \and
  Tianhao Liu%
  \thanks{%
    Cowles Foundation for Research in Economics, Yale University;
    \href{mailto:tianhao.liu@yale.edu}{tianhao.liu@yale.edu}.
  }%
}

\begin{document}

\maketitle

\begingroup
\renewcommand{\thefootnote}{}
\footnotetext{%
  We are grateful to Renjie Zhong for helpful comments.
  All errors are our own.
}
\endgroup

\begin{abstract}
We compare statistical experiments when observations are inexpensive and can be acquired sequentially until the decision maker chooses to stop. We introduce two orders. Small-cost decision dominance asks which of two equally priced experiments is eventually preferred in every decision problem as the per-observation cost vanishes; large-budget stopping dominance asks which experiment can reproduce every terminal experiment attainable from the other under all sufficiently large expected-sample budgets. Our main result shows that, for generic pairs, the two orders coincide and are both characterized by strict dominance of every pairwise Kullback--Leibler divergence. The key step is a uniform exact-conversion theorem: any finite-output stopping
policy based on one experiment can be reproduced exactly using another,
with first-order expected-sample requirements determined by pairwise KL rates and
a square-root remainder that is uniform over policies.
\end{abstract}

\section{Introduction}\label{sec:introduction}
Digitization has made many forms of data cheaper to acquire, store, and process \citep{EinavLevin2014,Varian2014,GoldfarbTucker2019}. A firm can query a database, run a diagnostic, or collect an additional measurement at low marginal cost, and can condition whether to continue on the evidence already observed.\footnote{A prominent example is online experimentation. Digital platforms make experimentation inexpensive and allow firms to condition how much data they collect on outcomes observed along the way \citep{KohaviTangXu2020,JohariEtAl2022}.
} As a result, decision makers
may demand large amounts of data, and may choose when to stop acquiring. This raises a natural question for comparing data technologies. Suppose observations from two technologies are available at the same low cost, and a decision maker may acquire as many as she wishes before stopping. Which technology would she prefer across all decision problems she might ultimately face?

We call this comparison \emph{small-cost decision dominance}. An experiment dominates another if, in every finite decision problem, it delivers a weakly higher optimal value for all sufficiently small common per-observation costs. This is the natural analogue of Blackwell's decision-theoretic comparison when information is acquired sequentially at a cost rather than provided once and for free \citep{Blackwell1951,Blackwell1953}.

A similar question can be posed on the supply side. Fix an expected-sample budget. A designer may repeatedly sample from a source experiment,
choose when to stop, and then produce a finite output, thereby inducing a new
experiment from the entire observed history. Which source experiment should be
viewed as more productive when data is abundant? We compare source experiments by the sets of experiments they can generate in
this way. We say that one source \emph{large-budget stopping dominates} another
if, once the budget is sufficiently large, it can exactly reproduce every
experiment attainable from the other under the same expected-sample budget in every state.

Our main result shows that the two comparisons have the same generic answer.
Let $F=(F_i)_{i\in\Theta}$ and $G=(G_i)_{i\in\Theta}$ be finite,
full-support, pairwise identified experiments on a finite state space. For
distinct states $i$ and $j$, define
\[
  I_{ij}(E):=\KL{E_i}{E_j},
\]
where $\mathrm D$ denotes the \emph{Kullback--Leibler (KL) divergence}. We call the pair $(F,G)$
KL-generic if none of their $I_{ij}$'s coincide. Our main
theorem establishes that
\[
\begin{split}
&I_{ij}(F)>I_{ij}(G)\quad\text{for every }i\ne j \\
&\qquad\Longleftrightarrow\quad
F\text{ large-budget stopping dominates }G\\
&\qquad\Longleftrightarrow\quad
F\text{ small-cost decision dominates }G.
\end{split}
\]

Under state $i$, a policy must accumulate evidence against every alternative $j$ from which its terminal output distinguishes $i$, and each observation from experiment $E$ supplies such evidence at mean rate $I_{ij}(E)$. A source can therefore be uniformly more productive only if it generates evidence faster in every direction. The theorem shows that, generically, these necessary pairwise conditions are also sufficient. 

The proof of our main theorem rests on a uniform exact-conversion theorem. Given any finite-output stopping policy based on $G$, we construct a stopping policy based on $F$ that induces exactly the same terminal experiment. Under state $i$, the reproducing $F$-policy's expected sample size satisfies
\[
  \E_i^F N
  \le
  \rho_i(F\to G)\E_i^G T
  +C_{F,G}\sqrt{\E_i^G T},
  \qquad
  \rho_i(F\to G):=\max_{j\ne i}
  \frac{I_{ij}(G)}{I_{ij}(F)}.
\]
A single reproducing policy satisfies these bounds simultaneously in every state, with a constant $C_{F,G}$ uniform over policies and finite output spaces. If every pairwise KL inequality is strict, then $\rho_i(F\to G)<1$ for every $i$; the resulting linear saving eventually dominates the square-root remainder, yielding conversion within the same budget for all sufficiently large budgets.

\subsection{Relation to the literature}\label{sec:literature}

This paper lies at the intersection of four literatures: the comparison of
sequential experiments, large-sample orders for repeated experiments,
endogenous information acquisition, and sequential testing and experimental
design. We discuss these connections in turn.

\paragraph{Comparison of sequential experiments.}
The closest statistical antecedents compare experiments through sufficiency,
deficiency, and sequential sampling. Repetition can generate Blackwell
comparisons that fail for a single observation \citep{Torgersen1970,Torgersen1991},
while Le Cam's deficiency measures how closely one experiment can simulate
another \citep{LeCam1964,LeCamYang2000}. Our comparison differs from the usual
asymptotic-deficiency approach: the terminal experiment is reproduced \emph{exactly},
and the asymptotic object is the expected number of source observations needed
for that exact reproduction.

More directly, \citet{Greenshtein1996} extends zero-deficiency comparisons to
sequential experiments. For stopped records generated by a common i.i.d.\
experiment, \citet{GreenshteinTorgersen1997} show that sufficiency of one
stopped record for another requires weakly larger expected sample size in every
state; related results for exponential families impose further restrictions on
the stopping-time distributions \citep{GreenshteinTorgersen1998}. We instead
compare two distinct i.i.d.\ source experiments and give a sufficient condition
for exact reproduction of any finite terminal output, together with a statewise
sample-size bound that is uniform over target policies. Pairwise KL
inequalities arise as necessary conditions by data processing, consistent with
the classical observation that Blackwell sufficiency implies KL-information
inequalities \citep{GoelDeGroot1979}. These inequalities do not characterize
one-shot Blackwell dominance. Our result shows that, with repeated sampling and
endogenous stopping, their strict versions generically characterize both of
our operational orders.

A separate literature extends Blackwell's question to settings in which the
object being ranked is an information process rather than a repeatable source.
\citet{deOliveira2018} obtains an adapted-garbling characterization for
information arriving over time, while \citet{DillenbergerKrishnaSadowski2023},
\citet{WhitmeyerWilliams2024}, and \citet{RenouVenel2024} study related
comparisons of dynamic information for different classes of intertemporal
decision problems. These papers rank the timing or evolution of information.
In our model, the state and source are stationary, observations are costly, and
the object ultimately produced is a single terminal experiment.

\paragraph{Large-sample orders for repeated experiments.}
A second closely related literature asks whether repeated experiments become
comparable as the sample size grows. \citet{MoscariniSmith2002} compare the
value of deterministic i.i.d.\ samples in finite-state, finite-action decision
problems. Under their genericity conditions, large-sample value is governed by
the hardest pair of states to distinguish, yielding a scalar index based on
pairwise Chernoff information. Our decision order asks a parallel
value-comparison question but allows the sample size to depend on the realized
history. Moving from a deterministic horizon to endogenous stopping replaces
the fixed-sample Chernoff criterion with the full array of directed pairwise KL
rates.

A stronger fixed-horizon comparison is Blackwell dominance in large samples
(the MPST order). \citet{Azrieli2014} requires $F^{\otimes n}$ to Blackwell dominate
$G^{\otimes n}$ for every sufficiently large $n$ and shows that this order lies
strictly between one-shot Blackwell dominance and the Moscarini--Smith order. For bounded binary-state experiments, \citet{MuPomattoStrackTamuz2021}
show, under an endpoint genericity condition, that Blackwell dominance in large samples is equivalent to strict dominance of the R\'enyi-divergence
spectrum in both directions.

The distinction from our stopping order lies in the resource constraint. A
deterministic horizon assigns the same number of observations to every sample
path, so likelihood-ratio tails, and hence the R\'enyi spectrum or multivariate
moment-generating function, remain relevant. Under an expected-sample budget, a
policy can stop early on easy histories and continue on difficult ones, so
expected log-likelihood evidence is instead governed by directed KL
divergence. We elaborate on this distinction in \Cref{sec:discussion}.

\paragraph{Endogenous information acquisition.}
Our production interpretation also connects to work that derives information
costs or feasible information structures from an underlying acquisition
process. \citet{MorrisStrack2019} study sequential sampling from a prescribed
Wald source. In the binary case, every Bayes-plausible distribution of
posteriors can be generated by a stopping rule, and its expected acquisition
cost admits a log-likelihood representation; with multiple states,
attainability is more restrictive. Our question is comparative: when can one
source reproduce every terminal experiment generated by another?

\citet{BloedelZhong2025} derive an indirect information cost by minimizing a
primitive direct cost over sequential acquisition procedures and characterize
the resulting costs through sequential learning-proofness. Our approach is
complementary. Rather than assigning a scalar cost to a target experiment, we
fix a source technology and compare the sets of terminal experiments it can
generate under an expected-sample budget. Our KL characterization is
also closely related to \citet{PomattoStrackTamuz2023}, whose axioms for
constant-marginal information costs yield KL-divergence-based cost measures.
Weak coordinatewise dominance of directed KL divergences is therefore
equivalent to assigning weakly more informational content to $F$ than to $G$
under every cost in their class. Their result evaluates an experiment as an
informational output; ours evaluates it as a repeatable input technology and
asks which outputs can be produced from it.

A broader literature studies optimal dynamic information acquisition in a
given decision environment, allowing agents to choose how information is
gathered over time
\citep{FudenbergStrackStrzalecki2018,CheMierendorff2019,Zhong2022,HebertWoodford2023}.
Particularly relevant is \citet{LiangMuSyrgkanis2022}, who study the dynamic
allocation of attention across multiple Gaussian sources with endogenous
stopping. The distinction is that these papers optimize an acquisition
strategy within a given environment, whereas our comparison ranks fixed source
technologies uniformly across downstream decision problems and target
policies.

\paragraph{Sequential decision theory, testing, and experimental design.}
The mechanics behind our KL characterization are closely related to classical
sequential decision theory. The problem of paying for observations and choosing
when to stop goes back to \citet{ArrowBlackwellGirshick1949} and
\citet{Wald1945,Wald1947}. For two simple hypotheses,
\citet{WaldWolfowitz1948} establish the expected-sample optimality of the
sequential probability-ratio test, while \citet{Hoeffding1960} derives
information-theoretic lower bounds for sequential procedures. Extensions to
multiple hypotheses and small sampling costs include
\citet{KieferSacks1963,Lorden1977,DragalinTartakovskyVeeravalli1999}. Across
this literature, $I_{ij}(E)$ is the mean rate at which evidence accumulates
under state $i$ against alternative $j$, which explains why directed KL
divergences govern first-order sample requirements.

Our objective, however, is not to solve a particular testing problem. The exact-conversion theorem must reproduce every finite terminal output generated by every
integrable stopping policy, regardless of whether that output is an estimate or an optimal testing decision. This policy-uniform requirement turns the familiar
evidence-rate calculation into an order over source experiments.

Sequential experimental design introduces an additional margin by allowing the
experiment used at each date to depend on past observations
\citep{Chernoff1959,DeGroot1962,KieferSacks1963}; modern work on controlled
sensing develops this idea for multiple hypotheses
\citep{NitinawaratAtiaVeeravalli2013}. In our model, the source experiment is
fixed within a policy: the agent chooses when to stop and what terminal output
to produce, but not which experiment to sample next. We therefore compare
information technologies themselves rather than adaptive designs constructed
from a menu of technologies.

The rest of the paper is organized as follows. \Cref{sec:model} defines the two orders and the associated sequential experiments. \Cref{sec:results} states the equivalence and exact-conversion theorem and proves the former in the main text. 
\Cref{sec:discussion} compares our endogenous-stopping orders with the fixed-sample orders of Moscarini--Smith and Mu--Pomatto--Strack--Tamuz. \Cref{sec:conclusion} concludes. Proofs of the exact-conversion theorem and the auxiliary lemmas appear in the appendices.

\section{Model and concepts}\label{sec:model}

\subsection{Experiments and sequential policies}

Let the state space be $\Theta=\{0,1,\ldots,d\}$, where $d\ge1$. A finite
experiment $E=(E_i)_{i\in\Theta}$ consists of a finite signal space
$\mathcal X_E$ and, for each state $i$, a probability distribution $E_i$ on
$\mathcal X_E$. Throughout, experiments have full support,
\[
  E_i(x)>0\qquad(i\in\Theta,\ x\in\mathcal X_E),
\]
and identify every pair of states: $E_i\ne E_j$ whenever $i\ne j$.
These assumptions ensure that all log-likelihood ratios are bounded and all
pairwise KL divergences are finite and strictly positive.

Repeated observations from $E$ are represented by a signal stream
$X_1,X_2,\ldots$. Under state $i$, the signals are i.i.d.\ with law $E_i$.
A policy may also use any countable collection of auxiliary random variables
that are independent of the state and of the entire signal stream. Formally,
each policy is defined on a common product extension of
$\mathcal X_E^{\N}$, with state-$i$ law denoted by $\Prob_i^E$, and is
adapted to a filtration $(\cF_n^E)_{n\ge0}$ generated by
$X_1,\ldots,X_n$ together with the auxiliary draws revealed by time $n$.

Let $\cF_\infty^E:=\sigma(\bigcup_{n\ge0}\cF_n^E)$. A map $T$ from the
policy's underlying product space to $\N_0\cup\{\infty\}$ is a stopping time
if
\[
  \{T\le n\}\in\cF_n^E
  \qquad\text{for every }n\ge0.
\]
Its stopped sigma-field is
\[
  \cF_T^E
  :=\left\{A\in\cF_\infty^E:
    A\cap\{T\le n\}\in\cF_n^E\text{ for every }n\ge0\right\}.
\]

\begin{definition}[Stopping policy]\label{def:policy}
Let $\mathcal Z$ be finite. A finite-output stopping $E$-policy is a pair
$(T,Z)$, where $T$ is a stopping time with $\E_i^E T<\infty$ for every
$i\in\Theta$, and $Z$ is a $\mathcal Z$-valued, $\cF_T^E$-measurable random
variable. The induced terminal experiment is
\[
  H(T,Z):=\bigl(\Law_i^E(Z)\bigr)_{i\in\Theta}.
\]
\end{definition}

\subsection{Small-cost decision dominance}
A finite decision problem is $D=(\mu,A,u)$, where
$\mu\in\Delta(\Theta)$ is a prior, $A$ is finite, and
$u:A\times\Theta\to\R$ is the payoff function. If each observation costs
$c>0$, define
\[
  U_c(E;D)
  :=\sup_{(T,a_T)}
  \sum_{i\in\Theta}\mu_i
  \E_i^E\bigl[u(a_T,i)-cT\bigr],
\]
where the supremum is over finite-output stopping $E$-policies whose terminal
output is an action in $A$.

This formulation captures a decision maker who repeatedly samples from an
experiment, pays a constant cost for each observation, and chooses
endogenously when to stop and take an action.

\begin{definition}[Small-cost decision dominance]\label{def:decision-order}
Experiment $F$ \emph{small-cost decision dominates} $G$, written
$F\succeq_{\scord}G$, if for every finite decision problem $D$ there is
$\bar c_D>0$ such that
\[
  U_c(F;D)\ge U_c(G;D)
  \qquad\text{whenever }0<c<\bar c_D.
\]
\end{definition}

\cref{def:decision-order} compares experiments by asking whether one
eventually yields at least as much value as the other in every finite decision
problem once observations are sufficiently cheap. The cutoff $\bar c_D$ is allowed to depend on the decision problem. This is
natural because different priors and payoff structures make information more or less valuable, so what counts as a sufficiently small per-observation cost can vary across decision problems.

\subsection{Large-budget stopping dominance}
\begin{definition}[Exact reproduction]\label{def:reproduction}
An $F$-policy $(N,Y)$ exactly reproduces a $G$-policy $(T,Z)$ if
\[
  \Law_i^F(Y)=\Law_i^G(Z)
  \qquad\text{for every }i\in\Theta.
\]
\end{definition}
Next, we shift the comparison from the demand side to the
supply side. Consider an information producer with an expected-sample budget of $n$ observations in each state. The producer can repeatedly sample from either
source experiment $F$ or $G$, choosing when to stop and what output to produce. The question is which source is more productive, in the sense of
generating a larger set of attainable terminal experiments once the expected-sample
budget is sufficiently large.

\begin{definition}[Large-budget stopping dominance]\label{def:stopping-order}
For an expected-sample budget $n\in\R_+$, let
\[
  \cP_n(E)
  :=
  \left\{
    H(T,Z):
    (T,Z)\text{ is a finite-output stopping }E\text{-policy and }
    \E_i^E T\le n\text{ for all }i
  \right\}.
\]
Experiment $F$ \emph{large-budget stopping dominates} $G$, written
$F\succeq_{\stopord}G$, if there exists $n_0\in\R_+$ such that
\[
  \cP_n(G)\subseteq\cP_n(F)
  \qquad\text{for every }n\ge n_0.
\]
\end{definition}
We refer to $\succeq_{\scord}$ and $\succeq_{\stopord}$ simply as the
\emph{decision order} and the \emph{stopping order}.

The set $\cP_n(E)$ consists of the terminal experiments attainable by
finite-output stopping $E$-policies whose expected sample size is at most $n$
in every state. Because the budget constraint is statewise, feasibility does
not depend on the distribution of states and is therefore robust across
priors. Moreover, exact
attainability of a terminal experiment $H$ is equivalent to attaining some
terminal experiment $K$ that Blackwell dominates $H$, since a costless
state-independent garbling of $K$ can reproduce $H$.

A key requirement in \cref{def:stopping-order} is uniformity: a single
threshold $n_0$ must work for every terminal experiment and every budget
$n\ge n_0$. Thus, beyond $n_0$, anything producible from $G$ can be reproduced
from $F$ without increasing the expected-sample budget.

\subsection{Pairwise KL divergences and genericity}

Finally, we introduce a statistical comparison of experiments based on
\emph{Kullback--Leibler divergence}. For each ordered pair of distinct states
$i\ne j$, define
\[
  I_{ij}(E):=\KL{E_i}{E_j}
  =\sum_{x\in\mathcal X_E}E_i(x)
  \log\frac{E_i(x)}{E_j(x)}.
\]
Under state $i$, $I_{ij}(E)$ is the expected increment in the cumulative
log-likelihood ratio in favor of $i$ against $j$. In general, $I_{ij}(E)\ne I_{ji}(E)$.

We focus primarily on the generic case in which the two experiments have
different KL divergences for every ordered pair of states.

\begin{definition}[KL-generic pair]\label{def:generic}
The pair $(F,G)$ is \emph{KL-generic} if
\[
  I_{ij}(F)\ne I_{ij}(G)
  \qquad\text{for every ordered pair }i\ne j.
\]
\end{definition}

\section{Main results}\label{sec:results}

\subsection{The generic equivalence}
Our main result shows that the decision and stopping orders are generically
equivalent, and that both are characterized by strict dominance of directed
pairwise KL divergences.

\begin{theorem}[Generic multistate equivalence]\label{thm:generic-equivalence}
Let $F$ and $G$ be finite, full-support, pairwise identified experiments,
and suppose $(F,G)$ is KL-generic. The following statements are equivalent.
\begin{enumerate}[label=\textup{(\roman*)}]
  \item $F\succeq_{\scord}G$.
  \item $F\succeq_{\stopord}G$.
  \item $I_{ij}(F)>I_{ij}(G)$ for every ordered pair $i\ne j$.
\end{enumerate}
\end{theorem}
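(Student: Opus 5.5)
The plan is to prove the cycle (iii)$\Rightarrow$(ii)$\Rightarrow$(i)$\Rightarrow$(iii). The one substantive input is the uniform exact-conversion theorem announced in the introduction. We take it as given: for every finite-output stopping $G$-policy $(T,Z)$ there is an $F$-policy $(N,Y)$ that exactly reproduces it, and
\[
\E_i^F N\le \rho_i(F\to G)\,\E_i^G T+C_{F,G}\sqrt{\E_i^G T}
\]
holds for every $i$, with $C_{F,G}$ uniform over policies.

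\emph{(iii)$\Rightarrow$(ii).} Under (iii), $\rho:=\max_i\rho_i(F\to G)<1$. Fix $n$ and $H(T,Z)\in\cP_n(G)$, so that $\E_i^G T\le n$ for all $i$. The reproducing policy satisfies $\E_i^F N\le \rho n+C_{F,G}\sqrt n$. This is at most $n$ as soon as $n\ge n_0:=\bigl(C_{F,G}/(1-\rho)\bigr)^2$. Since $C_{F,G}$ does not depend on the policy or the output space, this $n_0$ is uniform, and so $\cP_n(G)\subseteq\cP_n(F)$ for all $n\ge n_0$.

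\emph{(ii)$\Rightarrow$(i).} Fix $D$ and $c$, and let $(T,a_T)$ be a $G$-policy whose value is within $\varepsilon$ of $U_c(G;D)$. Take $n=\max\bigl(n_0,\max_i\E_i^G T\bigr)$. By (ii), some $F$-policy produces an action with the same law in every state and expected sample size at most $n$.

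This alone is not enough, because the budget constraint is statewise and bounds $\E_i^F N$ only by $n$, not by $\E_i^G T$. For that reason I would not route (i) through (ii). Instead I would derive (iii)$\Rightarrow$(i) directly from the conversion theorem, which gives the statewise bound $\E_i^F N\le\rho\,\E_i^GT+C\sqrt{\E_i^GT}$.

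For small $c$ there are two cases.
\begin{itemize}
\item If $D$ has an action that is optimal in every state in the support of $\mu$, both values equal the no-information value minus $0$, and they coincide.
\item Otherwise, near-optimal $G$-policies have $\E_i^GT\to\infty$ for some $i$, and in general the relevant policies have $\E^G T=O(\log(1/c))$. The $F$-copy saves $c\bigl[(1-\rho)\E_i T-C\sqrt{\E_i T}\bigr]$ on every state where $\E_i T$ is large. On the remaining states the loss is at most $cC\sqrt{K}$ for a bounded $K$.
\end{itemize}
The loss on the small states must be beaten by the gain on the large ones. Handling this cleanly likely needs a further step: restrict to the classical optimal policies of Kiefer--Sacks/Lorden type, whose $\E_iT\sim\log(1/c)/\min_j I_{ij}(G)$ for every $i$ in which a decision is actually made. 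Then every relevant state has a large expected sample size, and $U_c(F)\ge U_c(G)$ follows for small $c$. This gives (iii)$\Rightarrow$(i), and separately (ii)$\Rightarrow$(iii) as explained next, which completes the equivalence.

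\emph{(i)$\Rightarrow$(iii) and (ii)$\Rightarrow$(iii).} Argue by contrapositive. KL-genericity means that failure of (iii) produces a pair $(i,j)$ with $I_{ij}(F)<I_{ij}(G)$ strictly.

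\begin{itemize}
\item \emph{Stopping order.} Take a $G$-SPRT-type policy that outputs a binary decision "$i$" vs "not $i$", with error probability $\Prob_j(Z=i)=\delta$ and $\Prob_i(Z=i)\ge1-\delta$. Its expected sample size is $\E_i^GT\approx\log(1/\delta)/I_{ij}(G)$ under $i$. Make every other state stop quickly, or bound it by $n$ via a budget with $\log(1/\delta)\asymp n I_{ij}(G)(1-\eta)$. By data processing / Wald's identity, any $F$-policy reproducing this output satisfies
\[
I_{ij}(F)\,\E_i^F N\ge \KL{H_i}{H_j}\approx\log(1/\delta),
\]
which forces $\E_i^FN>n$. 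So $\cP_n(G)\not\subseteq\cP_n(F)$ for arbitrarily large $n$.
\item \emph{Decision order.} Take a decision problem with prior concentrated on $\{i,j\}$ and payoff $-L$ for wrongly asserting $i$ when the state is $j$, with $L$ chosen relative to $c$. Via the Hoeffding/Lorden lower bound, the $F$-value is smaller than the $G$-value by order $c\log(1/c)\bigl(1/I_{ij}(F)-1/I_{ij}(G)\bigr)$. The main obstacle here is that $D$ must be fixed before $c\to0$. The fix is the classical asymptotic: for a fixed two-state problem,
\[
U_c(E;D)=V^*-c\log(1/c)\sum_k\mu_k/I_{k,-k}(E)+O(c).
\]
Choosing states $i,j$ with a three-action problem where only the $i$-vs-$j$ direction matters, using weights to isolate it, then gives the strict reverse inequality for small $c$.

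The delicate technical step is this isolation of a single directed KL rate, since in the binary problem both $I_{ij}$ and $I_{ji}$ enter. I would handle it by a problem in which under state $j$ a safe action is costless-optimal without learning, so that only the $i$-vs-$j$ direction is paid for.
\end{itemize}
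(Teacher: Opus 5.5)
The genuine gap is in (iii)$\Rightarrow$(i), the step you yourself flag as needing ``a further step.'' To get the exact inequality $U_c(F;D)\ge U_c(G;D)$ at a fixed small $c$, you must reproduce $G$-policies whose value is within an arbitrary $\varepsilon$ of $U_c(G;D)$. You must also know that all such policies sample long enough for the linear saving to absorb the square-root remainder, with a threshold that does not depend on $\varepsilon$.

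You assert a version of this (``$\E_i^GT\to\infty$ for some $i$'') without argument, and then switch to Kiefer--Sacks/Lorden-type policies. That switch does not give the conclusion as written. Those policies are only first-order asymptotically Bayes, so their value can fall short of $U_c(G;D)$ by $o(c\log(1/c))$. Reproducing them from $F$ yields $U_c(F;D)\ge U_c(G;D)$ only if you also show that the $F$-saving is of exact order $c\log(1/c)$ and exceeds that shortfall. You do not make this comparison. It would also rest on general multistate, multi-action first-order asymptotics used as a black box. The rate you quote, $\log(1/c)/\min_jI_{ij}(G)$, is not correct in general, because alternatives that share an optimal action with $i$ need not be separated from it.

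The missing idea is the paper's uniform-effort lemma: for every finite $M\in\R_+^S$ there is $c_M>0$ such that, for $c<c_M$, every policy within $c^2$ of $U_c(G;D)$ has $\E_i^GT>M_i$ for \emph{every} $i\in S$. The proof is short.
\begin{itemize}
\item Suppose that along $c_k\downarrow0$ some such policies had $\E_{i_0}^GT_k\le M_{i_0}$.
\item Since $U_c(G;D)\to V^*$, the action laws $p_{i,k}$ must concentrate on $A_i^*$.
\item Data processing and the stopped-likelihood identity give $\KL{p_{i_0,k}}{p_{j,k}}\le I_{i_0j}(G)M_{i_0}$.
\item In the limit, $\supp p_{i_0}\subseteq\supp p_j$ for all $j\in S$, so any $a^*\in\supp p_{i_0}$ lies in $\bigcap_{j\in S}A_j^*=\varnothing$, a contradiction.
\end{itemize}
Take $M_i=(C/(1-\rho_i))^2$. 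The conversion bound then gives $\E_i^FN\le\E_i^GT$ in every support state for every $\varepsilon$-optimal $G$-policy with $\varepsilon\le c^2$, and letting $\varepsilon\to0$ finishes the argument.

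Your worry about losses on low-effort states is milder than you feared. Since $(1-\rho_i)t-C\sqrt t\ge-C^2/(4(1-\rho_i))$ for all $t\ge0$, effort above a suitable threshold in a single support state would already suffice. Even that weaker statement, however, needs the data-processing argument above.

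The other directions are sound.
\begin{itemize}
\item (iii)$\Rightarrow$(ii) is the paper's argument.
\item Your SPRT proof of (ii)$\Rightarrow$(iii) works once you cap the test at time $n$, since in third states the $i$-vs-$j$ score can have zero drift. It is heavier than the paper's route, which reproduces the deterministic target $G^{\otimes m}$ and reads off $mI_{ij}(G)\le I_{ij}(F)\E_i^FN\le mI_{ij}(F)$ directly.
\item In (i)$\Rightarrow$(iii), the ``safe action'' device cannot isolate one direction. In any fixed nontrivial problem with two support states, both $I_{ij}$ and $I_{ji}$ are paid at order $c\log(1/c)$. The prior weighting you also mention does isolate it, exactly as in the paper's $p\uparrow1$ argument.
\end{itemize}
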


The intuition is that directed pairwise KL divergence measures the rate at which an experiment accumulates evidence. When state $i$ is true, each observation from $E$ contributes, on average, $I_{ij}(E)$ units of log-likelihood evidence in favor of $i$ against $j$.
With endogenous stopping, the decision maker can
continue sampling while the accumulated evidence remains inconclusive and stop
once it becomes sufficiently decisive. Thus, to first order, the relevant
object is the rate at which evidence accumulates in each ordered state
direction. Condition~\textup{(iii)} therefore requires $F$ to accumulate
evidence strictly faster than $G$ against every alternative in every true
state.

This intuition drives both orders. If $F$ has strictly larger pairwise KL
divergence in every direction, then any finite-output stopping $G$-policy can,
to first order, be reproduced from $F$ using fewer observations in every
state. The nontrivial step is to turn this first-order evidence advantage into
an exact reproduction result that is uniform over policies and incurs only a
lower-order overhead. This is the technical core of the argument, established
in \Cref{thm:conversion}. Once the expected-sample budget is sufficiently
large, the first-order saving absorbs this overhead, so every terminal
experiment attainable from $G$ is also attainable from $F$ under the same
budget. Similarly, when observations are sufficiently cheap, nearly optimal
policies in nontrivial decision problems sample long enough for the same
first-order advantage to make $F$ weakly more valuable than $G$. Conversely,
the stopping order implies the pairwise KL inequalities by requiring $F$ to
reproduce long fixed samples from $G$, while the decision order recovers them
from binary testing problems that isolate individual pairs of states. We make
these arguments precise below.

\subsection{Proof of the equivalence theorem}\label{subsec:equivalence-proof}

The key input is a uniform exact-conversion theorem. Define
\[
  \rho_i(F\to G)
  :=
  \max_{j\ne i}\frac{I_{ij}(G)}{I_{ij}(F)}.
\]
Thus $\rho_i(F\to G)$ is the first-order conversion factor for reproducing $G$ from $F$ under state $i$. The maximum over $j\ne i$ identifies the alternative against which reproducing the evidence generated by $G$ is most demanding for $F$.

\begin{theorem}[Uniform exact-conversion of stopping policies]\label{thm:conversion}
There exists a constant $C_{F,G}<\infty$ such that every finite-output stopping
$G$-policy $(T,Z)$ has an exact $F$-reproduction $(N,Y)$ satisfying,
simultaneously for every state $i$,
\[
  \E_i^F N
  \le
  \rho_i(F\to G)\E_i^G T
  +C_{F,G}\sqrt{\E_i^G T}.
\]
The constant depends only on $F$ and $G$, not on the policy or its finite
output space.
\end{theorem}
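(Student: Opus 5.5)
The plan is to reproduce the entire stopped $G$-record, not just its output, and then garble it. By Definition~\ref{def:reproduction} it suffices to build an $F$-policy whose terminal experiment Blackwell dominates the one induced by the stopped $G$-history $\omega=(T,X_1,\dots,X_T)$ together with the auxiliary draws used by $Z$. Once that history is reproduced, applying the original output map yields $Y$ with $\Law_i^F(Y)=\Law_i^G(Z)$ in every state.

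Write $\Lambda^G_{ij}(\omega)=\sum_{t\le T}\log\bigl(G_i(X_t)/G_j(X_t)\bigr)$, and define the $F$-analogue $\Lambda^F_{ij,n}$ after $n$ observations. Under state $i$, Wald's identity gives
\[
\E_i^G\Lambda^G_{ij}=I_{ij}(G)\,\E_i^G T
\qquad\text{and}\qquad
\E_i^F\Lambda^F_{ij,N}=I_{ij}(F)\,\E_i^F N .
\]
This is the source of the first-order factor $\rho_i(F\to G)$.

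\textbf{Step 1 (proposal and acceptance).} Draw a candidate $G$-path $\omega$ from the auxiliary randomness under a fixed reference law, say $\Prob_0^G$. Then sample $F$ and accept $\omega$ with a probability $a(\omega,X^F)$. This probability must be chosen so that, for every state $i$, the accepted path has law $\Prob_i^G(\omega)\propto \Prob_0^G(\omega)e^{\Lambda^G_{i0}(\omega)}$. Since $\Prob_i^F$ and $\Prob_0^F$ differ by the factor $e^{\Lambda^F_{i0,n}}$, it is enough to find an $\cF^F$-measurable $[0,1]$-valued statistic $g_\omega$ with
\[
\E_i^F[g_\omega]=\kappa\,e^{\Lambda^G_{i0}(\omega)}\qquad\text{for all } i,
\]
where $\kappa$ is a constant common to all states.

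\textbf{Step 2 (overshoot and garbling).} I would realize $g_\omega$ as follows. Run $F$ until the first time $N(\omega)$ at which the realized $F$-evidence dominates the target evidence in every relevant direction, meaning $\Lambda^F_{ij,n}\ge\Lambda^G_{ij}(\omega)$ for all $j\ne i$ simultaneously for the candidate "true" $i$. Then thin: use a randomized kernel that reduces each log-likelihood coordinate exactly to the target vector. The kernel should be a multistate garbling lemma. If an $F$-record has a likelihood-ratio vector that is coordinatewise more extreme in the correct directions, by at most a bounded overshoot, then it can be garbled to a binary event whose likelihoods across states are exactly proportional to $e^{\Lambda^G_{\cdot 0}(\omega)}$. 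Full support and finiteness of $F$ bound the per-step increments, so overshoots are bounded by a constant depending only on $F$. The acceptance probability is then bounded below by $e^{-O(1)}$. A rejection restarts the procedure, and the geometric number of restarts costs only a constant factor on lower-order terms.

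\textbf{Step 3 (sample-size bound).} The stopping rule "first time all $d$ coordinates $\Lambda^F_{ij,n}$ exceed the given levels $\ell_j=\Lambda^G_{ij}(\omega)$" has expected duration under $\Prob_i^F$ at most
\[
\max_j \ell_j/I_{ij}(F)+C\sqrt{\max_j \ell_j}+C .
\]
The square-root term comes from the expected maximum deviation of the $d$ correlated random walks about their drifts; I would control it through Doob's $L^2$ inequality. Average over $\omega\sim\Prob_i^G$ and use $\E\max_j \ell_j/I_{ij}(F)\le \max_j \E\ell_j/I_{ij}(F)+\E\max_j|\ell_j-\E\ell_j|/I_{ij}(F)$. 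Then bound $\E_i^G|\Lambda^G_{ij}-I_{ij}(G)T|\le C\sqrt{\E_i^G T}$ by Wald's second identity. Jensen's inequality then gives
\[
\E_i^F N\le\rho_i(F\to G)\,\E_i^G T+C_{F,G}\sqrt{\E_i^G T}.
\]
The constants depend only on the bounded log-likelihood increments and the KL values, so they are uniform over policies and output spaces.

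\textbf{Main obstacle.} The hardest part is the exact multistate garbling in Step 2, together with making one construction work simultaneously in every state. The stopping criterion cannot depend on the unknown true state. So I would stop when the $F$-posterior vector reaches the "dominating" region for the posterior induced by $\omega$, that is, when it lies beyond the $G$-posterior along every pairwise direction. I would then verify that the statewise Wald bound holds under each $\Prob_i^F$ separately. My first attempt at the garbling lemma would be a convexity argument on the standard measure: because the overshoot is bounded, show that the target likelihood vector lies in the convex cone generated by the reachable $F$-stopped likelihood vectors.
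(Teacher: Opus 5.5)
Your Steps 1--2 do not work as stated.

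\textbf{Acceptance is not $e^{-O(1)}$.} Because $g_\omega\in[0,1]$, and $\E_i^F g_\omega=\kappa e^{\Lambda^G_{i0}(\omega)}$ must hold with one $\kappa$ for all paths $\omega$ (otherwise the accepted law is not $\Prob_i^G$), you need $\kappa\le\inf_\omega\min_i e^{-\Lambda^G_{i0}(\omega)}$. Summing $\Prob_0^G(\omega)\kappa e^{\Lambda^G_{i0}(\omega)}$ over $\omega$ shows that the per-attempt acceptance probability equals $\kappa$ in every state. So acceptance is exponentially small in the horizon, and zero when $T$ is unbounded. Letting $\kappa$ depend on $i$ only relaxes this to $\kappa_i\le\inf_\omega e^{-\Lambda^G_{i0}(\omega)}$, which is just as small for $i\ne0$. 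There are $1/\kappa$ expected restarts, each running $F$ until its evidence matches $\Lambda^G(\omega)$, so the total cost is exponential.

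\textbf{The cost accounting is wrong.} Your Step~3 averages over $\omega\sim\Prob_i^G$, but proposals come from $\Prob_0^G$. The cost of disposing of proposals that are atypical for the true state is never counted. The stopping rule also refers to the unknown true state.

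\textbf{The overshoot is not bounded.} At the first simultaneous passage, only the last coordinate to cross is within one increment of its level. The others can exceed their levels by amounts linear in $T$, and the coordinates $\Lambda^F_{jk}$ with $j,k\ne i$ are not controlled at all. Excess evidence against $j$ cannot be removed by thinning, since accepting a record with some probability preserves its likelihood ratios across states. So the exact multistate garbling lemma you postulate is the unproved core of the argument, not a routine step.

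\textbf{How the paper avoids exact garbling.} It first classifies the state by maximum likelihood with $O(\log(1/\eta))$ observations and proposes $Y^{\rm p}\sim H_{\widehat K}$, so proposals come from the right law with probability at least $1-\eta$. It then runs a sequential verifier. Success requires $S^F_{kj}\ge\lambda_{kj}(y)+\log(1/\eta)$ for all $j\ne k$, and the test fails as soon as another state reaches its own region. Ville's inequality and the stopped likelihood identity give only one-sided bounds: own-state success is at least $1-d\eta$, and cross-state success is at most $\eta\,h_i(y)/h_k(y)$. Excess evidence is therefore harmless.

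After thinning by a constant, the accepted mass is a sub-experiment $Q_i\le(1-\eta)h_i$ with residual mass $\delta_i\in[\eta,2(d+1)\eta]$. The residual $(h_i-Q_i)/\delta_i$ has likelihood-ratio diameter at most $\Delta(H)+\log(C_d/\eta)$. It is reproduced exactly by a fixed-length classifier followed by the kernel $A_n^{-1}R$ (\Cref{lem:diameter-repair}), at expected cost of order $\eta(1+\Delta(H)+\log(1/\eta))$.

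\textbf{Handling unbounded $T$.} Reproducing the full stopped record is fine for bounded blocks; the paper's dyadic step does exactly that. In one shot, however, the record of an unbounded $T$ has unbounded diameter. The paper therefore applies the construction only to $T\le m$ with $\eta=\eta_F/m$. It handles general policies by pasting reproductions over dyadic blocks of length $2^k$, whose remainders sum to $C\sqrt{\E_i^G T}$.

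\textbf{A smaller slip in Step~3.} You center $\ell_j$ at $\E\ell_j$, but $\E|\ell_j-\E\ell_j|$ contains a term like $I_{ij}(G)\E|T-\E T|$, which can be of order $\E T$. Center pathwise at $I_{ij}(G)T$ instead, and use $\max_j (I_{ij}(G)/I_{ij}(F))\,T\le\rho_i(F\to G)\,T$, as the paper does.
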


The theorem decomposes the expected source sample requirement into a first-order conversion term and a square-root remainder. In particular, the remainder is uniform over target policies and finite output spaces.
If condition~\textup{(iii)} holds, then $\rho_i(F\to G)<1$ for every state.
The resulting linear saving eventually dominates the square-root remainder.
Under a sufficiently large expected-sample budget, this yields exact
reproduction within the same budget; when observations are sufficiently cheap,
it yields weakly lower expected sampling costs for nearly optimal decision
policies. These arguments establish the sufficiency directions for the
stopping and decision orders. The converse directions recover the pairwise KL
inequalities from long fixed samples and pair-isolating binary decision
problems, respectively.

\subsubsection{Pairwise KL divergence and the stopping order}\label{subsec:kl-stopping}

We first show that condition~\textup{(iii)} is equivalent to large-budget
stopping dominance, i.e., \textup{(ii)}$\iff$\textup{(iii)}.

\paragraph{Pairwise KL dominance implies the stopping order.}
Suppose condition~\textup{(iii)} holds, and write
$\rho_i:=\rho_i(F\to G)<1$ and $C:=C_{F,G}$. Define
\[
  n_0
  :=
  \max_{i\in\Theta}
  \max\left\{
    1,
    \left(\frac{C}{1-\rho_i}\right)^2
  \right\}.
\]
Fix $n\ge n_0$ and $H\in\cP_n(G)$. By definition, some finite-output
stopping $G$-policy $(T,Z)$ produces $H$ and satisfies
\[
  t_i:=\E_i^G T\le n
  \qquad\text{for every }i\in\Theta.
\]
By \Cref{thm:conversion}, $(T,Z)$ has an exact $F$-reproduction $(N,Y)$
satisfying
\[
  \E_i^F N
  \le
  \rho_i t_i+C\sqrt{t_i}
  \le
  \rho_i n+C\sqrt n
  \le n
  \qquad\text{for every }i\in\Theta,
\]
where the last inequality follows from the definition of $n_0$. Hence
$H\in\cP_n(F)$. Since $H$ was arbitrary,
\[
  \cP_n(G)\subseteq\cP_n(F)
  \qquad\text{for every }n\ge n_0,
\]
and therefore $F\succeq_{\stopord}G$.

\paragraph{The stopping order implies pairwise KL dominance.}
Suppose $F\succeq_{\stopord}G$, and let $n_0$ be a threshold from
\Cref{def:stopping-order}. Fix an ordered pair $i\ne j$ and an integer
$m\ge n_0$. Consider the deterministic $G$-policy that takes exactly $m$
observations and reports the entire sample. Its terminal experiment is
$G^{\otimes m}$, so
\[
  G^{\otimes m}\in\cP_m(G).
\]
The stopping order therefore yields a finite-output stopping $F$-policy
$(N,Y)$ that exactly reproduces $G^{\otimes m}$ and satisfies
$\E_k^F N\le m$ for every state $k$. For the fixed ordered pair $i\ne j$,
\[
\begin{aligned}
  mI_{ij}(G)
  &=\KL{G_i^{\otimes m}}{G_j^{\otimes m}}\\
  &=\KL{\Law_i^F(Y)}{\Law_j^F(Y)}\\
  &\le
  \KL{\Prob_i^F|_{\cF_N^F}}{\Prob_j^F|_{\cF_N^F}}\\
  &=I_{ij}(F)\E_i^F N\\
  &\le mI_{ij}(F).
\end{aligned}
\]
The inequality follows from data processing because $Y$ is
$\cF_N^F$-measurable, while the stopped likelihood identity in
\Cref{lem:stopped-likelihood}, equivalently Wald's identity, gives the
subsequent equality. Thus $I_{ij}(F)\ge I_{ij}(G)$.\footnote{This necessity argument is closely related to classical results on
comparison of experiments. \citet{GoelDeGroot1979} show that Blackwell
sufficiency implies corresponding KL-information inequalities. More
specifically for sequential experiments, \citet{GreenshteinTorgersen1997}
consider two stopping rules applied to a common i.i.d.\ source and show that
sufficiency of one stopped experiment for the other requires a weakly larger
expected sample size in every state. Our argument applies the same
information-accounting principle to the present setting.}
Since $i\ne j$ was arbitrary, all pairwise KL inequalities hold weakly.
KL-genericity rules out equality for every ordered pair, so all inequalities
are strict. This proves \textup{(ii)}$\iff$\textup{(iii)}.

\subsubsection{Pairwise KL divergence and the decision order}\label{subsec:kl-decision}

Next, we turn to the demand side and prove
\textup{(i)}$\iff$\textup{(iii)}.

\paragraph{Pairwise KL dominance implies the decision order.}
Suppose condition~\textup{(iii)} holds. Write
$\rho_i:=\rho_i(F\to G)<1$ and $C:=C_{F,G}$. Fix a finite decision problem
$D=(\mu,A,u)$, let $S:=\supp\mu$, and define
\[
  A_i^*:=\argmax_{a\in A}u(a,i).
\]

If $\bigcap_{i\in S}A_i^*\ne\varnothing$, the decision maker can choose an
action that is optimal in every state in the support without sampling. Hence
both experiments attain the full-information payoff for every $c>0$. Suppose
instead that $\bigcap_{i\in S}A_i^*=\varnothing.$
For each $i\in S$, define
\[
  M_i
  :=
  \max\left\{
    1,
    \left(\frac{C}{1-\rho_i}\right)^2
  \right\}.
\]

To apply the exact-conversion theorem, we need nearly optimal policies to sample long
enough that the linear saving $(1-\rho_i)\E_i^G T$ dominates the square-root
remainder. The following lemma, proved in the appendix, establishes this
uniformly across all states in the support of the prior.

\begin{lemma}[Uniform effort in nontrivial decision problems]\label{lem:uniform-effort}
Let $E$ be finite, full support, and pairwise identified. Suppose
$\bigcap_{i\in S}A_i^*=\varnothing$. For every finite vector
$M\in\R_+^S$, there exists $c_M>0$ such that, whenever $0<c<c_M$, every
stopping $E$-policy $(T,a_T)$ whose value is within $c^2$ of $U_c(E;D)$
satisfies
\[
  \E_i^E T>M_i
  \qquad\text{for every }i\in S.
\]
\end{lemma}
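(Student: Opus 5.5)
The argument compares a lower bound on $U_c(E;D)$ with the value of any policy that stops early in some state. Let $V^*:=\sum_{i\in S}\mu_i\max_a u(a,i)$ be the full-information payoff, and let $\bar u:=\max_{a,i}|u(a,i)|$.

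\emph{Step 1 (lower bound on the optimal value).} I would show that $U_c(E;D)\ge V^*-K c\log(1/c)$ for some constant $K$ depending only on $E$ and $D$. To do so, take a multihypothesis sequential probability ratio test that stops once the log-likelihood ratio of some state against every other exceeds $\log(1/c)$, and then chooses an action in $A_k^*$ for the declared state $k$. Standard Wald arguments give the two ingredients:
\begin{itemize}
\item The expected sample size in each state is $O(\log(1/c))$, because the increments are bounded and have positive drift $I_{ij}(E)>0$.
\item The error probabilities are $O(c)$, by the usual change-of-measure bound.
\end{itemize}
Hence the total loss relative to $V^*$ is $O(c\log(1/c))$.

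\emph{Step 2 (early stopping forces a fixed loss).} Fix $i_0\in S$ and suppose a policy satisfies $\E_{i_0}^E T\le M_{i_0}$. I claim its expected payoff, before costs, is at most $V^*-\eta$, where $\eta>0$ depends only on $M$, $E$ and $D$.

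Since $\bigcap_{i\in S}A_i^*=\varnothing$, every action $a$ misses $A_j^*$ for some $j\in S$. This gives a gap $\delta:=\min_a\max_{j\in S}\bigl(\max_b u(b,j)-u(a,j)\bigr)>0$.

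By Markov's inequality, $\Prob_{i_0}(T\le 2M_{i_0})\ge 1/2$. On the event $\{T\le m\}$ with $m:=\lfloor 2M_{i_0}\rfloor$, the likelihood ratio $d\Prob_j/d\Prob_{i_0}$ restricted to $\cF_T$ is at least $\kappa^{m}$. Here $\kappa:=\min_{x,k,l}E_k(x)/E_l(x)>0$, which is positive by full support. Auxiliary randomization is state-independent, so it does not affect this ratio.

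Partition $\{T\le m\}$ according to the chosen action $a$. For each $a$, pick the state $j(a)\in S$ at which $a$ loses at least $\delta$. The piece assigned to $j(a)$ has $\Prob_{j(a)}$-probability at least $\kappa^m$ times its $\Prob_{i_0}$-probability. There are at most $|S|$ such states, so some $j$ receives $\Prob_{i_0}$-mass at least $1/(2|S|)$.

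The payoff shortfall is therefore at least
\[
\eta:=\min_{j\in S}\mu_j\,\delta\,\kappa^{m}/(2|S|)>0.
\]
Costs only lower the value further, so the policy's net value is at most $V^*-\eta$.

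\emph{Step 3 (conclusion).} Choose $c_M$ so small that $Kc\log(1/c)+c^2<\eta$ for all $0<c<c_M$, taking the minimum of $\eta$ over the finitely many $i_0\in S$.

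Now take $0<c<c_M$ and a policy whose value is within $c^2$ of $U_c(E;D)$. By Step 1, its value is at least $V^*-Kc\log(1/c)-c^2$, which is strictly greater than $V^*-\eta$. Step 2 then rules out $\E_i^E T\le M_i$ for every $i\in S$, which is exactly the conclusion of the lemma.

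The main obstacle is Step 1's uniform $O(c\log(1/c))$ bound for multiple hypotheses, since every pair must be separated simultaneously. A crude alternative suffices, however. Any lower bound of the form $U_c\ge V^*-o(1)$ as $c\to0$ is enough; for example, a fixed sample size $n(c)\to\infty$ with $cn(c)\to0$, followed by a maximum-likelihood decision, whose error probability decays exponentially in $n$ because the experiment is pairwise identified. I would use this simpler route and avoid the SPRT analysis entirely.
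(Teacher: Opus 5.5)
Your proof is correct. Your Step~1, in the crude form you commit to (a fixed sample $n(c)$ with $c\,n(c)\to0$, then a maximum-likelihood decision), is the paper's first step, which shows $U_c(E;D)\to V^*$. In Step~3 you should replace $Kc\log(1/c)$, left over from the abandoned SPRT route, by the deficit $V^*-U_c(E;D)$. This deficit tends to zero as a genuine limit $c\downarrow0$, so a single threshold $c_M$ exists.

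You differ from the paper in Step~2. The paper argues by contradiction:
\begin{enumerate}
\item It takes $c_k\downarrow0$ and $c_k^2$-optimal policies with $\E_{i_0}^E T_k\le M_{i_0}$.
\item From vanishing regret it infers that each action law $p_{i,k}$ concentrates on $A_i^*$.
\item It bounds $\KL{p_{i_0,k}}{p_{j,k}}\le I_{i_0j}(E)M_{i_0}$ by data processing and the stopped likelihood identity.
\item It passes to subsequential limits and uses lower semicontinuity of KL to get $\supp p_{i_0}\subseteq\supp p_j$.
\item Any $a^*\in\supp p_{i_0}$ is then optimal in every state of $S$, a contradiction.
\end{enumerate}

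You instead give a direct, quantitative bound. Markov's inequality puts $\Prob_{i_0}$-mass at least $1/2$ on $\{T\le\lfloor 2M_{i_0}\rfloor\}$. On that event the stopped likelihood ratio $d\Prob_j/d\Prob_{i_0}$ is pointwise at least $\kappa^m$. Pigeonholing over $a\mapsto j(a)$ then yields an explicit, $c$-independent payoff shortfall $\eta=\min_{j\in S}\mu_j\,\delta\,\kappa^m/(2|S|)$ for every policy, optimal or not, with $\E_{i_0}^E T\le M_{i_0}$.

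What your route buys: no subsequence extraction, no compactness or semicontinuity, and an explicit $c_M$ once any rate for $U_c\to V^*$ is supplied. What it costs: it relies on the pointwise lower bound $\kappa>0$ on one-step likelihood ratios (available here by finiteness and full support). Its gap is also exponentially small in $M_{i_0}$, which is harmless for this lemma. The paper's argument uses only the expected-evidence identity $\KL{\Prob_{i_0}^E|_{\cF_T}}{\Prob_j^E|_{\cF_T}}=I_{i_0j}(E)\E_{i_0}^E T$. It therefore fits the KL accounting used throughout the paper and would survive in settings where likelihood ratios are not bounded below but KL divergences remain finite.
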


Apply \Cref{lem:uniform-effort} to $E=G$ and the vector $(M_i)_{i\in S}$.
There exists $\bar c_D>0$ such that, whenever $0<c<\bar c_D$, every
$c^2$-optimal stopping $G$-policy satisfies
\[
  \E_i^G T>M_i
  \qquad\text{for every }i\in S.
\]
Fix such a $c$. For each $m$, let
\[
  \varepsilon_m:=\min\{c^2,1/m\},
\]
and choose a stopping $G$-policy $(T_m,a_m)$ with value at least
$U_c(G;D)-\varepsilon_m$. Writing
\[
  t_{i,m}:=\E_i^G T_m,
\]
we have $\varepsilon_m\le c^2$, so \Cref{lem:uniform-effort} implies
$t_{i,m}>M_i$ for every $i\in S$.

By \Cref{thm:conversion}, $(T_m,a_m)$ has an exact $F$-reproduction
$(N_m,a'_m)$ satisfying
\[
  \E_i^F N_m
  \le
  \rho_i t_{i,m}+C\sqrt{t_{i,m}}
  \le
  t_{i,m}
  \qquad\text{for every }i\in S,
\]
where the last inequality follows from the definition of $M_i$. Exact
reproduction preserves the distribution of the terminal action in every state,
while expected sample size is weakly lower in every state in the support of
the prior. Therefore
\[
\begin{aligned}
  U_c(F;D)
  &\ge
  \sum_{i\in S}\mu_i
  \E_i^F\bigl[u(a'_m,i)-cN_m\bigr]\\
  &\ge
  \sum_{i\in S}\mu_i
  \E_i^G\bigl[u(a_m,i)-cT_m\bigr]\\
  &\ge
  U_c(G;D)-\varepsilon_m.
\end{aligned}
\]
Letting $m\to\infty$ yields
\[
  U_c(F;D)\ge U_c(G;D).
\]
Since $D$ was arbitrary, $F\succeq_{\scord}G$.

\paragraph{The decision order implies pairwise KL dominance.}
Suppose $F\succeq_{\scord}G$. To recover the pairwise KL inequalities, fix
distinct states $i,j$ and $p\in(0,1)$, and consider the binary testing problem
$D_{ij}^p$. The prior assigns probability $p$ to state $i$, probability
$1-p$ to state $j$, and zero probability to all other states. There are two
actions, $a_i$ and $a_j$, each yielding payoff one when it correctly identifies
the state and zero otherwise in states $i$ and $j$; their payoffs in all other
states may be set equal, say to one. Define the corresponding Bayes risk by
\[
  R_c(E;D_{ij}^p):=1-U_c(E;D_{ij}^p).
\]

The relevant asymptotic is closely related to the classical theory of binary
sequential testing. \citet{WaldWolfowitz1948} establish the expected-sample
optimality of the sequential probability-ratio test for two simple hypotheses,
while \citet{KieferSacks1963} develop asymptotically Bayes-optimal sequential
procedures for small sampling costs. We state the asymptotic in the form needed
here and prove it in the appendix for completeness.

\begin{lemma}[Pair-isolating low-cost tests]\label{lem:pair-testing}
For every finite, full-support experiment $E$ that identifies $i$ and $j$,
\begin{equation}\label{eq:pair-isolation}
    \lim_{c\downarrow0}
  \frac{R_c(E;D_{ij}^p)}{c\log(1/c)}
  =
  \frac{p}{I_{ij}(E)}+\frac{1-p}{I_{ji}(E)}.
\end{equation}
\end{lemma}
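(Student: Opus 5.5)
We prove \eqref{eq:pair-isolation} by matching an upper bound, obtained from an explicit SPRT, with a lower bound, obtained from the stopped-likelihood identity and a change of measure. Only states $i$ and $j$ matter. The other states carry zero prior weight, and both actions pay the same there, so $D_{ij}^p$ reduces to a binary Bayesian testing problem. In it the risk of a policy $(T,a_T)$ is
\[
  p\,\Prob_i(a_T=a_j)+(1-p)\,\Prob_j(a_T=a_i)+c\bigl(p\E_i T+(1-p)\E_j T\bigr).
\]
Write $L_n=\sum_{k\le n}\log\bigl(E_i(X_k)/E_j(X_k)\bigr)$ for the log-likelihood ratio. Its increments are bounded because $E$ is finite with full support.

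\textbf{Upper bound.} We use the SPRT that stops the first time $L_n\ge b$ or $L_n\le -a$, with thresholds $a=b=\log(1/c)$, and takes the corresponding action. Wald's identity (\Cref{lem:stopped-likelihood}) gives $I_{ij}(E)\E_i T=\E_i L_T\le b+K$, where $K$ bounds the overshoot. The same argument with $L_T\le -a$ gives $I_{ji}(E)\E_j T\le a+K$. The standard change-of-measure bound gives error probabilities $\Prob_j(L_T\ge b)\le e^{-b}=c$ and $\Prob_i(L_T\le -a)\le c$. So the risk is at most
\[
  c\log(1/c)\left(\frac{p}{I_{ij}}+\frac{1-p}{I_{ji}}\right)+O(c),
\]
which proves the upper half of the limit.

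\textbf{Lower bound.} Take any policy with risk $R\le C'c\log(1/c)$; otherwise there is nothing to show. Write $\alpha=\Prob_i(a_T=a_j)$ and $\beta=\Prob_j(a_T=a_i)$. Then $p\alpha+(1-p)\beta\le R$, so $\alpha$ and $\beta$ are $O(c\log(1/c))$. Apply data processing to the binary output $a_T$, which is $\cF_T$-measurable, together with \Cref{lem:stopped-likelihood}:
\[
  I_{ij}(E)\E_i T\ \ge\ d(1-\alpha\,\|\,\beta)\ \ge\ (1-\alpha)\log\frac1\beta-\log 2 .
\]
Symmetrically,
\[
  I_{ji}(E)\E_j T\ \ge\ (1-\beta)\log\frac1\alpha-\log2 .
\]
Since $\alpha,\beta\le c^{1-o(1)}$, we have $\log(1/\beta)\ge(1-o(1))\log(1/c)$, and likewise for $\alpha$. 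Therefore
\[
  c\bigl(p\E_iT+(1-p)\E_jT\bigr)\ \ge\ (1-o(1))\,c\log(1/c)\left(\frac{p}{I_{ij}}+\frac{1-p}{I_{ji}}\right).
\]

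\textbf{Main obstacle.} The delicate step is justifying $\alpha,\beta\le c^{1-o(1)}$. A priori, a near-optimal policy might accept an error of order $c\log(1/c)$ in one direction, and that would not force $\log(1/\beta)\approx\log(1/c)$. We handle this by optimizing jointly. The risk is bounded below by $\min$ over $\alpha,\beta$ of
\[
  p\alpha+(1-p)\beta+\frac{c\,p}{I_{ij}}\log\frac1\beta+\frac{c(1-p)}{I_{ji}}\log\frac1\alpha-O(c).
\]
Elementary calculus shows this minimum equals $c\log(1/c)\bigl(p/I_{ij}+(1-p)/I_{ji}\bigr)+O(c)$. Concretely, if $\beta\ge c^{1-\delta}$, the linear error term $(1-p)\beta$ already exceeds the target for small $c$; otherwise $\log(1/\beta)\ge(1-\delta)\log(1/c)$. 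Letting $\delta\downarrow0$ gives the liminf bound, and together with the SPRT upper bound this proves \eqref{eq:pair-isolation}.
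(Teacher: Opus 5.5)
Your proposal is correct and follows the same route as the paper. The upper bound comes from an SPRT with thresholds $\pm\log(1/c)$, analyzed by change of measure and by Wald's identity with bounded overshoot. The lower bound comes from data processing to the terminal action together with $d_2(1-\alpha\Vert\beta)\ge(1-\alpha)\log(1/\beta)-\log 2$.

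Your ``main obstacle'' is not actually one. The bound $\beta=O(c\log(1/c))$ already gives $\log(1/\beta)\ge\log(1/c)-\log\log(1/c)-O(1)=(1-o(1))\log(1/c)$, which is exactly how the paper concludes; your case split on $\beta\ge c^{1-\delta}$ is valid but redundant.

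You do skip one check that the paper makes explicitly: the SPRT must satisfy $\E_k T<\infty$ in every state $k\in\Theta$, including zero-prior states. \Cref{def:policy} requires this for admissibility, and Wald's identity needs it before you invoke it. It holds because the log-likelihood-ratio increment takes both signs with positive probability under every state, so a run of extreme signals forces exit and the exit time has geometric tails.
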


Applying the decision order to $D_{ij}^p$ gives, for all sufficiently small
$c$,
\[
  R_c(F;D_{ij}^p)\le R_c(G;D_{ij}^p).
\]
Dividing by $c\log(1/c)$, letting $c\downarrow0$, and applying
\Cref{lem:pair-testing} yields
\[
  \frac{p}{I_{ij}(F)}+\frac{1-p}{I_{ji}(F)}
  \le
  \frac{p}{I_{ij}(G)}+\frac{1-p}{I_{ji}(G)}.
\]
Since this inequality holds for every $p\in(0,1)$, letting $p\uparrow1$ gives
\[
  I_{ij}(F)\ge I_{ij}(G).
\]
Because the ordered pair $i\ne j$ was arbitrary, all directed pairwise KL
inequalities hold weakly. KL-genericity rules out equality for every ordered
pair, so all inequalities are strict and condition~\textup{(iii)} follows.
This proves \textup{(i)}$\iff$\textup{(iii)}.

\section{Discussion}\label{sec:discussion}

We compare our two orders with the fixed-sample orders of
\citet{MoscariniSmith2002} and \citet{MuPomattoStrackTamuz2021}. For finite
experiments $F$ and $G$, write $F\succeq_B G$ for Blackwell dominance.
Following \citet{MuPomattoStrackTamuz2021}, write
$F\succeq_{\mathrm{MPST}}G$ if
\[
  F^{\otimes n}\succeq_B G^{\otimes n}
\]
for every sufficiently large $n$. Write $F\succeq_{\mathrm{MS}}G$ for the
Moscarini--Smith order, which requires that, in every finite-action Bayesian
decision problem, $F^{\otimes n}$ yield weakly higher value than
$G^{\otimes n}$ for every sufficiently large $n$, with the threshold allowed
to depend on the decision problem.

For KL-generic pairs, the implication structure is
\[
\begin{gathered}
\text{Blackwell dominance}\\
\Downarrow\ \text{(strict)}\\
\text{Blackwell dominance in large samples (MPST)}\\[-1mm]
\begin{array}{c@{\qquad\qquad}c}
\Downarrow\ \text{(strict)}
&
\Downarrow\ \text{(strict)}
\\[1mm]
\text{stopping order}
&
\text{Moscarini--Smith order}.
\end{array}
\end{gathered}
\]
Every displayed implication is strict, and neither of the two bottom orders
contains the other.

\subsection{Moscarini--Smith: decision value with a fixed sample}

The Moscarini--Smith order and our decision order both allow the asymptotic threshold
to depend on the decision problem. Their difference is therefore the sampling
technology. Moscarini and Smith impose a deterministic horizon $n$ on every
history, whereas we charge a constant cost per observation and allow the
decision maker to stop endogenously.

This distinction changes the statistical quantity governing the comparison.
With a deterministic sample size, asymptotic decision loss is driven by
decision-error probabilities. In the binary case, away from ties, the
Moscarini--Smith order is therefore characterized by Chernoff information
\citep{MoscariniSmith2002,Chernoff1952}. Under endogenous stopping, what matters instead is the
expected sample size needed to accumulate evidence. When state $i$ is true, one
observation from experiment $E$ contributes mean log evidence
$I_{ij}(E)$ against state $j$. Since different priors and payoff functions can
make different ordered pairs relevant, our decision order depends on the full
array
\[
  \bigl(I_{ij}(E)\bigr)_{i\ne j}.
\]
Moreover, \Cref{thm:generic-equivalence} shows that the same array governs exact
information production under stopping. This requirement is absent from the
Moscarini--Smith order, which compares only downstream decision values.

\subsection{MPST: Blackwell dominance with a fixed sample}

The MPST order strengthens the fixed-sample comparison by requiring
$F^{\otimes n}$ to Blackwell dominate $G^{\otimes n}$ for every sufficiently
large deterministic $n$. Hence the same garbling must work for all downstream
decision problems at each sufficiently large sample size, which immediately
implies the Moscarini--Smith order.

For generic binary experiments, \citet{MuPomattoStrackTamuz2021} characterize
this order by dominance of the full R\'enyi spectrum in both state directions
\citep{Renyi1961}. A deterministic horizon constrains the entire
likelihood-ratio distribution, including rare realizations, and therefore
requires more than control of its mean log increment.

Our stopping order relaxes precisely this horizon constraint. If $N$ is a
stopping time based on repeated observations from $F$, Wald's identity gives,
for every ordered pair $i\ne j$,
\[
  \E_i\!\left[
    \sum_{t=1}^{N}\log\frac{F_i(X_t)}{F_j(X_t)}
  \right]
  =
  I_{ij}(F)\E_iN.
\]
Thus expected evidence is linear in expected sample size. The reproducing
policy may spend more observations on histories requiring unusually large
likelihood ratios and fewer on histories that terminate early, subject only to
the common expected-sample budget in each state. The exact-conversion theorem shows
that, to first order, this flexibility reduces the exact-production
constraints to the directed pairwise KL divergences. More generally, for any KL-generic pair of finite, full-support, pairwise identified experiments, the MPST order implies our
stopping order. Indeed, data processing and product additivity imply $I_{ij}(F)\ge I_{ij}(G)$ for every ordered pair $i\ne j$; KL-genericity makes every inequality strict, and \Cref{thm:generic-equivalence} then gives $F\succeq_{\stopord}G$.

\section{Conclusion}\label{sec:conclusion}

Blackwell's comparison treats an experiment as an information object delivered
once. When an experiment is instead a repeatable source from which observations
can be acquired sequentially, the relevant comparison changes. Our results show
that, generically, such sources nevertheless admit a simple and robust ranking.
One experiment yields at least as much value in every decision problem once
observations are sufficiently cheap if and only if, under sufficiently large
expected-sample budgets, it can reproduce every terminal experiment attainable from the other source. Both comparisons are characterized by the same
condition: the experiment has a strictly larger directed pairwise KL divergence for every ordered pair
of states. 

The equivalence gives directed pairwise KL divergences an operational interpretation that goes beyond their familiar role in sequential testing. The quantities $I_{ij}(E)$ are the rates at which a source produces evidence for state $i$ against state $j$, and $\rho_i(F\to G)$ is the resulting first-order conversion factor under
state $i$. Our exact-conversion theorem shows that these rates are sufficient not only to reproduce a particular test or decision rule, but to reproduce exactly the terminal experiment generated by any finite-output stopping policy, with an overhead that is lower order and uniform over target policies.

This characterization also clarifies why the way information is acquired is part of the comparison. A deterministic horizon forces the same number of observations to be collected on every history, leaving the tails of the likelihood-ratio distribution relevant and leading to criteria based on Chernoff information or the R\'enyi spectrum. Endogenous stopping permits sampling effort to be shifted across histories: easy histories can terminate early while difficult histories receive more observations. Under an expected-sample budget, what survives to first order is therefore the mean rate of evidence accumulation. In a multistate problem this rate is inherently directional, so there is generally no single scalar measure of information quality; each ordered pair of states represents a potentially binding informational bottleneck.

The boundaries of our characterization point to questions where richer features of experiments should reappear. When $I_{ij}(F)=I_{ij}(G)$ for some ordered pair $i\ne j$, the first-order KL comparison no longer determines the ranking, and lower-order features of the likelihood process may matter. A second-order theory of sequential experiment comparison would therefore describe what replaces strict KL dominance on these boundaries. Likewise, restrictions between deterministic sampling and unrestricted stopping---for example, hard horizon constraints, restrictions on stopping times, or costs that depend on the sampling history---may generate intermediate notions of informativeness. Allowing the decision maker to choose adaptively among multiple sources would add a further allocation problem: information could be directed toward whichever state distinctions are currently most valuable.




\bibliographystyle{ecta}
\bibliography{KL}
\newpage
\appendix
\section{Omitted proofs from \texorpdfstring{\Cref{sec:results}}{Section 3}}
\label{app:uniform-effort}

\begin{proof}[Proof of \Cref{lem:uniform-effort}]
Fix a finite decision problem $D=(\mu,A,u)$, put
$S:=\supp\mu$, and write
\[
  V^*:=\sum_{i\in S}\mu_i v_i,
  \qquad
  v_i:=\max_{a\in A}u(a,i).
\]

We first show that $U_c(E;D)\to V^*$ as $c\downarrow0$. The upper
bound $U_c(E;D)\le V^*$ is immediate. For the reverse bound, choose
integers $r(c)\to\infty$ such that $c r(c)\to0$. After $r(c)$
observations, select a maximum-likelihood state, breaking ties by a fixed
rule, and then choose an action optimal in that state. Because the experiment is
finite and identifies every pair of states, every average log-likelihood ratio
converges almost surely to its positive mean under the true state, so the
selected state is eventually correct with probability tending to one. The
expected terminal payoff therefore converges to $V^*$, while the sampling cost
$c r(c)$ vanishes. Hence
\begin{equation}
  \lim_{c\downarrow0}U_c(E;D)=V^*.
  \label{eq:effort-full-information}
\end{equation}

Suppose, contrary to the lemma, that the stated conclusion fails for some
finite vector $(M_i)_{i\in S}$. Then there are costs $c_k\downarrow0$
and $c_k^2$-optimal policies $(T_k,a_k)$ such that, for every $k$,
\[
  \E_{i_k}^E T_k\le M_{i_k}
\]
for at least one state $i_k\in S$. Since $S$ is finite, pass to a
subsequence along which $i_k=i_0$ is constant. Let
$p_{i,k}\in\Delta(A)$ denote the distribution of $a_k$ under state
$i$.

The full-information regret of the policy is
\[
  V^*-
  \sum_{i\in S}\mu_i
  \E_i^E\bigl[u(a_k,i)-c_kT_k\bigr]
  =
  \sum_{i\in S}\mu_i
  \left[
    v_i-\sum_{a\in A}p_{i,k}(a)u(a,i)
    +c_k\E_i^ET_k
  \right].
\]
Every term in brackets is nonnegative. By
\eqref{eq:effort-full-information} and $c_k^2$-optimality,
the left-hand side converges to zero. Since $\mu_i>0$ for every
$i\in S$, it follows that
\begin{equation}
  v_i-\sum_{a\in A}p_{i,k}(a)u(a,i)\longrightarrow0
  \qquad(i\in S).
  \label{eq:effort-payoff-concentration}
\end{equation}
For each $i\in S$, \eqref{eq:effort-payoff-concentration} implies
\begin{equation}
  p_{i,k}(A_i^*)\longrightarrow1.
  \label{eq:effort-action-concentration}
\end{equation}
Indeed, if $A_i^*\ne A$, the positive payoff gap
\[
  \delta_i:=\min_{a\notin A_i^*}\bigl(v_i-u(a,i)\bigr)
\]
gives
$v_i-\sum_a p_{i,k}(a)u(a,i)
  \ge\delta_i p_{i,k}(A\setminus A_i^*)$; if $A_i^*=A$,
\eqref{eq:effort-action-concentration} is
automatic.

Now fix $j\in S\setminus\{i_0\}$. Applying data processing to the terminal action and
then the stopped likelihood identity in \Cref{lem:stopped-likelihood} gives
\begin{equation}
  \KL{p_{i_0,k}}{p_{j,k}}
  \le
  \KL{
    \Prob_{i_0}^E|_{\cF_{T_k}^E}
  }{
    \Prob_j^E|_{\cF_{T_k}^E}
  }
  =I_{i_0j}(E)\E_{i_0}^ET_k
  \le I_{i_0j}(E)M_{i_0}.
  \label{eq:effort-data-processing}
\end{equation}
Because $A$ and $S$ are finite, pass to a further subsequence such that
$p_{i,k}\to p_i$ for every $i\in S$. Lower semicontinuity of KL divergence and \eqref{eq:effort-data-processing} imply
$\KL{p_{i_0}}{p_j}<\infty$, and hence
\begin{equation}
  \supp p_{i_0}\subseteq\supp p_j
  \qquad(j\in S\setminus\{i_0\}).
  \label{eq:effort-support-inclusion}
\end{equation}
Choose any $a^*\in\supp p_{i_0}$. By
\eqref{eq:effort-action-concentration}, $a^*\in A_{i_0}^*$. Equation
\eqref{eq:effort-support-inclusion} gives $p_j(a^*)>0$ for every other
$j\in S$, which holds trivially at $j=i_0$ as well, and a second application of
\eqref{eq:effort-action-concentration} yields $a^*\in A_j^*$ for every
$j\in S$.
Thus $a^*\in\bigcap_{j\in S}A_j^*$, contradicting the assumed emptiness
of that intersection. The contradiction proves the lemma.
\end{proof}

Next, we prove \Cref{lem:pair-testing}. Fix distinct states
$i,j$, a weight $p\in(0,1)$, and a finite, full-support experiment
$E$ that identifies $i$ and $j$. Write
\[
  I_{ij}:=I_{ij}(E),
  \qquad
  I_{ji}:=I_{ji}(E),
  \qquad
  h_c:=\log\frac1c.
\]
Thus
\[
  h_c\longrightarrow\infty,
  \qquad
  ch_c\longrightarrow0
  \qquad\text{as }c\downarrow0.
\]

\begin{proof}[Proof of \Cref{lem:pair-testing}]
Define the one-period log-likelihood-ratio increment and its partial sums by
\[
  Y_t
  :=
  \log\frac{E_i(X_t)}{E_j(X_t)},
  \qquad
  S_n:=\sum_{t=1}^nY_t,
  \qquad
  S_0:=0.
\]
Because $E$ is finite and has full support, there is a finite constant
\[
  B
  :=
  \max_x
  \left|
    \log\frac{E_i(x)}{E_j(x)}
  \right|
  <\infty.
\]
Moreover,
\[
  \E_i^E Y_t=I_{ij}>0,
  \qquad
  \E_j^E Y_t=-I_{ji}<0.
\]
We prove matching upper and lower bounds for $R_c(E;D_{ij}^p)$.

\paragraph{Upper bound.}

Consider the likelihood-ratio stopping time
\[
  \tau_c
  :=
  \inf\{n\ge0:S_n\notin(-h_c,h_c)\}.
\]
At the upper boundary choose $a_i$, and at the lower boundary choose
$a_j$.

We first verify that this policy is integrable under every state, including
states outside $\{i,j\}$, which have zero prior probability in
$D_{ij}^p$. Since $E_i\ne E_j$, the random variable
\[
  Y(x):=\log\frac{E_i(x)}{E_j(x)}
\]
takes at least one strictly positive and one strictly negative value. Indeed,
if $E_i(x)/E_j(x)\ge1$ for every $x$, with strict inequality for at
least one $x$, then $\sum_xE_i(x)>\sum_xE_j(x)$, a contradiction; the
case in which all ratios are at most one is symmetric. Choose signals
$x_+,x_-$ and constants $a_+,a_->0$ such that
\[
  Y(x_+)\ge a_+>0,
  \qquad
  Y(x_-)\le-a_-<0.
\]
Fix $c>0$, and choose an integer $m_c$ such that
\[
  m_ca_+>2h_c,
  \qquad
  m_ca_->2h_c.
\]
Starting from any point in $(-h_c,h_c)$, $m_c$ consecutive realizations
of $x_+$ force an upper exit, while $m_c$ consecutive realizations of
$x_-$ force a lower exit. By full support, for every state
$k\in\Theta$,
\[
  q_{k,c}
  :=
  \min\{E_k(x_+)^{m_c},E_k(x_-)^{m_c}\}
  >0.
\]
Hence, conditional on any history that has not yet exited,
\[
  \Prob_k^E(\tau_c\le n+m_c\mid\cF_n,\tau_c>n)
  \ge q_{k,c}.
\]
Iterating this bound over blocks of length $m_c$ gives
\[
  \Prob_k^E(\tau_c>rm_c)
  \le(1-q_{k,c})^r,
  \qquad r=0,1,2,\ldots.
\]
Therefore
\begin{equation}
  \E_k^E\tau_c
  \le
  m_c\sum_{r\ge0}(1-q_{k,c})^r
  =
  \frac{m_c}{q_{k,c}}
  <\infty
  \qquad\text{for every }k\in\Theta.
  \label{eq:pair-exit-integrable}
\end{equation}
Thus the test is an integrable stopping policy.

We next bound its error probabilities. At deterministic time $n$, the
likelihood ratio of the observed history under state $i$ relative to state
$j$ is $e^{S_n}$. If state-independent auxiliary randomization is used,
it has the same law in both states and therefore does not alter this ratio.
Let
\[
  A_c^-:=\{S_{\tau_c}\le-h_c\}.
\]
Since $A_c^-\cap\{\tau_c=n\}\in\cF_n$,
\begin{align*}
  \Prob_i^E(A_c^-)
  &=
  \sum_{n\ge0}
  \E_j^E\!\left[
    e^{S_n}\one_{A_c^-\cap\{\tau_c=n\}}
  \right]\\
  &\le
  e^{-h_c}
  =c.
\end{align*}
Similarly,
\[
  \Prob_j^E(S_{\tau_c}\ge h_c)\le c.
\]

Because $|Y_t|\le B$, the overshoot at either boundary is at most $B$:
\[
  h_c\le S_{\tau_c}\le h_c+B
  \quad\text{on an upper exit},
\]
and
\[
  -h_c-B\le S_{\tau_c}\le-h_c
  \quad\text{on a lower exit}.
\]
Under state $i$, the lower-exit probability is at most $c$, so
\begin{equation}
  \E_i^E S_{\tau_c}=h_c+O(1).
  \label{eq:pair-expected-score-i}
\end{equation}
Indeed, the contribution of the wrong-boundary event is
$O(ch_c)=o(1)$. Symmetrically,
\[
  \E_j^E[-S_{\tau_c}]=h_c+O(1).
\]

For completeness, the required form of Wald's identity follows directly in
this setting. By \eqref{eq:pair-exit-integrable} and $|Y_t|\le B$,
\[
  \sum_{t\ge1}
  \E_i^E\!\left[
    |Y_t|\one_{\{\tau_c\ge t\}}
  \right]
  \le
  B\E_i^E\tau_c
  <\infty.
\]
Hence expectation and summation may be interchanged. Since
$\{\tau_c\ge t\}$ is determined before $X_t$ is observed,
\begin{align*}
  \E_i^E S_{\tau_c}
  &=
  \sum_{t\ge1}
  \E_i^E\!\left[
    Y_t\one_{\{\tau_c\ge t\}}
  \right]\\
  &=
  I_{ij}\sum_{t\ge1}\Prob_i^E(\tau_c\ge t)\\
  &=
  I_{ij}\E_i^E\tau_c.
\end{align*}
Combining this identity with \eqref{eq:pair-expected-score-i} gives
\[
  \E_i^E\tau_c
  =
  \frac{h_c}{I_{ij}}+O(1).
\]
The symmetric calculation gives
\[
  \E_j^E\tau_c
  =
  \frac{h_c}{I_{ji}}+O(1).
\]

The risk of this policy in $D_{ij}^p$ is therefore
\begin{align*}
  R_c(E;D_{ij}^p)
  &\le
  p\,\Prob_i^E(S_{\tau_c}\le-h_c)
  +(1-p)\,\Prob_j^E(S_{\tau_c}\ge h_c)\\
  &\qquad
  +c\left[
    p\,\E_i^E\tau_c
    +(1-p)\,\E_j^E\tau_c
  \right]\\
  &=
  ch_c
  \left[
    \frac{p}{I_{ij}}
    +\frac{1-p}{I_{ji}}
  \right]
  +o(ch_c).
\end{align*}
Consequently,
\begin{equation}
  \limsup_{c\downarrow0}
  \frac{R_c(E;D_{ij}^p)}{ch_c}
  \le
  \frac{p}{I_{ij}}
  +\frac{1-p}{I_{ji}}.
  \label{eq:pair-upper}
\end{equation}

\paragraph{Lower bound.}

For each $c$, choose an integrable policy $(T_c,a_c)$ whose risk is
within $c^2$ of $R_c(E;D_{ij}^p)$. The upper bound just proved implies
that this policy has risk $O(ch_c)$. Define its two error probabilities by
\[
  \alpha_c
  :=
  \Prob_i^E(a_c=a_j),
  \qquad
  \beta_c
  :=
  \Prob_j^E(a_c=a_i).
\]
Since $p\in(0,1)$ is fixed and the error probabilities enter the risk with
positive weights,
\begin{equation}
  \alpha_c=O(ch_c),
  \qquad
  \beta_c=O(ch_c).
  \label{eq:pair-small-errors}
\end{equation}

We next relate these error probabilities to expected sample size. Let $\cF_{T_c}$ denote the information available when the policy stops. At
deterministic time $n$, the likelihood ratio under states $i$ and $j$ is
$e^{S_n}$; as above, any state-independent auxiliary randomization contributes
no factor. If $A\in\cF_{T_c}$, then
$A\cap\{T_c=n\}\in\cF_n$, and therefore
\begin{align*}
  \Prob_i^E(A)
  &=
  \sum_{n\ge0}
  \E_j^E\!\left[
    e^{S_n}\one_{A\cap\{T_c=n\}}
  \right]\\
  &=
  \E_j^E\!\left[
    e^{S_{T_c}}\one_A
  \right].
\end{align*}
Thus
\[
  \frac{\dd\Prob_i^E|_{\cF_{T_c}}}
       {\dd\Prob_j^E|_{\cF_{T_c}}}
  =e^{S_{T_c}}.
\]
It follows that
\[
  \KL{
    \Prob_i^E|_{\cF_{T_c}}}
    {\Prob_j^E|_{\cF_{T_c}}
  }
  =
  \E_i^E S_{T_c}.
\]

Because the policy is integrable, $\E_i^E T_c<\infty$. Since the increments
are bounded,
\[
  \sum_{t\ge1}
  \E_i^E\!\left[
    |Y_t|\one_{\{T_c\ge t\}}
  \right]
  \le
  B\E_i^E T_c
  <\infty.
\]
Hence
\begin{align*}
  \E_i^E S_{T_c}
  &=
  \sum_{t\ge1}
  \E_i^E\!\left[
    Y_t\one_{\{T_c\ge t\}}
  \right]\\
  &=
  I_{ij}\sum_{t\ge1}\Prob_i^E(T_c\ge t)\\
  &=
  I_{ij}\E_i^E T_c.
\end{align*}
Consequently,
\[
  \KL{
    \Prob_i^E|_{\cF_{T_c}}}
    {\Prob_j^E|_{\cF_{T_c}}
  }
  =
  I_{ij}\E_i^E T_c.
\]

The data-processing inequality for KL divergence says that a measurable
transformation cannot increase KL divergence. Apply it to the map from the
stopped history to the terminal action $a_c$. Under state $i$, the
probability of action $a_i$ is $1-\alpha_c$; under state $j$, it is
$\beta_c$. Thus
\begin{equation}
  I_{ij}\E_i^E T_c
  \ge
  d_2(1-\alpha_c\Vert\beta_c),
  \label{eq:pair-data-processing-i}
\end{equation}
where
\[
  d_2(q\Vert r)
  :=
  q\log\frac qr
  +(1-q)\log\frac{1-q}{1-r}
\]
is binary KL divergence. Similarly,
\[
  I_{ji}\E_j^E T_c
  \ge
  d_2(1-\beta_c\Vert\alpha_c).
\]

We now lower-bound the first binary KL divergence. Directly from its
definition,
\begin{align*}
  d_2(1-\alpha_c\Vert\beta_c)
  &=
  (1-\alpha_c)\log\frac1{\beta_c}\\
  &\quad
  +(1-\alpha_c)\log(1-\alpha_c)
  +\alpha_c\log\alpha_c\\
  &\quad
  -\alpha_c\log(1-\beta_c).
\end{align*}
The final term is nonnegative, while
\[
  (1-\alpha_c)\log(1-\alpha_c)
  +\alpha_c\log\alpha_c
  \ge-\log2.
\]
Hence
\begin{equation}
  d_2(1-\alpha_c\Vert\beta_c)
  \ge
  (1-\alpha_c)\log\frac1{\beta_c}
  -\log2.
  \label{eq:pair-binary-kl-bound}
\end{equation}
If $\beta_c=0$, the left side is $+\infty$, so the desired lower bound is
immediate. Otherwise, \eqref{eq:pair-small-errors} implies that for some
constant $C<\infty$,
\[
  \beta_c\le Cch_c
\]
for all sufficiently small $c$. Therefore
\[
  \log\frac1{\beta_c}
  \ge
  \log\frac1{Cch_c}
  =
  h_c-\log h_c-O(1)
  =
  h_c-o(h_c).
\]
Since $\alpha_c=o(1)$, \eqref{eq:pair-binary-kl-bound} yields
\[
  d_2(1-\alpha_c\Vert\beta_c)
  \ge
  h_c-o(h_c).
\]
Combining this with \eqref{eq:pair-data-processing-i} gives
\[
  \E_i^E T_c
  \ge
  \frac{h_c}{I_{ij}}-o(h_c).
\]
By symmetry,
\[
  \E_j^E T_c
  \ge
  \frac{h_c}{I_{ji}}-o(h_c).
\]

The decision-error terms in the risk are nonnegative. Hence
\begin{align*}
  &p\,\Prob_i^E(a_c=a_j)
  +(1-p)\,\Prob_j^E(a_c=a_i)\\
  &\qquad
  +c\left[
    p\,\E_i^E T_c
    +(1-p)\,\E_j^E T_c
  \right]\\
  &\ge
  ch_c
  \left[
    \frac{p}{I_{ij}}
    +\frac{1-p}{I_{ji}}
  \right]
  -o(ch_c).
\end{align*}
Since $(T_c,a_c)$ was chosen within $c^2=o(ch_c)$ of the optimum,
\begin{equation}
  \liminf_{c\downarrow0}
  \frac{R_c(E;D_{ij}^p)}{ch_c}
  \ge
  \frac{p}{I_{ij}}
  +\frac{1-p}{I_{ji}}.
  \label{eq:pair-lower}
\end{equation}

Combining \eqref{eq:pair-upper} and \eqref{eq:pair-lower}, and recalling that
$h_c=\log(1/c)$, gives
\[
  \lim_{c\downarrow0}
  \frac{R_c(E;D_{ij}^p)}
       {c\log(1/c)}
  =
  \frac{p}{I_{ij}(E)}
  +\frac{1-p}{I_{ji}(E)},
\]
which is \eqref{eq:pair-isolation}.
\end{proof}

\section{Reusable facts for the conversion proof}
\label{app:reusable-facts}

This appendix collects facts used in the proof of
\Cref{thm:conversion}. They are stated in the order in which they first
enter that proof. Throughout \Cref{app:reusable-facts,app:conversion-proof}, a constant
written $C_E$ (or $c_E$) depends only on $E$ and the fixed state set, and a
constant written $C_{F,G}$ depends only on the pair $(F,G)$ and that state
set. Such constants may change from line to line. The assumptions in the main text imply
\[
  0<I_{ij}(E)<\infty
  \qquad(i\ne j)
\]
for every experiment $E$ considered below: strict positivity follows from
pairwise identification and finiteness follows because the signal space is finite and has full support.

Fix such an experiment $E$, and write $\mathcal X_E$ for its finite
signal space. Let $X_1,X_2,\ldots$ be its i.i.d.\ signals,
and let $\cF_n$ contain $X_1,\ldots,X_n$ together with the auxiliary
random draws revealed by time $n$. Every auxiliary draw is independent of
the state and of the entire signal stream. For $i\ne j$, define
\begin{equation}
  S_{ij}^E(n)
  :=
  \sum_{t=1}^n\log\frac{E_i(X_t)}{E_j(X_t)},
  \qquad S_{ij}^E(0):=0.
  \label{eq:app-score}
\end{equation}
Every
increment in \eqref{eq:app-score} is bounded, and under state $i$ its mean
is $I_{ij}(E)$.

\subsection{Classification and simultaneous passage}

For $\ell\ge1$ and a signal history
$x_{1:\ell}=(x_1,\ldots,x_\ell)$, define
\begin{equation}
  \widehat K_\ell(x_{1:\ell})
  :=
  \text{the element with the smallest index of }
  \operatorname*{arg\,max}_{r\in\Theta}
  \sum_{t=1}^\ell\log E_r(x_t).
  \label{eq:ml-definition}
\end{equation}
Thus $\widehat K_\ell:=\widehat K_\ell(X_{1:\ell})$ is a
maximum-likelihood estimator. For every ordered pair $i\ne j$, also put
\[
  a_{ij}:=\min_{x\in\mathcal X_E}
  \log\frac{E_i(x)}{E_j(x)},
  \qquad
  b_{ij}:=\max_{x\in\mathcal X_E}
  \log\frac{E_i(x)}{E_j(x)},
  \qquad
  r_{ij}:=b_{ij}-a_{ij}.
\]
Full support and finiteness of $\mathcal X_E$ make these quantities finite.
Moreover, $r_{ij}>0$. Indeed, if the log-likelihood ratio were constant
in $x$, normalization of $E_i$ and $E_j$ would force that constant to
be zero and hence $E_i=E_j$, contrary to pairwise identification.

\begin{lemma}[Exponential classification]
\label{lem:classification}
There are constants $c_E,C_E>0$ such that
\begin{equation}
  \max_i\Prob_i^E(\widehat K_\ell\ne i)
  \le C_Ee^{-c_E\ell}
  \qquad(\ell\ge1).
  \label{eq:classification}
\end{equation}
\end{lemma}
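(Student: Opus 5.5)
The plan is to use a union bound over alternatives together with Hoeffding's inequality for bounded increments. Fix a true state $i$. The event $\{\widehat K_\ell\ne i\}$ requires some $r\ne i$ whose log-likelihood weakly exceeds that of $i$: either strictly, or with a tie broken in favor of a smaller index. In either case
\[
  \sum_{t=1}^\ell\log E_r(X_t)\ge\sum_{t=1}^\ell\log E_i(X_t),
\]
which is exactly $S_{ir}^E(\ell)\le0$. Hence
\[
  \Prob_i^E(\widehat K_\ell\ne i)\le\sum_{r\ne i}\Prob_i^E\bigl(S_{ir}^E(\ell)\le0\bigr).
\]
The tie-breaking rule therefore plays no role beyond being absorbed into the weak inequality. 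Auxiliary randomization is irrelevant because $\widehat K_\ell$ is a function of $X_{1:\ell}$ alone.

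For each ordered pair $i\ne r$, the increments $\log(E_i(X_t)/E_r(X_t))$ are i.i.d. under $\Prob_i^E$ with mean $I_{ir}(E)>0$. They take values in $[a_{ir},b_{ir}]$, an interval of length $r_{ir}>0$, so Hoeffding's inequality gives
\[
  \Prob_i^E\bigl(S_{ir}^E(\ell)\le0\bigr)
  =\Prob_i^E\bigl(S_{ir}^E(\ell)-\ell I_{ir}(E)\le-\ell I_{ir}(E)\bigr)
  \le\exp\!\left(-\frac{2\ell I_{ir}(E)^2}{r_{ir}^2}\right).
\]
The quantities $a_{ir}$, $b_{ir}$ and $r_{ir}$ were set up just before the lemma precisely for this application. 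Positivity of $I_{ir}(E)$ follows from pairwise identification, as noted at the start of the appendix.

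Finally, put
\[
  c_E:=\min_{i\ne r}\frac{2I_{ir}(E)^2}{r_{ir}^2}>0
  \qquad\text{and}\qquad
  C_E:=d,
\]
since there are $d$ alternatives $r\ne i$. Taking the maximum over the finitely many states $i$ then gives \eqref{eq:classification} for every $\ell\ge1$. There is no genuine obstacle here. The only point needing care is the reduction of misclassification, including ties, to the weak event $S_{ir}^E(\ell)\le0$; after that, the constants are uniform because $\Theta$ and $\mathcal X_E$ are finite. If one prefers to avoid Hoeffding's inequality, a Chernoff bound with the moment generating function $\E_i^E[e^{-\lambda Y}]<1$ for small $\lambda>0$ works equally well, since that function has derivative $-I_{ir}(E)<0$ at $\lambda=0$.
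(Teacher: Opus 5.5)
Your proposal is correct and follows the paper's proof essentially line for line. Both use the inclusion $\{\widehat K_\ell\ne i\}\subseteq\bigcup_{r\ne i}\{S_{ir}^E(\ell)\le0\}$, Hoeffding's inequality with range $r_{ir}$, a union bound over the $d$ alternatives, and the constants $c_E=\min_{i\ne r}2I_{ir}(E)^2/r_{ir}^2$ and $C_E=d$; your explicit remark that ties are absorbed into the weak inequality is a slightly more careful version of the paper's event inclusion.
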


\begin{proof}[Proof of \cref{lem:classification}]
Fix the true state $i$, and for $j\ne i$ write
\[
  Y_t^{ij}:=\log\frac{E_i(X_t)}{E_j(X_t)}.
\]
Under state $i$, the variables $Y_t^{ij}$ are independent, take values in
$[a_{ij},b_{ij}]$, and have mean
\[
  \E_i^E Y_t^{ij}
  =
  \sum_xE_i(x)\log\frac{E_i(x)}{E_j(x)}
  =I_{ij}(E)>0.
\]
A misclassification in favor of $j$ can occur only if the accumulated evidence for
$i$ against $j$ has failed to turn positive, so
\[
  \{\widehat K_\ell\ne i\}
  \subseteq
  \bigcup_{j\ne i}\{S_{ij}^E(\ell)\le0\}.
\]
Fix $j\ne i$. The $\ell$ summands are independent and each takes values in an
interval of length $r_{ij}$, so Hoeffding's inequality gives
\begin{align*}
  \Prob_i^E\!\left(S_{ij}^E(\ell)\le0\right)
  &=
  \Prob_i^E\!\left(
    \sum_{t=1}^\ell
    \bigl(Y_t^{ij}-I_{ij}(E)\bigr)
    \le-\ell I_{ij}(E)
  \right)
  \notag\\
  &\le
  \exp\!\left\{
    -\frac{2\ell I_{ij}(E)^2}{r_{ij}^2}
  \right\}.
\end{align*}
Each state has $d=|\Theta|-1$ alternatives, and there are finitely many ordered
pairs, so the constant
\[
  c_E
  :=
  \min_{i\ne j}\frac{2I_{ij}(E)^2}{r_{ij}^2}
\]
is strictly positive. Combining the event inclusion with the Hoeffding bound
and taking a union bound over the $d$ alternatives,
\[
  \Prob_i^E(\widehat K_\ell\ne i)
  \le
  \sum_{j\ne i}
  \exp\!\left\{-\frac{2\ell I_{ij}(E)^2}{r_{ij}^2}\right\}
  \le d e^{-c_E\ell},
\]
and taking the maximum over $i$ proves \eqref{eq:classification} with $C_E=d$.
\end{proof}

Fix a state $i$. Given a boundary vector
$b=(b_j)_{j\ne i}\in\R^{\Theta\setminus\{i\}}$, define
\begin{equation}
  \tau_i(b)
  :=
  \inf\{n\ge0:S_{ij}^E(n)\ge b_j\text{ for every }j\ne i\},
  \qquad \inf\varnothing:=\infty,
  \label{eq:passage-definition}
\end{equation}
and define its deterministic first-order benchmark by
\[
  t_i(b)
  :=
  \max_{j\ne i}\frac{(b_j)^+}{I_{ij}(E)}.
\]

\begin{lemma}[Simultaneous passage]
\label{lem:passage}
There is a constant $C_E<\infty$, independent of $i$ and $b$, such that
\begin{equation}
  \E_i^E\tau_i(b)
  \le
  t_i(b)+C_E\sqrt{1+t_i(b)}+C_E.
  \label{eq:passage}
\end{equation}
\end{lemma}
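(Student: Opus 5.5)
The plan is to bound $\tau_i(b)$ by the maximum of the individual one-dimensional passage times $\tau_{ij}:=\inf\{n\ge0:S_{ij}^E(n)\ge b_j\}$ and control each of these. The maximum does not suffice directly, since $\tau_i(b)$ needs all coordinates above their boundaries at the same time. To handle this, I will compare with the deterministic time $t:=t_i(b)$ plus a slack term. Write $\bar S_{ij}(n):=S_{ij}^E(n)-nI_{ij}(E)$, a mean-zero random walk under $\Prob_i^E$ with increments in an interval of length $r_{ij}$. Then $\{\tau_i(b)>n\}\subseteq\bigcup_{j\ne i}\{S_{ij}^E(n)<b_j\}$. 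For $n\ge t$ we have $b_j\le (b_j)^+\le tI_{ij}(E)$, so this event is contained in $\bigcup_j\{\bar S_{ij}(n)<-(n-t)I_{ij}(E)\}$.

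Using $\E_i^E\tau_i(b)=\sum_{n\ge0}\Prob_i^E(\tau_i(b)>n)$, I split the sum at $n_1:=\lceil t\rceil+\lceil\sqrt{1+t}\,\rceil$. The terms with $n<n_1$ contribute at most $t+\sqrt{1+t}+2$. For $n=n_1+m$ with $m\ge0$, Hoeffding's inequality and a union bound over the $d$ alternatives give
\[
\Prob_i^E(\tau_i(b)>n)\le\sum_{j\ne i}\exp\!\Bigl\{-\frac{2(n-t)^2I_{ij}(E)^2}{n\,r_{ij}^2}\Bigr\}.
\]
Set $u:=n-t\ge\sqrt{1+t}$. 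Then $n=t+u\le u^2+u\le 2u^2$ whenever $u\ge1$, so the exponent is at least $c\,u^2/n\ge c\,u/2\cdot\min\{1,u/t\}$, up to constants. The crude form $u^2/(t+u)$ is enough: for $u\le t$ it is at least $u^2/(2t)$, and for $u>t$ it is at least $u/2$.

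Summing over $u$, the region $u>t$ contributes $O(1)$ through a geometric series. The region $\sqrt{1+t}\le u\le t$ contributes at most $\sum_{u\ge0}e^{-cu^2/(2t)}=O(\sqrt{t})+1$ by the Gaussian-sum/integral comparison. Both contributions are at most $C_E\sqrt{1+t}+C_E$. Here $c=2\min_{j}I_{ij}(E)^2/r_{ij}^2$, and this minimum over finitely many ordered pairs gives constants independent of $i$ and $b$. One must also check that $\tau_i(b)$ is finite almost surely, but the summability of the tail already gives integrability.

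The main obstacle I expect is bookkeeping rather than substance. The first step is handling coordinates with $b_j\le0$: these are automatically fine once $n\ge t$, since the inclusion uses $(b_j)^+$. The second is keeping the sub-Gaussian tail sum at order $\sqrt{t}$ rather than order $t$. The Hoeffding exponent $(n-t)^2/n$ is quadratic in the deviation only while $n-t\lesssim t$, which is exactly the regime that produces the $\sqrt{1+t}$ term. The linear regime beyond it produces only a constant.
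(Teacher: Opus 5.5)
Your proposal is correct and follows essentially the same route as the paper: a union bound over the lagging coordinate, Hoeffding's inequality with mean gap $(n-t)I_{ij}(E)$, the tail-sum formula, and a split into a Gaussian regime contributing $O(\sqrt{t})$ and a linear regime contributing $O(1)$. The differences are cosmetic. The paper centers at $n_0=\lceil t\rceil$ rather than at $t$, and it does not need your extra initial block of length $\lceil\sqrt{1+t}\,\rceil$, which is harmless but unnecessary.
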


\begin{proof}[Proof of \cref{lem:passage}]
Define the source-dependent constants
\[
  \underline I_E:=\min_{r\ne q}I_{rq}(E)>0,
  \qquad
  \overline r_E:=\max_{r\ne q}r_{rq}<\infty,
  \qquad
  \gamma_E:=\frac{2\underline I_E^2}{\overline r_E^2}>0.
\]

Write $t_0:=t_i(b)$ and $n_0:=\lceil t_0\rceil$. If $n_0=0$, then
$(b_j)^+=0$, and hence $\tau_i(b)=0$. The
desired bound is immediate. We may therefore suppose that $n_0\ge1$.

Fix an integer $s\ge0$. For each $j\ne i$,
\[
  n_0I_{ij}(E)
  \ge t_0I_{ij}(E)
  \ge (b_j)^+
  \ge b_j.
\]
Therefore, with $n=n_0+s$,
\begin{equation}
  \E_i^ES_{ij}^E(n)-b_j
  =nI_{ij}(E)-b_j
  \ge sI_{ij}(E).
  \label{eq:passage-mean-gap}
\end{equation}

If $\tau_i(b)>n$ then the score vector has not yet entered the region $\{S^E_{ij}(n)\ge b_j\ \forall j\ne i\}$,
so some coordinate $S_{ij}^E(n)$ still lies below $b_j$. A union bound over the
lagging coordinate, the mean gap \eqref{eq:passage-mean-gap}, Hoeffding's
inequality applied to $n$ independent increments of range $r_{ij}$, and finally
the uniform bounds $I_{ij}(E)\ge\underline I_E$ and
$r_{ij}\le\overline r_E$ over the $d$ alternatives give
\begin{align}
  \Prob_i^E(\tau_i(b)>n_0+s)
  &\le
  \sum_{j\ne i}\Prob_i^E(S_{ij}^E(n)<b_j)
  \notag\\
  &\le
  \sum_{j\ne i}
  \Prob_i^E\!\left(
    S_{ij}^E(n)-\E_i^ES_{ij}^E(n)
    \le-sI_{ij}(E)
  \right)
  \notag\\
  &\le
  \sum_{j\ne i}
  \exp\!\left\{
    -\frac{2s^2I_{ij}(E)^2}{nr_{ij}^2}
  \right\}
  \notag\\
  &\le
  d\exp\!\left\{
    -\gamma_E\frac{s^2}{n_0+s}
  \right\}.
  \label{eq:passage-tail}
\end{align}
The right side vanishes as $s$ grows, which incidentally shows that
$\tau_i(b)<\infty$ almost surely.

Since $\tau_i(b)$
is nonnegative and integer-valued, its tail-sum formula is
\[
  \E_i^E\tau_i(b)
  =\sum_{m=0}^\infty\Prob_i^E(\tau_i(b)>m).
\]
The first $n_0$ terms are each at most one. Reindexing the remaining terms
as $m=n_0+s$ and applying \eqref{eq:passage-tail} therefore gives
\begin{equation}
  \E_i^E\tau_i(b)
  \le
  n_0+d\sum_{s=0}^\infty
  \exp\!\left\{-\gamma_E\frac{s^2}{n_0+s}\right\}.
  \label{eq:passage-tail-sum}
\end{equation}
For $0\le s\le n_0$, one has $n_0+s\le2n_0$. Since the exponential is
decreasing in its nonnegative exponent,
\begin{align*}
  \sum_{s=0}^{n_0}
  \exp\!\left\{-\gamma_E\frac{s^2}{n_0+s}\right\}
  &\le
  \sum_{s=0}^{\infty}
  \exp\!\left\{-\frac{\gamma_Es^2}{2n_0}\right\}
  \\
  &\le
  1+\int_0^\infty
  \exp\!\left\{-\frac{\gamma_Ex^2}{2n_0}\right\}\dd x
  \\
  &=
  1+\sqrt{\frac{\pi n_0}{2\gamma_E}}.
\end{align*}
The integral comparison is legitimate because the summand decreases on
$[0,\infty)$. In the opposite range $s>n_0$ we have $n_0+s<2s$, so
$s^2/(n_0+s)>s/2$, and enlarging the sum by starting it at $s=1$ gives
\[
  \sum_{s=n_0+1}^{\infty}
  \exp\!\left\{-\gamma_E\frac{s^2}{n_0+s}\right\}
  \le
  \sum_{s=n_0+1}^{\infty}e^{-\gamma_Es/2}
  \le
  \frac{e^{-\gamma_E/2}}{1-e^{-\gamma_E/2}}.
\]
Substituting the last two bounds into \eqref{eq:passage-tail-sum}, we obtain
finite constants $A_E,B_E$, depending only on $E$, for which
\[
  \E_i^E\tau_i(b)
  \le n_0+A_E\sqrt{n_0}+B_E.
\]
Finally, $n_0=\lceil t_0\rceil\le t_0+1$ and
$\sqrt{n_0}\le\sqrt{1+t_0}$. Hence
\[
  \E_i^E\tau_i(b)
  \le
  t_0+1+A_E\sqrt{1+t_0}+B_E.
\]
Choosing $C_E\ge\max\{A_E,B_E+1\}$ proves
\eqref{eq:passage}. All constants are uniform in $i$ and $b$, as
required.
\end{proof}

\subsection{Likelihoods at a stopping time}

Put $\cF_\infty:=\sigma(\bigcup_{n\ge0}\cF_n)$.
For a stopping time $T$, its stopped sigma-field is
\[
  \cF_T
  :=
  \left\{A\in\cF_\infty:
    A\cap\{T\le n\}\in\cF_n\text{ for every }n\ge0\right\}.
\]

\begin{lemma}[Stopped likelihood identity]
\label{lem:stopped-likelihood}
If $T$ is a stopping time that is almost surely finite under states $i$ and
$j$, then on the stopped sigma-field $\cF_T$,
\begin{equation}
  \frac{\dd\Prob_i^E|_{\cF_T}}
       {\dd\Prob_j^E|_{\cF_T}}
  =e^{S_{ij}^E(T)}.
  \label{eq:stopped-density}
\end{equation}
If in addition $\E_i^ET<\infty$, then
\begin{equation}
  	\KL{\Prob_i^E|_{\cF_T}}{\Prob_j^E|_{\cF_T}}
  =I_{ij}(E)\E_i^ET.
  \label{eq:stopped-kl}
\end{equation}
\end{lemma}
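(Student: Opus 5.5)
The plan is to prove the density formula \eqref{eq:stopped-density} first, by decomposing over the value of $T$, and then obtain \eqref{eq:stopped-kl} from it through Wald's identity, exactly as in the binary computations inside the proof of \Cref{lem:pair-testing}.

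\textbf{The density on $\cF_T$.} Fix $A\in\cF_T$. Then $A\cap\{T=n\}\in\cF_n$ for every $n$. On $\cF_n$ the likelihood ratio of $\Prob_i^E$ to $\Prob_j^E$ equals $e^{S_{ij}^E(n)}$. This holds because $\cF_n$ is generated by $X_{1:n}$ together with auxiliary draws that are independent of the state and of the signal stream. Those draws have the same law under both states, and their joint law with $X_{1:n}$ factorizes, so they contribute no factor to the ratio. I would verify this carefully on the generating $\pi$-system of rectangles and then extend by a monotone class argument. Since $T<\infty$ almost surely under both states, summing over $n$ gives
\[
  \Prob_i^E(A)=\sum_{n\ge0}\E_j^E\bigl[e^{S_{ij}^E(n)}\one_{A\cap\{T=n\}}\bigr]=\E_j^E\bigl[e^{S_{ij}^E(T)}\one_A\bigr].
\]
The $j$-null event $\{T=\infty\}$ is also $i$-null, so it is negligible. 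The random variable $S_{ij}^E(T)$ is $\cF_T$-measurable, which establishes \eqref{eq:stopped-density}.

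\textbf{The KL identity.} The density formula gives $\KL{\Prob_i^E|_{\cF_T}}{\Prob_j^E|_{\cF_T}}=\E_i^E S_{ij}^E(T)$, provided this expectation is well defined. Every increment $Y_t$ is bounded by some $B<\infty$, so
\[
  \sum_t\E_i^E\bigl[|Y_t|\one_{\{T\ge t\}}\bigr]\le B\,\E_i^E T<\infty.
\]
This domination lets us interchange sum and expectation. The event $\{T\ge t\}=\{T\le t-1\}^c$ lies in $\cF_{t-1}$. Under $\Prob_i^E$, the signal $X_t$ is independent of $\cF_{t-1}$: it is independent of the earlier signals, and the auxiliary draws are independent of the whole stream. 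Hence
\[
  \E_i^E\bigl[Y_t\one_{\{T\ge t\}}\bigr]=I_{ij}(E)\,\Prob_i^E(T\ge t).
\]
Summing over $t$ yields $I_{ij}(E)\E_i^E T$.

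\textbf{Expected obstacle.} The only delicate point is the measure-theoretic claim that $X_t$ is independent of $\cF_{t-1}$ and that auxiliary randomness leaves the likelihood ratio unchanged. One might worry that an auxiliary draw revealed at time $t-1$ could correlate with $X_t$. This is ruled out because each auxiliary draw is independent of the entire signal stream and of the state, so the product structure of the extension resolves it.
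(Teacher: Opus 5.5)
Your proposal is correct and follows the paper's proof essentially step by step. You first apply the deterministic-time likelihood ratio $e^{S_{ij}^E(n)}$ on $\cF_n$ and sum over $\{T=n\}$, using almost-sure finiteness of $T$ under both states, with $e^{S_{ij}^E(T)}$ being $\cF_T$-measurable. You then get Wald's identity from bounded increments, Fubini, $\{T\ge t\}\in\cF_{t-1}$, and independence of $X_t$ from $\cF_{t-1}$. Your $\pi$-system/monotone-class check of the fixed-time ratio is just a more explicit version of what the paper asserts directly.
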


\begin{proof}[Proof of \cref{lem:stopped-likelihood}]
At deterministic time $n$, the likelihood ratio of the signal history under
state $i$ relative to state $j$ is $e^{S_{ij}^E(n)}$. The auxiliary
randomization is state-independent and therefore contributes no likelihood
factor. If $A\in\cF_T$, then
$A\cap\{T=n\}\in\cF_n$.
Consequently,
\begin{align*}
  \Prob_i^E(A)
  &=
  \sum_{n\ge0}\Prob_i^E(A\cap\{T=n\})\\
  &=
  \sum_{n\ge0}
  \E_j^E\!\left[
    e^{S_{ij}^E(n)}
    \one_{A\cap\{T=n\}}
  \right]\\
  &=
  \E_j^E\!\left[e^{S_{ij}^E(T)}\one_A\right].
\end{align*}
Almost sure finiteness of $T$ is used at both ends: under state $i$ to
decompose $A$ over the disjoint events $\{T=n\}$, and under state $j$ to
recombine them after substituting $S_{ij}^E(T)=S_{ij}^E(n)$ on $\{T=n\}$. The
middle step is the deterministic-time likelihood ratio on the
$\cF_n$-measurable event $A\cap\{T=n\}$. Since $e^{S_{ij}^E(T)}$ is a stopped
adapted process, hence $\cF_T$-measurable, holding the displayed identity for
every $A\in\cF_T$ identifies the Radon--Nikodym derivative and proves
\eqref{eq:stopped-density}.

For the KL identity, put
\[
  Y_t:=\log\frac{E_i(X_t)}{E_j(X_t)}.
\]
There is a constant $L_E<\infty$ with $|Y_t|\le L_E$. If
$\E_i^ET<\infty$, then
the pathwise identity
\[
  S_{ij}^E(T)
  =
  \sum_{t\ge1}Y_t\one_{\{T\ge t\}}
\]
is absolutely integrable: the tail-sum formula gives
\[
  \sum_{t\ge1}
  \E_i^E\!\left[|Y_t|\one_{\{T\ge t\}}\right]
  \le
  L_E\sum_{t\ge1}\Prob_i^E(T\ge t)
  =L_E\E_i^ET<\infty,
\]
so Fubini's theorem applies to the sum. Moreover $\{T\ge t\}\in\cF_{t-1}$,
while $X_t$ is independent of $\cF_{t-1}$ under state $i$, whence
\[
  \E_i^E(Y_t\mid\cF_{t-1})=I_{ij}(E).
\]
Because $\one_{\{T\ge t\}}$ is known before $X_t$ is drawn, the tower property
turns the last identity into
$\E_i^E[Y_t\one_{\{T\ge t\}}]=I_{ij}(E)\Prob_i^E(T\ge t)$. Summing over $t$ and
using the tail-sum formula once more,
\[
  \E_i^ES_{ij}^E(T)
  =
  \sum_{t\ge1}
  \E_i^E\!\left[Y_t\one_{\{T\ge t\}}\right]
  =
  I_{ij}(E)\sum_{t\ge1}\Prob_i^E(T\ge t)
  =
  I_{ij}(E)\E_i^ET.
\]
The log of the density in \eqref{eq:stopped-density} is
$S_{ij}^E(T)$, which the preceding absolute-integrability calculation shows
to be integrable. Taking its state-$i$ expectation is therefore legitimate
and proves \eqref{eq:stopped-kl}.
\end{proof}

\subsection{Fresh tails and pasted policies}

The next fact is the filtration device used whenever one policy is started after another has stopped.

Let $\sigma$ be a stopping time for the augmented $E$-filtration that is
almost surely finite in every state, and let $K$ be an
$\cF_\sigma$-measurable random variable with countable range
$\mathcal K$. For each $k\in\mathcal K$, let $(N_k,Y_k)$ be a
stopping policy defined on a fresh $E$-signal stream, with all $Y_k$ taking
values in one common measurable output space.

We now define precisely how fresh auxiliary randomization is added. Implement
the countable family of policies using one product seed $U$, whose
coordinates supply all random draws required by the different policies. On a
product extension, take $U$ independent of the pre-existing
$\cF_\infty$ and with a law that does not depend on the state. If
$(\mathcal U,\mathcal B_U)$ is its measurable space, define the enlarged
filtration by
\begin{equation}
  \cF_n^+
  :=
  \cF_n\vee
  \sigma\!\left(
    \{\sigma=r\}\cap\{U\in B\}:
    0\le r\le n,\ B\in\mathcal B_U
  \right).
  \label{eq:pasted-filtration}
\end{equation}
Thus the fresh seed is not in the pre-restart sigma-field
$\cF_\sigma$; on the event $\{\sigma=r\}$, it is revealed at time $r$
in the enlarged filtration. For a deterministic $r$, denote by
$(N_k^{(r)},Y_k^{(r)})$ the result of feeding policy $k$ the signal tail
$(X_{r+1},X_{r+2},\ldots)$ and the appropriate coordinates of $U$. On
$\{K=k\}$, define the selected tail
duration and output by
\[
  (\bar N,\bar Y):=(N_k^{(\sigma)},Y_k^{(\sigma)}).
\]
If a zero-duration branch is needed, extend $\mathcal K$ by an idle value
$\dagger$, fix one output $y_\dagger$ in the common output space, and let the
idle policy be $(N_\dagger,Y_\dagger)=(0,y_\dagger)$.

\begin{lemma}[Pasting after a stopping time]
\label{lem:fresh-tail-pasting}
Relative to $(\cF_n^+)$, the pasted terminal time $\sigma+\bar N$ is a
stopping time, and $\bar Y$ is measurable at that time. Moreover, for every
state $i$, every bounded measurable $\varphi$, and every
$k\in\mathcal K$,
\begin{equation}
  \E_i^E[\varphi(\bar N,\bar Y)\mid\cF_\sigma]
  =
  \E_i^E\varphi(N_k,Y_k)
  \quad\text{on }\{K=k\}.
  \label{eq:fresh-tail-law}
\end{equation}
If $N_k$ is integrable under state $i$, then in particular
\begin{equation}
  \E_i^E[\bar N\mid\cF_\sigma]
  =
  \E_i^E N_k
  \quad\text{on }\{K=k\}.
  \label{eq:fresh-tail-mean}
\end{equation}
\end{lemma}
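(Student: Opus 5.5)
Since $\sigma$ is finite almost surely, I decompose over its values: on $\{\sigma=r\}$ the pasted quantities are $\bar N=N_K^{(r)}$ and $\bar Y=Y_K^{(r)}$, with $K$ being $\cF_r$-measurable on that event. The whole argument rests on two independence facts. First, under $\Prob_i^E$ the signal tail $(X_{r+1},X_{r+2},\ldots)$ is i.i.d.\ with law $E_i$ and independent of $\cF_r$. Second, the seed $U$ is independent of $\cF_\infty$ and has a state-free law. With these in hand, each claim reduces to a statement at deterministic times.

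For the stopping-time claim, fix $n$ and write
\[
  \{\sigma+\bar N\le n\}
  =\bigcup_{r\le n}\ \bigcup_{k\in\mathcal K}
  \{\sigma=r\}\cap\{K=k\}\cap\{N_k^{(r)}\le n-r\}.
\]
The set $\{\sigma=r\}\cap\{K=k\}$ lies in $\cF_r\subseteq\cF_n$. The event $\{N_k^{(r)}\le n-r\}$ is determined by $X_{r+1},\ldots,X_n$ and by the coordinates of $U$ that policy $k$ reveals by its own time $n-r$. Intersected with $\{\sigma=r\}$, these coordinates are in $\cF_n^+$ by \eqref{eq:pasted-filtration}. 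The union is countable because $\mathcal K$ is countable, so $\sigma+\bar N$ is an $(\cF^+_n)$-stopping time. The same decomposition applies to $\{\bar Y\in B\}\cap\{\sigma+\bar N\le n\}$, using the $\cF^{(r)}_{N_k}$-measurability of $Y_k^{(r)}$; this shows $\bar Y$ is measurable at the pasted time. The idle branch is trivial.

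For \eqref{eq:fresh-tail-law}, take $A\in\cF_\sigma$ and compute
\[
  \E_i^E[\varphi(\bar N,\bar Y)\one_{A\cap\{K=k\}}]
  =\sum_r \E_i^E\bigl[\varphi(N_k^{(r)},Y_k^{(r)})\one_{A\cap\{K=k\}\cap\{\sigma=r\}}\bigr].
\]
Here $A\cap\{K=k\}\cap\{\sigma=r\}\in\cF_r$. The pair $(N_k^{(r)},Y_k^{(r)})$ is a fixed measurable function of the tail and of $U$, which together are independent of $\cF_r$ and have joint law equal, under state $i$, to the fresh stream plus the seed. Each summand therefore factors as $\E_i^E\varphi(N_k,Y_k)\,\Prob_i^E(A\cap\{K=k\}\cap\{\sigma=r\})$. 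Summing over $r$, which is justified by boundedness of $\varphi$ and a.s.\ finiteness of $\sigma$, gives the conditional identity. Equation \eqref{eq:fresh-tail-mean} then follows by applying it to $\varphi=\min(\bar N,m)$ and letting $m\to\infty$ by monotone convergence.

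The main obstacle is the measure-theoretic bookkeeping in the independence step. Conditioning on $\cF_\sigma$, as opposed to $\cF_\sigma^+$, is what makes the seed fresh. One must also verify that the state-$i$ law of (tail, $U$) matches the law under which $(N_k,Y_k)$ is defined, and that the enlarged filtration still reveals the right seed coordinates at each step. I would make this explicit by building everything on the product space $\mathcal X_E^{\N}\times\mathcal U\times(\text{pre-existing auxiliaries})$ and invoking the strong Markov (regeneration) property of i.i.d.\ sequences at the deterministic time $r$.
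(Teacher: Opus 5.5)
Your proposal is correct and follows essentially the paper's proof. It uses the same countable decomposition over $\{\sigma=r,K=k\}$ for the stopping-time and output-measurability claims, independence of the post-restart signal tail and the fresh seed from the pre-restart information for the conditional law, and truncation $\bar N\wedge m$ with monotone convergence for the mean. The one organizational difference is that the paper first shows, by a cylinder computation over $\{\sigma=n\}$, that the entire post-$\sigma$ tail together with the seed is independent of $\cF_\sigma$ with law $E_i^{\otimes\N}$, and then applies this to the function defining $(N_k,Y_k)$; you instead factor each summand directly at the deterministic time $r$, which avoids extending from finite cylinders to the full tail.
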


\begin{proof}[Proof of \cref{lem:fresh-tail-pasting}]
We first prove that the unused signal tail is fresh. Fix $m\ge1$, a bounded
function $\psi$ of $m$ signals, and an event $A\in\cF_\sigma$. Since
$A\cap\{\sigma=n\}\in\cF_n$, independence of the post-$n$ signal block
from $\cF_n$ gives
\begin{align}
  &\E_i^E\!\left[
    \one_A\psi(X_{\sigma+1},\ldots,X_{\sigma+m})
  \right]
  \notag\\
  &\quad=
  \sum_{n=0}^\infty
  \E_i^E\!\left[
    \one_{A\cap\{\sigma=n\}}
    \psi(X_{n+1},\ldots,X_{n+m})
  \right]
  \notag\\
  &\quad=
  \sum_{n=0}^\infty
  \Prob_i^E(A\cap\{\sigma=n\})
  \E_{E_i^{\otimes m}}\psi
  \notag\\
  &\quad=
  \Prob_i^E(A)\E_{E_i^{\otimes m}}\psi.
  \label{eq:fresh-tail-cylinder}
\end{align}
Here the outer equalities only partition and recombine the disjoint events
$A\cap\{\sigma=n\}$, which is legitimate because $\sigma$ is almost surely
finite. Thus every finite cylinder of the unused tail carries its product-law
probability and is independent of $\cF_\sigma$, so the tail itself is
independent of $\cF_\sigma$ with law $E_i^{\otimes \N}$. The fresh
auxiliary seed was chosen independently of both the old sigma-field and the
signal stream, so adjoining it preserves this conclusion.

For fixed $k$, the pair $(N_k,Y_k)$ is a measurable function of a fresh signal
stream and its fresh randomization, so the preceding paragraph delivers its
fresh-start law conditional on $\cF_\sigma$; restricting to $\{K=k\}$, which is
legitimate because $K$ is $\cF_\sigma$-measurable, proves
\eqref{eq:fresh-tail-law}. For \eqref{eq:fresh-tail-mean}, apply what we have
just proved to the bounded function $(n,y)\mapsto n\wedge m$, which gives
\[
  \E_i^E[\bar N\wedge m\mid\cF_\sigma]
  =
  \E_i^E(N_k\wedge m)
  \quad\text{on }\{K=k\}.
\]
Letting $m\to\infty$ and applying conditional monotone convergence on the
left and ordinary monotone convergence on the right proves the asserted mean
identity.

It remains to verify the stopping and output measurability claims. For every
deterministic $n$,
\begin{equation}
  \{\sigma+\bar N\le n\}
  =
  \bigcup_{r=0}^n\ \bigcup_{k\in\mathcal K}
  \left(
    \{\sigma=r,K=k\}
    \cap\{N_k^{(r)}\le n-r\}
  \right).
  \label{eq:pasted-stopping-event}
\end{equation}
Since $K$ is $\cF_\sigma$-measurable,
$\{\sigma=r,K=k\}\in\cF_r$. Since the shifted policy is adapted to the
tail after $r$ and its seed is revealed on $\{\sigma=r\}$, the
intersection
\[
  \{\sigma=r,K=k\}\cap\{N_k^{(r)}\le n-r\}
\]
belongs to $\cF_n^+$. Here the signal portion is in $\cF_n$, while the
seed portion is measurable with respect to the generators in
\eqref{eq:pasted-filtration} having restart time $r$.
Every event inside the countable union in
\eqref{eq:pasted-stopping-event} is therefore in $\cF_n^+$, proving that
$\sigma+\bar N$ is a stopping time.

Finally, for any measurable set $B$ in the output space,
\begin{align*}
  \{\bar Y\in B,\ \sigma+\bar N\le n\}
  =
  \bigcup_{r=0}^n\ \bigcup_{k\in\mathcal K}
  \bigl(&\{\sigma=r,K=k\}
  \cap\{N_k^{(r)}\le n-r,\ Y_k^{(r)}\in B\}\bigr).
\end{align*}
For the same reason, every intersection inside this union is in
$\cF_n^+$: the stopped-output event depends only on the signal tail through
time $n$ and on the seed revealed at restart time $r$. Thus the left side
is in $\cF_n^+$ for every $n$, which is exactly measurability of $\bar Y$ at
$\sigma+\bar N$. The zero-duration special case is included by taking its
duration and output to be determined at the selection time.
\end{proof}

After a paste, one may relabel $(\cF_n^+)$ as the current filtration and
repeat the construction with another independent product seed. This is the
convention used in the stagewise recursion later.

\subsection{Behavioral policies}

Let $(T,Z)$ be a finite-output stopping $E$-policy. Write $\mathcal Z$ for the
finite range of $Z$, and define the sets of finite histories and possible
decisions by
\[
  \mathcal H_E:=\bigcup_{n\ge0}\mathcal X_E^n,
  \qquad
  \mathcal A_{\mathcal Z}
  :=
  \{\mathsf{CONTINUE}\}
  \cup
  \{(\mathsf{STOP},z):z\in\mathcal Z\}.
\]

\begin{lemma}[Behavioral representation]
\label{lem:behavioral}
There are state-independent probability distributions $q_h$ on
$\mathcal A_{\mathcal Z}$, one for each $h\in\mathcal H_E$, whose
behavioral policy has the same joint law of
$(T,Z,X_1,\ldots,X_T)$ as the original policy in every state.
\end{lemma}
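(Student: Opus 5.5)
The plan is to define the behavioral kernels directly as the state-independent conditional laws of the next decision given the signal history, and then check that the behavioral policy built from them reproduces the law of $(T,Z,X_1,\ldots,X_T)$. The key structural fact is that auxiliary randomization is independent of the state and of the whole signal stream. Hence, for each $n$ and each history $h=x_{1:n}$, the conditional probability of the events $\{T=n,Z=z\}$ and $\{T>n\}$ given $X_{1:n}=h$ and $T\ge n$ should not depend on $i$. Full support guarantees that every history has positive probability, so these conditional probabilities are well defined for every $h\in\mathcal H_E$.

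\textbf{Step 1 (state independence).} Write the policy's underlying space as signals times an auxiliary product $\Omega'$ with a state-free law $\lambda$. Since $\{T\le n\}\in\cF_n^E$ and $Z$ is $\cF_T^E$-measurable, the event $\{T=n,Z=z\}$ lies in $\cF_n^E$. It is therefore determined by $X_{1:n}$ together with the auxiliary coordinates revealed by time $n$. Consequently, the probability under $\Prob_i^E$ of $\{X_{1:n}=h,\ T=n,\ Z=z\}$ factors as $\prod_{t\le n}E_i(h_t)$ times a quantity $g_n(h,z):=\lambda(\{\omega':T=n,Z=z\text{ given }h\})$, and $g_n$ does not involve $i$. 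The same factorization holds for $\{X_{1:n}=h,\ T\ge n\}$, with a state-free weight $w_n(h)$, since $\{T\ge n\}=\{T\le n-1\}^c\in\cF_{n-1}^E$.

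\textbf{Step 2 (define the kernels).} Put $q_h(\mathsf{STOP},z):=g_n(h,z)/w_n(h)$ and $q_h(\mathsf{CONTINUE}):=1-\sum_z q_h(\mathsf{STOP},z)$ whenever $w_n(h)>0$; otherwise choose $q_h$ arbitrarily. The behavioral policy draws a fresh independent uniform at each visited history and acts according to $q_h$. It therefore satisfies the paper's conventions: the auxiliary draws are state-independent, and $T$ is a stopping time with $Z$ measurable at $T$.

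\textbf{Step 3 (matching laws).} I will show by induction on $n$ that, in both the original and the behavioral policy, the probability of $\{X_{1:n}=h,\ T\ge n\}$ equals $\prod_{t\le n}E_i(h_t)\,w_n(h)$. The base case is $w_0=1$. For the inductive step, observe that $w_{n+1}(hx)=w_n(h)\,q_h(\mathsf{CONTINUE})$: the event $\{T>n\}$ lies in $\cF_n$, so the next signal is independent of it, and $x$ enters only through the factor $E_i(x)$. The stopping probabilities then match because both equal $\prod E_i(h_t)\,g_n(h,z)$.

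These cylinder probabilities determine the joint law of $(T,Z,X_{1:T})$ on $\{T<\infty\}$. Since $\E_i T<\infty$, the stopping time is almost surely finite in every state, so the laws coincide, and integrability carries over automatically.

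The main obstacle is Step 1, namely the measure-theoretic factorization. It requires writing an $\cF_n^E$-measurable event as a section over the product of finitely many signals with the revealed auxiliary coordinates, and using the independence of the signals and auxiliaries to obtain $g_n$ via Fubini.
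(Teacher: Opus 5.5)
Your proposal is correct and takes essentially the same route as the paper: your $g_n(h,z)$ and $w_n(h)$ are exactly the paper's $\nu_U\{u\in C_h:d_n(h,u)=(\mathsf{STOP},z)\}$ and $\nu_U(C_h)$, and your inductive matching of surviving and stopped cylinder probabilities is the paper's chain-rule argument. The differences are cosmetic. You work with sections of $\cF_n$-measurable events instead of decision maps $d_n$, and you obtain almost-sure finiteness of the behavioral stopping time by summing the stopped-record probabilities (implicitly; worth stating) rather than through survival probabilities and continuity from above.
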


\begin{proof}[Proof of \cref{lem:behavioral}]
Let $U$ denote the original policy's entire auxiliary random seed. Its law
does not depend on the state. Let $(\mathcal U,\mathcal B_U,\nu_U)$ be its
probability space. For every $n\ge0$, the original policy has a measurable
decision map
\[
  d_n:\mathcal X_E^n\times\mathcal U
  \longrightarrow\mathcal A_{\mathcal Z},
\]
where $d_n(h,u)$ is the decision after history $h$ when the seed is $u$.
Define the map arbitrarily when the policy would already have stopped at a
strict prefix. On the original probability space, the actual decision is
$D_n=d_n(X_{1:n},U)$.

Fix $h=(x_1,\ldots,x_n)$, and write
$h_t=(x_1,\ldots,x_t)$, with $h_0=\varnothing$. Define the seed event
\[
  C_h
  :=
  \{u\in\mathcal U:
    d_t(h_t,u)=\mathsf{CONTINUE}
    \text{ for }t=0,\ldots,n-1\}.
\]
For $n=0$, the list of restrictions is empty and $C_\varnothing=\mathcal U$.
If $\nu_U(C_h)>0$, define
\begin{equation}
  q_h(a)
  :=
  \frac{
    \nu_U\{u\in C_h:d_n(h,u)=a\}
  }{
    \nu_U(C_h)
  },
  \qquad a\in\mathcal A_{\mathcal Z}.
  \label{eq:behavioral-kernel-definition}
\end{equation}
This is exactly the original policy's next conditional decision law after
history $h$ and continuation at all strict prefixes. Indeed,
\begin{align*}
  &\Prob_i^E(X_{1:n}=h,\ U\in C_h,D_n=a)
  =
  \left(\prod_{t=1}^nE_i(x_t)\right)
  \nu_U\{u\in C_h:d_n(h,u)=a\},\\
  &\Prob_i^E(X_{1:n}=h,\ U\in C_h)
  =
  \left(\prod_{t=1}^nE_i(x_t)\right)
  \nu_U(C_h).
\end{align*}
Full support makes the signal-likelihood factor strictly positive, and it
cancels when the first line is divided by the second. The remaining ratio
is \eqref{eq:behavioral-kernel-definition} and does not depend on the state.
If $\nu_U(C_h)=0$, the policy never reaches $h$ after continuing at all
strict prefixes in any state; in that case define $q_h$ arbitrarily.

Now use fresh independent randomization to draw from $q_h$ whenever $h$ is
reached.
For $n\ge1$, the behavioral policy assigns the stopped record
$(T,Z,X_{1:n})=(n,z,h_n)$ probability
\[
  q_{h_0}(\mathsf{CONTINUE})
  \left(\prod_{t=1}^nE_i(x_t)\right)
  \left(\prod_{t=1}^{n-1}q_{h_t}(\mathsf{CONTINUE})\right)
  q_{h_n}\bigl((\mathsf{STOP},z)\bigr).
\]
For $n=0$, the corresponding probability is
$q_\varnothing\bigl((\mathsf{STOP},z)\bigr)$. Each signal factor in the
display is the correct state-$i$ i.i.d.\ probability. Each decision factor
is the corresponding conditional probability in
\eqref{eq:behavioral-kernel-definition}. The chain rule therefore gives
exactly the same formula for the original policy. Likewise, the probability of the
surviving record $\{T>n,X_{1:n}=h_n\}$ is
\[
  q_{h_0}(\mathsf{CONTINUE})
  \left(\prod_{t=1}^nE_i(x_t)\right)
  \left(\prod_{t=1}^{n}q_{h_t}(\mathsf{CONTINUE})\right)
\]
under both policies. These stopped and surviving cylinder probabilities
agree for every $n$ and every finite history. Summing the survival
probabilities over histories gives the same value of $\Prob_i^E(T>n)$ for the
two policies. The original $T$ is finite almost surely, so continuity from
above shows that the behavioral policy also has zero probability of
continuing forever. Finally, the displayed stopped-record probabilities,
summed over the countable collection of stopping times, histories, and
outputs, determine the joint law of $(T,Z,X_1,\ldots,X_T)$. This proves
the asserted outcome equivalence in every state.
\end{proof}

For any finite experiment
\[
  \bar R=(\bar R_i)_{i\in\Theta}
\]
with finite common support $\mathcal Z_{\bar R}$, define its
\emph{likelihood-ratio diameter} by
\begin{equation}
  \Delta(\bar R)
  :=
  \max_{r,s\in\Theta}
  \max_{y\in\mathcal Z_{\bar R}}
  \left|
    \log\frac{\bar R_r(y)}{\bar R_s(y)}
  \right|.
  \label{eq:local-repair-diameter}
\end{equation}

\begin{lemma}[Bounded-diameter reproduction]
\label{lem:diameter-repair}
Let $\bar R$ be any finite experiment with finite common support.
There is a constant $C_E<\infty$, depending only on the source experiment
$E$, such that $\bar R$ has an exact $E$-reproduction using a deterministic
number $n(\bar R)$ of source observations satisfying
\begin{equation}
  n(\bar R)
  \le
  C_E\bigl(1+\Delta(\bar R)\bigr).
  \label{eq:repair-cost}
\end{equation}
\end{lemma}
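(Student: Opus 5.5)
The plan is to use \Cref{lem:classification}. Take a deterministic block of $n$ observations from $E$, compute the maximum-likelihood state $\widehat K_n$, and then output $y$ by a state-independent garbling $Q(\cdot\mid\widehat K_n)$ of this classifier. The kernel is chosen so that the induced output law is exactly $\bar R_i$ in every state $i$. Since $\widehat K_n$ is $\cF_n$-measurable and $Q$ uses only auxiliary randomization, this is a legitimate finite-output stopping $E$-policy with $T\equiv n$.

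\emph{Setting up the linear system.} Let $M^{(n)}$ be the $(d+1)\times(d+1)$ stochastic matrix with entries $M_{ik}:=\Prob_i^E(\widehat K_n=k)$. Exact reproduction requires
\[
  \sum_k M_{ik}\,Q(y\mid k)=\bar R_i(y)\qquad\text{for all } i \text{ and } y.
\]
By \Cref{lem:classification}, $\|M-I\|_\infty\le 2C_Ee^{-c_En}$ in the max-row-sum norm. Once $n\ge n_1(E)$ this is at most $1/2$. Then $M$ is invertible by a Neumann series, and $M^{-1}=I+\mathcal E$ with $\max_{k,i}|\mathcal E_{ki}|\le C'_Ee^{-c_En}$. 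Define
\[
  Q(y\mid k):=\sum_i (M^{-1})_{ki}\bar R_i(y)=\bar R_k(y)+\sum_i\mathcal E_{ki}\bar R_i(y).
\]
Because $M\mathbf 1=\mathbf 1$, we also have $M^{-1}\mathbf 1=\mathbf 1$. Hence $\sum_yQ(y\mid k)=\sum_i(M^{-1})_{ki}=1$, and the linear system holds by construction.

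\emph{Nonnegativity, which is the only real obstacle.} The definition of $\Delta(\bar R)$ gives $\bar R_i(y)\le e^{\Delta(\bar R)}\bar R_k(y)$ for every $y$ in the common support. Therefore
\[
  Q(y\mid k)\ge \bar R_k(y)\bigl(1-(d+1)C'_E e^{-c_En}e^{\Delta(\bar R)}\bigr),
\]
which is nonnegative as soon as $n\ge \bigl(\Delta(\bar R)+\log((d+1)C'_E)\bigr)/c_E$. Outside the common support all the $\bar R_i(y)$ vanish, so there is nothing to check there.

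\emph{Choice of $n$.} Take $n(\bar R)$ to be the least integer exceeding both $n_1(E)$ and this threshold. Then $n(\bar R)\le C_E(1+\Delta(\bar R))$, with $C_E$ depending only on $E$ and the state set, which proves \eqref{eq:repair-cost}. The step needing care is the quantitative control of $M^{-1}$: its entries must deviate from the identity by at most an exponentially small amount, uniformly over states. This is what lets the error factor $e^{-c_En}$ absorb $e^{\Delta}$ at a cost linear in $\Delta$. The Neumann-series bound from \Cref{lem:classification} supplies exactly this control.
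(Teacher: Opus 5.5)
Your proposal is correct and follows essentially the same route as the paper: a maximum-likelihood classifier on a deterministic block, Neumann-series inversion of its confusion matrix, the kernel $M^{-1}\bar R$ with nonnegativity from $\bar R_i(y)\le e^{\Delta(\bar R)}\bar R_k(y)$, and a block length chosen linear in $\Delta(\bar R)$. The only cosmetic difference is that you bound $M^{-1}-I$ entrywise and pay a factor $(d+1)$, whereas the paper collects the row sum through $\|A_n^{-1}-I\|_\infty$ directly.
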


\begin{proof}[Proof of \cref{lem:diameter-repair}]
For $n\ge1$, let $\widehat K_n$ be the classifier in
\eqref{eq:ml-definition}, and let its confusion matrix be
\[
  A_n(i,k):=\Prob_i^E(\widehat K_n=k).
\]
Let $I$ denote the identity matrix.
Use the row-sum matrix norm
\[
  \|M\|_\infty:=\max_i\sum_k|M(i,k)|.
\]
For each row $i$, stochasticity of $A_n$ gives
\begin{align*}
  \sum_k|A_n(i,k)-I(i,k)|
  &=(1-A_n(i,i))+\sum_{k\ne i}A_n(i,k)\\
  &=2\Prob_i^E(\widehat K_n\ne i).
\end{align*}
It follows from \Cref{lem:classification} that there are
$c_0,C_0>0$, depending only on $E$, such that
\begin{equation}
  \delta_n:=\|A_n-I\|_\infty
  =2\max_i\Prob_i^E(\widehat K_n\ne i)
  \le C_0e^{-c_0n}.
  \label{eq:confusion-distance}
\end{equation}

Choose an integer $n_*\ge1$ such that $\delta_n\le1/2$ whenever
$n\ge n_*$. For such $n$, submultiplicativity of the row-sum norm makes
the Neumann series converge:
\[
  A_n^{-1}
  =
  \sum_{m=0}^\infty[-(A_n-I)]^m.
\]
Consequently,
\begin{equation}
  \|A_n^{-1}-I\|_\infty
  \le
  \sum_{m=1}^\infty\|A_n-I\|_\infty^m
  =\frac{\delta_n}{1-\delta_n}
  \le2C_0e^{-c_0n},
  \label{eq:inverse-classifier}
\end{equation}
where the first step is the triangle inequality applied term by term and the
last uses $\delta_n\le1/2$ together with \eqref{eq:confusion-distance}. Put
$C_1:=2C_0$.

View $\bar R$ as a matrix with state rows and output columns, let
$\bm1$ denote the all-ones column vector of the required dimension, and set
\[
  K_n:=A_n^{-1}\bar R.
\]
Both $A_n$ and $\bar R$ have row sums one. Hence
$A_n\bm1=\bm1$, $\bar R\bm1=\bm1$, and invertibility gives
$A_n^{-1}\bm1=\bm1$. Therefore
\begin{equation}
  K_n\bm1=A_n^{-1}\bar R\bm1=\bm1.
  \label{eq:repair-row-sums}
\end{equation}
For every state $i$, output $y$, and state $j$, common support and
\eqref{eq:local-repair-diameter} imply
$\bar R_j(y)\le e^{\Delta(\bar R)}\bar R_i(y)$. Using
$K_n=[I+(A_n^{-1}-I)]\bar R$, we obtain
\begin{align}
  K_n(i,y)
  &=
  \bar R_i(y)
  +\sum_j(A_n^{-1}-I)(i,j)\bar R_j(y)
  \notag\\
  &\ge
  \bar R_i(y)
  -\sum_j|(A_n^{-1}-I)(i,j)|\bar R_j(y)
  \notag\\
  &\ge
  \bar R_i(y)
  \left[
    1-e^{\Delta(\bar R)}\|A_n^{-1}-I\|_\infty
  \right].
  \label{eq:repair-nonnegative-bound}
\end{align}
The first bound is the triangle inequality; the second replaces every
$\bar R_j(y)$ using the likelihood-ratio inequality and then collects the row
sum.

Define the deterministic integer
\begin{equation}
  n(\bar R)
  :=
  \max\left\{
    n_*,
    \left\lceil
      \frac{\Delta(\bar R)+\log(C_1\vee1)}{c_0}
    \right\rceil
  \right\}.
  \label{eq:repair-sample-choice}
\end{equation}
Then
$e^{\Delta(\bar R)}C_1e^{-c_0n(\bar R)}\le1$. Hence
\eqref{eq:inverse-classifier} and
\eqref{eq:repair-nonnegative-bound} show that every entry of
$K_{n(\bar R)}$ is nonnegative, and
\eqref{eq:repair-row-sums} shows that it is a Markov kernel. The ceiling and
the maximum in \eqref{eq:repair-sample-choice} also imply
\begin{align*}
  n(\bar R)
  &\le
  n_*+1+
  \frac{\Delta(\bar R)+\log(C_1\vee1)}{c_0}\\
  &\le C_E\bigl(1+\Delta(\bar R)\bigr)
\end{align*}
after fixing a sufficiently large source-dependent $C_E$. This proves
\eqref{eq:repair-cost}.

The reproducing policy takes $n(\bar R)$ observations, computes
$\widehat K_{n(\bar R)}$, and, conditional on the classification $k$,
draws the output from $K_{n(\bar R)}(k,\cdot)$ using state-independent
randomization. Under state $i$, its output probability at $y$ is
\[
  \sum_k A_{n(\bar R)}(i,k)K_{n(\bar R)}(k,y)
  =(A_{n(\bar R)}K_{n(\bar R)})(i,y)
  =\bar R_i(y),
\]
where the last equality holds because $K_n=A_n^{-1}\bar R$. The reproduction is
therefore exact.
\end{proof}

\section{Proof of the uniform exact-conversion theorem}
\label{app:conversion-proof}

This appendix proves \Cref{thm:conversion} in three stages: an exact reproduction for a finite experiment (\Cref{lem:exact-simulator}), its application to
a target policy with a bounded stopping time (\Cref{prop:bounded}), and a
dyadic argument that extends the result to arbitrary integrable stopping policies. 

\subsection{Exact reproduction of a finite experiment}
\label{subsec:finite-simulator-proof}

Fix a finite, full-support, pairwise identified source experiment $E$, and
let $H=(H_i)_{i\in\Theta}$ be a finite common-support experiment.
Denote the common support by $\mathcal Z_H$, and write
\[
  h_i(y):=H_i(y),
  \qquad
  \lambda_{ij}(y):=\log\frac{h_i(y)}{h_j(y)}.
\]
Let $c_{\rm cl},C_{\rm cl}>0$ be fixed constants for which
\eqref{eq:classification} holds. Choose once and for all
\begin{equation}
  0<\eta_E
  \le
  \min\left\{\frac12,\frac1{2d}\right\},
  \label{eq:eta-choice}
\end{equation}
fix $\eta\in(0,\eta_E]$, and put
\[
  g:=\log(1/\eta).
\]
Then $g\ge\log2>0$. This lower bound will let us absorb fixed additive
constants into constant multiples of $g$. Finally, define
\begin{equation}
  B_i(H;E):=\E_{Y\sim H_i}
  \max_{j\ne i}\frac{[\lambda_{ij}(Y)]^+}{I_{ij}(E)}.
  \label{eq:average-boundary}
\end{equation}

\begin{lemma}[Exact multistate reproduction]
\label{lem:exact-simulator}
Fix a finite, full-support, pairwise identified source experiment $E$. Let
$H=(H_i)_{i\in\Theta}$ be a finite experiment, denote its common
support by $\mathcal Z_H$, and put
\[
  \Delta(H)
  :=
  \max_{r,s\in\Theta}\max_{y\in\mathcal Z_H}
  \left|\log\frac{H_r(y)}{H_s(y)}\right|.
\]
There are constants $C_E<\infty$ and $\eta_E\in(0,1)$, depending only
on $E$ and the state set, such that, for every such $H$ and every
$\eta\in(0,\eta_E]$, an exact $E$-reproduction $(N,Y)$ of $H$ exists
and satisfies, for every state $i$,
\begin{equation}
  \E_i^E N
  \le
  B_i(H;E)
  +C_E\left[
    \sqrt{1+B_i(H;E)}
    +\log\frac1\eta
    +\eta\bigl(1+\Delta(H)\bigr)
  \right].
  \label{eq:exact-simulator}
\end{equation}
The constant is independent of $H$, its output space, and $\eta$.
\end{lemma}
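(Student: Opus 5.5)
The plan is to build the reproduction in three stages: guess the state cheaply, run a simultaneous-passage test calibrated to a drawn candidate output, then accept or reject with a state-independent probability and hand the rejected mass to a residual experiment handled by \Cref{lem:diameter-repair}. Throughout, $g=\log(1/\eta)$ plays the role of a safety margin.

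\emph{Stage 1 (pilot).} Take $m=\lceil C g\rceil$ observations and compute the maximum-likelihood classifier $\widehat K_m$ of \eqref{eq:ml-definition}. By \Cref{lem:classification}, $\max_i\Prob_i^E(\widehat K_m\ne i)\le\eta$ once $C$ is large in terms of $c_{\rm cl},C_{\rm cl}$. This stage costs $O(g)$ and accounts for the $\log(1/\eta)$ term.

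\emph{Stage 2 (candidate and passage).} Conditional on $\widehat K_m=k$, draw a candidate $y$ from $H_k$ using auxiliary randomization. Then run the passage time $\tau_k(b)$ of \eqref{eq:passage-definition} on a fresh signal tail, with boundaries
\[
  b_j(y):=\lambda_{kj}(y)+g,\qquad j\ne k.
\]
The tail is fresh by \Cref{lem:fresh-tail-pasting}. When $k=i$, \Cref{lem:passage} gives expected cost at most $t_i(b)+C_E\sqrt{1+t_i(b)}+C_E$. Averaging over $y\sim H_i$, and using $t_i(b(y))\le\max_{j}[\lambda_{ij}(y)]^+/I_{ij}(E)+g/\underline I_E$ together with Jensen's inequality for the square root, gives $B_i(H;E)+C_E[\sqrt{1+B_i}+g]$. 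The event $k\ne i$ has probability at most $\eta$. On that event I would cap the passage, for instance by abandoning it after $C(1+\Delta(H)+g)$ steps and routing to the residual. Its contribution is then $O(\eta(1+\Delta(H)+g))$, which is absorbed into the stated bound.

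\emph{Stage 3 (exact thinning and repair).} At the stopping time, the stopped likelihood identity \Cref{lem:stopped-likelihood} shows that the density of the stopped record in state $r$ relative to state $k$ is $e^{-S_{kr}(\tau)}\le e^{-\lambda_{kr}(y)-g}$. Overshoot is bounded because the increments are bounded. I then want a state-independent acceptance probability $a(h,y)\in[0,1]$ satisfying
\[
  \sum_k\Prob_r(\widehat K_m=k,\ \text{candidate }y,\ \text{accept})=(1-\eta')\,H_r(y)
\]
for every $r$, with $\eta'=O(\eta)$. The margin $g$ makes the off-diagonal contributions, those from $k\ne r$, an $O(\eta)$ multiplicative perturbation of the diagonal one. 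So, as in the Neumann-series argument of \Cref{lem:diameter-repair}, the linear system for the acceptance weights, indexed by the classifier label, has a nonnegative solution bounded by one. The rejected mass $\eta' H_r(y)-(\text{excess})$, normalized, is a finite common-support experiment $R$. Its log-likelihood diameter is $O(\Delta(H)+g)$, so \Cref{lem:diameter-repair} reproduces it exactly using $C_E(1+\Delta(H)+g)$ further observations. That cost is paid only with probability $O(\eta)$, which yields the term $\eta(1+\Delta(H))$. Exactness of the mixture follows by pasting with \Cref{lem:fresh-tail-pasting}.

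The main obstacle is Stage 3: making the acceptance step exact simultaneously in all $d+1$ states while keeping the acceptance probability in $[0,1]$ and the rejection mass $O(\eta)$. I would first try to define the acceptance weight directly as $a=(1-\eta)\exp\{\lambda_{k\cdot}(y)-S_{k\cdot}(\tau)\}$ evaluated coordinatewise against a common normalizer, and then correct the cross-classifier leakage by a perturbative (Neumann) inversion. This is where the choice $\eta_E\le 1/(2d)$ in \eqref{eq:eta-choice} should enter, guaranteeing that the perturbation has norm at most $1/2$.
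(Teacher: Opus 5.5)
Your architecture (pilot classification, a candidate drawn from $H_{\widehat K}$, a passage test with margin $g=\log(1/\eta)$, state-independent thinning, and repair of a residual by \Cref{lem:diameter-repair}) is the paper's, and your Stage 1--2 cost accounting matches it. The gap is Stage 3. You leave it open, yet it carries the whole of exactness.

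\emph{Stage 3.} The weight you propose does not work. On completion $S_{kr}^E(\tau)\ge\lambda_{kr}(y)+g$, so $(1-\eta)\exp\{\lambda_{kr}(y)-S_{kr}^E(\tau)\}\le(1-\eta)\eta$. Rescaling by $e^{g}$ makes own-branch acceptance depend on the overshoot, which is of order one, so the rejected mass is no longer $O(\eta)$. You also claim both that the accepted mass is exactly $(1-\eta')H_r(y)$ and that the residual is ``$\eta'H_r(y)-(\text{excess})$''; these cannot both hold.

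The missing idea is that exact proportionality is unnecessary. Thin with one constant probability $a_\eta=(1-\eta)/(1+d\eta)$. Your leakage bound gives, for each wrong label $k\ne r$, state-$r$ mass at most $\eta H_r(y)$ at $y$, and the own-label term is at most $H_r(y)$. Hence the accepted mass is a pointwise sub-experiment, $Q_r(y)\le(1-\eta)H_r(y)$. Own-branch success $\ge1-d\eta$ (see below) and classifier accuracy $\ge1-\eta$ give total rejection $\delta_r\in[\eta,2(d+1)\eta]$. Then $R_r:=(H_r-Q_r)/\delta_r$ is automatically a common-support experiment with $\Delta(R)\le\Delta(H)+\log(2(d+1)/\eta)$, and $\Prob_r(Y=y)=Q_r(y)+\delta_rR_r(y)=H_r(y)$ holds by construction.

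Your linear-system route can also be closed, but only with record-independent weights $\alpha_k(y)$. They must solve $\sum_kA_r(k)\pi_{kr}(y)\frac{H_k(y)}{H_r(y)}\alpha_k(y)=1-\eta'$ for every $r$, where $A_r(k)=\Prob_r(\widehat K=k)$ and $\pi_{kr}(y)$ is the state-$r$ completion probability of the $(k,y)$-test. The off-diagonal coefficients are nonnegative and at most $\eta A_r(k)$. For small $\eta$, a solution in $[0,1]^{d+1}$ exists once $1-\eta'\le A_r(r)\pi_{rr}(y)$ for all $r,y$, and the residual is then $H$ itself. This requires own-branch completion probability $1-O(\eta)$ uniformly in $y$, which leads to the second problem.

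\emph{Termination.} Stage 2 is not yet a stopping policy. Under a true state $i\ne k$ with $\lambda_{ki}(y)+g>0$, the bare passage $\tau_k(b)$ is infinite with probability at least $1-e^{-\lambda_{ki}(y)-g}>0$, since $S_{ki}^E$ drifts to $-\infty$. You cannot cap ``on the event $k\ne i$'', because the policy does not observe $i$; the cap must be imposed in every branch. In the correct branch, the cap-hitting probability must then be $O(\eta)$ uniformly in $y$. Otherwise the rejected mass, and with it the repair cost of order $(1+\Delta(H))\,\Prob(\text{reject})$, exceeds the allowed $\eta(1+\Delta(H))$. The mean bound of \Cref{lem:passage} plus Markov's inequality gives only $O(1/C)$. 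You would need the Hoeffding tail from inside its proof, with a cap of order $(\Delta(H)+g)/\underline I_E+C(1+g)$.

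The paper avoids both issues with failure regions. Under a true state $i\ne k$, the verifier stops no later than state $i$'s own simultaneous passage to level $g$, at expected cost $C_E(1+g)$. Ville's inequality applied to $e^{S_{\ell k}^E}$ gives own-state success at least $1-d\eta$, with no cap needed.
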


\begin{proof}[Proof of \Cref{lem:exact-simulator}]

We construct the reproduction and verify its law and cost in seven steps.

\paragraph{Step 1: Proposing an output}

Define the deterministic classification length and the corresponding
classifier by
\begin{equation}
  \ell_\eta
  :=
  \max\left\{
    1,
    \left\lceil
      \frac1{c_{\rm cl}}\log\frac{C_{\rm cl}}\eta
    \right\rceil
  \right\},
  \qquad
  \widehat K:=\widehat K_{\ell_\eta}(X_{1:\ell_\eta}),
  \label{eq:selected-classifier}
\end{equation}
where $\widehat K_\ell$, including its tie-breaking rule, was formally
defined in \eqref{eq:ml-definition}. By construction,
\[
  \ell_\eta
  \ge
  \frac1{c_{\rm cl}}\log\frac{C_{\rm cl}}\eta.
\]
Therefore \Cref{lem:classification} gives
\begin{equation}
  \max_i\Prob_i^E(\widehat K\ne i)\le\eta.
  \label{eq:classifier-error}
\end{equation}
The ceiling in \eqref{eq:selected-classifier} also gives
\begin{align*}
  \ell_\eta
  &\le
  1+\max\left\{
    1,
    \frac{\log C_{\rm cl}+g}{c_{\rm cl}}
  \right\}\\
  &\le
  2+\frac{|\log C_{\rm cl}|}{c_{\rm cl}}
  +\frac{g}{c_{\rm cl}}.
\end{align*}
Since
$c_{\rm cl}$ and $C_{\rm cl}$ depend only on $E$, this proves
\begin{equation}
  \ell_\eta\le C_E(1+g)
  \label{eq:classification-length}
\end{equation}
for a constant depending only on $E$.

Conditional on $\widehat K=k$, use state-independent randomization to draw a
proposed output
\[
  Y^{\rm p}\sim H_k.
\]
Thus, under state $i$,
\begin{equation}
  \Prob_i^E(\widehat K=k,Y^{\rm p}=y)
  =
  \Prob_i^E(\widehat K=k)h_k(y).
  \label{eq:proposal-law}
\end{equation}

\paragraph{Step 2: Constructing the verifier and proving termination}

Fix for the moment a proposed pair $(k,y)$. On a fresh source stream, define
the success-region passage time
\[
  \tau^+_{k,y}
  :=
  \inf\left\{
    n\ge0:
    S_{kj}^E(n)\ge\lambda_{kj}(y)+g
    \text{ for every }j\ne k
  \right\}.
\]
The success region requires enough evidence for the proposed state $k$ against
every alternative, including the output-specific requirements
$\lambda_{kj}(y)$.

For each $\ell\ne k$, define the passage time of the failure region indexed by
$\ell$ as
\[
  \tau^{\rm fail}_\ell
  :=
  \inf\left\{
    n\ge0:S_{\ell r}^E(n)\ge g
    \text{ for every }r\ne\ell
  \right\}.
\]
The verifier rejects the proposal as soon as any alternative state reaches
its failure region, so put
\[
  \tau^-_k
  :=\min_{\ell\ne k}\tau^{\rm fail}_\ell.
\]
Each displayed passage time is a stopping time: whether the corresponding
region has been entered by time $n$ is determined by the relevant cumulative log-likelihood-ratio processes through time $n$. The minimum is over finitely many stopping times and is
therefore also a stopping time.

The verifier stops at
\[
  V_{k,y}:=\tau^+_{k,y}\wedge\tau^-_k.
\]
It declares success on
\[
  \mathsf{Succ}_{k,y}
  :=
  \{\tau^+_{k,y}<\tau^-_k\};
\]
thus failure wins a tie.

Under state $k$, $\tau^+_{k,y}$ is the simultaneous passage time in
\eqref{eq:passage-definition} with boundaries
$(\lambda_{kj}(y)+g)_{j\ne k}$, so it has finite expectation by
\Cref{lem:passage}. Under a state $i\ne k$, the minimum defining
$\tau^-_k$ includes the failure region indexed by $i$. Hence
$\tau^-_k$ is no larger than the simultaneous passage time at which
\[
  S_{ir}^E(n)\ge g
  \qquad\text{for every }r\ne i.
\]
That latter time also has finite expectation by \Cref{lem:passage}. Since
$V_{k,y}\le\tau^+_{k,y}$ in state $k$, and
$V_{k,y}\le\tau^-_k$ in every state $i\ne k$, the verifier is integrable,
and therefore finite almost surely, under every state.

Return now to the random proposed pair. Run
$(V_{\widehat K,Y^{\rm p}},
\one_{\mathsf{Succ}_{\widehat K,Y^{\rm p}}})$ on the unused source tail after
$\ell_\eta$, and denote its selected duration and success event by
\[
  V:=V_{\widehat K,Y^{\rm p}}^{(\ell_\eta)},
  \qquad
  \mathsf{Succ}
  :=\mathsf{Succ}_{\widehat K,Y^{\rm p}}^{(\ell_\eta)}.
\]
The selector $(\widehat K,Y^{\rm p})$ is known at the deterministic time
$\ell_\eta$ and has finite range. \Cref{lem:fresh-tail-pasting} therefore shows that
\[
  \sigma^{\rm v}:=\ell_\eta+V
\]
is a stopping time, that $\mathsf{Succ}$ is known at $\sigma^{\rm v}$, and that
each selected verifier has the fresh-start law just analyzed. (That
$\mathsf{Succ}_{k,y}$ belongs to $\cF_{V_{k,y}}$, which the lemma requires, is
checked in Step~3.) It is also integrable in every state. Indeed, the proposal
set is finite and the conditional mean identity in \Cref{lem:fresh-tail-pasting} gives
\[
  \E_i^EV
  =
  \sum_{k\in\Theta}\sum_{y\in\mathcal Z_H}
  \Prob_i^E(\widehat K=k,Y^{\rm p}=y)\E_i^EV_{k,y}
  \le
  \max_{k,y}\E_i^EV_{k,y}
  <\infty.
\]
Adding the deterministic $\ell_\eta$ shows that $\sigma^{\rm v}$ is
integrable.
As permitted after \Cref{lem:fresh-tail-pasting}, relabel the enlarged
filtration as the current source filtration.

\paragraph{Step 3: Bounding the verifier's two kinds of error}

Let
\[
  p_i(k,y):=\Prob_i^E(\mathsf{Succ}_{k,y})
\]
be its success probability in state $i$. We need one lower bound in the
proposed state and one upper bound in every other state.

Under state $k$, failure means that $\tau^-_k\le\tau^+_{k,y}$. If the minimum in $\tau^-_k$ is attained by the region
indexed by $\ell\ne k$, then $S_{\ell k}^E\ge g$ when that region is
reached. Thus failure through that region implies
\[
  \sup_{n\ge0}S_{\ell k}^E(n)\ge g.
\]
The process $\exp(S_{\ell k}^E(n))$ is a nonnegative martingale under state
$k$, since each one-step multiplicative increment has state-$k$ mean
\[
  \sum_xE_k(x)\frac{E_\ell(x)}{E_k(x)}=1.
\]
Ville's inequality gives
\[
  \Prob_k^E\!\left(
    \sup_{n\ge0}e^{S_{\ell k}^E(n)}\ge e^g
  \right)
  \le e^{-g}=\eta.
\]
There are $d$ alternative failure regions. A union
bound therefore yields
\begin{equation}
  p_k(k,y)\ge1-d\eta.
  \label{eq:own-success}
\end{equation}

Now fix $i\ne k$. On $\mathsf{Succ}_{k,y}$,
\[
  S_{ki}^E(V_{k,y})\ge\lambda_{ki}(y)+g.
\]
The stopping time $V_{k,y}$ is almost surely finite in states $i$ and
$k$. Moreover, for every $n\ge0$,
\[
  \mathsf{Succ}_{k,y}\cap\{V_{k,y}\le n\}
  =
  \bigcup_{r=0}^n
  \{\tau^+_{k,y}=r\}\cap\{\tau^-_k>r\}
  \in\cF_n.
\]
By the definition of a stopped sigma-field, this proves
$\mathsf{Succ}_{k,y}\in\cF_{V_{k,y}}$. Thus
\Cref{lem:stopped-likelihood}, applied in the direction from $k$ to
$i$, gives
\begin{align}
  p_i(k,y)
  &=
  \E_k^E\!\left[
    e^{-S_{ki}^E(V_{k,y})}
    \one_{\mathsf{Succ}_{k,y}}
  \right]
  \notag\\
  &\le
  e^{-\lambda_{ki}(y)-g}p_k(k,y)
  \notag\\
  &=
  \eta\frac{h_i(y)}{h_k(y)}p_k(k,y).
  \label{eq:cross-success}
\end{align}
The inequality uses the
success-region bound on $S_{ki}^E(V_{k,y})$.

\paragraph{Step 4: Bounding the verification cost}

Fix the true state $i$, and let
\[
  A_i(k):=\Prob_i^E(\widehat K=k).
\]
If the classification is correct and the proposed output is $y$, then
\[
  V_{i,y}
  \le
  \tau_i((\lambda_{ij}(y)+g)_{j\ne i}).
\]
Write $I_*:=\min_{r\ne s}I_{rs}(E)>0$ for the slowest pairwise evidence rate of
the source.
Because $g\ge0$ we have $[\lambda+g]^+\le[\lambda]^++g$, so the benchmark
\begin{align}
  t_i^y
  &:=
  \max_{j\ne i}
  \frac{[\lambda_{ij}(y)+g]^+}{I_{ij}(E)}
  \notag\\
  &\le
  \max_{j\ne i}
  \left\{
    \frac{[\lambda_{ij}(y)]^+}{I_{ij}(E)}
    +\frac{g}{I_{ij}(E)}
  \right\}
  \notag\\
  &\le
  \max_{j\ne i}
  \frac{[\lambda_{ij}(y)]^+}{I_{ij}(E)}
  +\frac{g}{I_*}
  \label{eq:correct-branch-benchmark}
\end{align}
separates the output-specific requirement from the safety margin. Averaging
\eqref{eq:correct-branch-benchmark} under $Y\sim H_i$ and using
\eqref{eq:average-boundary} yields
\[
  \E_{Y\sim H_i}t_i^Y
  \le B_i(H;E)+\frac{g}{I_*}.
\]

Apply \Cref{lem:passage} for each $y$ and then average. Jensen's
inequality for the concave square root gives
\begin{align*}
  \E_{Y\sim H_i}\E_i^EV_{i,Y}
  &\le
  \E_{Y\sim H_i}t_i^Y
  +C_E\E_{Y\sim H_i}\sqrt{1+t_i^Y}+C_E\\
  &\le
  \E_{Y\sim H_i}t_i^Y
  +C_E\sqrt{1+\E_{Y\sim H_i}t_i^Y}+C_E\\
  &\le
  B_i(H;E)+\frac{g}{I_*}
  +C_E\sqrt{1+B_i(H;E)+\frac{g}{I_*}}+C_E.
\end{align*}
Finally,
$\sqrt{u+v}\le\sqrt u+\sqrt v$ for $u,v\ge0$, and both
$g/I_*$ and $\sqrt{g/I_*}$ are bounded by an $E$-dependent constant
times $1+g$. Enlarging $C_E$ therefore gives
\begin{equation}
  \E_{Y\sim H_i}\E_i^EV_{i,Y}
  \le
  B_i(H;E)
  +C_E\left[
    \sqrt{1+B_i(H;E)}+1+g
  \right].
  \label{eq:correct-verification-cost}
\end{equation}

If the classification is $k\ne i$, then the verifier stops no later
than the time at which state $i$ reaches the failure region indexed by $i$.
That is the simultaneous state-$i$ passage time with every boundary equal to
$g$. Its benchmark is
\[
  \max_{r\ne i}\frac{g}{I_{ir}(E)}\le\frac{g}{I_*}.
\]
Substitution into \eqref{eq:passage}, followed by
$\sqrt{1+g/I_*}\le C_E(1+g)$, gives, uniformly in $k\ne i$ and $y$,
\begin{equation}
  \sup_{k\ne i}\sup_y\E_i^EV_{k,y}
  \le C_E(1+g).
  \label{eq:wrong-verification-cost}
\end{equation}

The selected verifier has the fresh-start conditional laws established in
Step~2, so conditioning first on $\widehat K$ and then on the proposal,
\[
  \E_i^EV
  =
  A_i(i)\sum_{y\in\mathcal Z_H}h_i(y)\E_i^EV_{i,y}
  +
  \sum_{k\ne i}A_i(k)
  \sum_{y\in\mathcal Z_H}h_k(y)\E_i^EV_{k,y}.
\]
The correct branch is bounded by \eqref{eq:correct-verification-cost} together
with $A_i(i)\le1$. For the remaining branches, \eqref{eq:classifier-error}
gives $\sum_{k\ne i}A_i(k)=\Prob_i^E(\widehat K\ne i)\le\eta$, so
\eqref{eq:wrong-verification-cost} and $\sum_yh_k(y)=1$ show that together they
contribute at most $C_E\eta(1+g)$. This is uniformly bounded because
$\eta g=\eta\log(1/\eta)\le e^{-1}$ for $0<\eta\le1$.

Classification contributes $\ell_\eta\le C_E(1+g)$ observations by
\eqref{eq:classification-length}. Since $\sigma^{\rm v}=\ell_\eta+V$, the preceding estimates
first give
\[
  \E_i^E\sigma^{\rm v}
  \le
  B_i(H;E)
  +C_E\left[
    \sqrt{1+B_i(H;E)}+1+g+\eta(1+g)
  \right].
\]
Here the last term is the total contribution of the wrong-classification
branches. Because $0<\eta\le1$, it is at most $1+g$. Moreover,
$g\ge\log2$, so $1+g\le(1+1/\log2)g$. Enlarging $C_E$ therefore
gives
\begin{equation}
  \E_i^E\sigma^{\rm v}
  \le
  B_i(H;E)
  +C_E\left[
    \sqrt{1+B_i(H;E)}+\log\frac1\eta
  \right].
  \label{eq:pre-repair-cost}
\end{equation}

\paragraph{Step 5: Bounding the accepted mass}

The state-$i$ probability of proposing $y$ through
candidate $k$ and passing verification is
\[
  r_i(k,y):=A_i(k)h_k(y)p_i(k,y).
\]
This product follows from \eqref{eq:proposal-law} and the fresh-start
conditional success probability of the selected verifier. The own-state
term satisfies
\[
  r_i(i,y)=A_i(i)h_i(y)p_i(i,y)\le h_i(y).
\]
For $k\ne i$, \eqref{eq:cross-success} gives
\[
  r_i(k,y)
  \le
  A_i(k)h_k(y)
  \left[
    \eta\frac{h_i(y)}{h_k(y)}p_k(k,y)
  \right]
  \le\eta h_i(y).
\]

Set
\[
  a_\eta:=\frac{1-\eta}{1+d\eta}.
\]
By \eqref{eq:eta-choice}, $a_\eta\in(0,1)$. At time
$\sigma^{\rm v}$, draw a Bernoulli random variable $B^\eta$ with success
probability $a_\eta$, independently of all signals and earlier
randomization. Enlarge the filtration at $\sigma^{\rm v}$ by revealing
this draw there, exactly as in \eqref{eq:pasted-filtration}. The time
$\sigma^{\rm v}$ remains a stopping time, and $B^\eta$ is measurable in
its stopped sigma-field. Define
\[
  \mathsf{Acc}:=\mathsf{Succ}\cap\{B^\eta=1\},
  \qquad
  Y^{(0)}
  :=
  \begin{cases}
    Y^{\rm p},&\text{on }\mathsf{Acc},\\
    \bot,&\text{on }\mathsf{Acc}^{\mathsf c}.
  \end{cases}
\]
Thus both a verification failure and a rejected thinning draw produce the
temporary flag $\bot$, and $Y^{(0)}$ is
$\cF_{\sigma^{\rm v}}$-measurable. Independence of the thinning draw shows that the
state-$i$ accepted mass at output $y$ is
\[
  Q_i(y)
  :=\Prob_i^E(Y^{(0)}=y)
  =a_\eta\sum_{k\in\Theta}r_i(k,y).
\]
There is one own-state term and $d$ cross-state terms. The preceding bounds
therefore imply, pointwise in $i$ and $y$,
\begin{equation}
  Q_i(y)
  \le a_\eta(1+d\eta)h_i(y)
  =(1-\eta)h_i(y).
  \label{eq:pointwise-subexperiment}
\end{equation}

Let
\[
  \delta_i
  :=\Prob_i^E(Y^{(0)}=\bot)
  =1-\sum_{y\in\mathcal Z_H}Q_i(y)
\]
be the state-$i$ probability of $\bot$. Summing
\eqref{eq:pointwise-subexperiment} over $y$ gives $\delta_i\ge\eta$.

For the reverse bound, keep only the nonnegative term $k=i$ in the sum
defining $Q_i$. Then
\begin{align}
  \sum_yQ_i(y)
  &\ge
  a_\eta A_i(i)\sum_yh_i(y)p_i(i,y)
  \notag\\
  &\ge
  a_\eta(1-\eta)(1-d\eta).
  \label{eq:accepted-mass-lower}
\end{align}
Here the second step uses $A_i(i)\ge1-\eta$ from \eqref{eq:classifier-error},
$p_i(i,y)\ge1-d\eta$ from \eqref{eq:own-success}, and $\sum_yh_i(y)=1$.
Substituting the definition of $a_\eta$ into
\eqref{eq:accepted-mass-lower} and taking complements gives
\begin{align*}
  \delta_i
  &\le
  1-\frac{(1-\eta)^2(1-d\eta)}{1+d\eta}\\
  &=
  \frac{(2d+2)\eta-(2d+1)\eta^2+d\eta^3}
       {1+d\eta}\\
  &\le2(d+1)\eta.
\end{align*}
For the final inequality, use
$-(2d+1)\eta^2+d\eta^3\le0$, which holds because
$d\eta\le1/2<2d+1$, and then use $1+d\eta\ge1$.
Thus, with $C_d:=2(d+1)$,
\begin{equation}
  \eta\le\delta_i\le C_d\eta.
  \label{eq:failure-mass}
\end{equation}

\paragraph{Step 6: Defining and bounding the residual}

Define the residual experiment
\begin{equation}
  R_i(y):=\frac{h_i(y)-Q_i(y)}{\delta_i}.
  \label{eq:residual}
\end{equation}
The numerator is nonnegative by
\eqref{eq:pointwise-subexperiment}, and
\[
  \sum_y\bigl(h_i(y)-Q_i(y)\bigr)=\delta_i.
\]
Thus $R_i$ is a probability distribution. Moreover,
\eqref{eq:pointwise-subexperiment} implies
\[
  \eta h_i(y)\le h_i(y)-Q_i(y)\le h_i(y),
  \qquad y\in\mathcal Z_H.
\]
The lower bound is strictly positive because $\eta>0$ and $H$ has common
support. Hence every $R_i$ has the same support $\mathcal Z_H$.

For any $i,j$ and $y\in\mathcal Z_H$, the two residual numerators satisfy
\[
  \eta\frac{h_i(y)}{h_j(y)}
  \le
  \frac{h_i(y)-Q_i(y)}
       {h_j(y)-Q_j(y)}
  \le
  \frac1\eta\frac{h_i(y)}{h_j(y)}.
\]
Separately, \eqref{eq:failure-mass} gives
\[
  \frac1{C_d}\le\frac{\delta_j}{\delta_i}\le C_d.
\]
Multiplying these two displays and using
\[
  \frac{R_i(y)}{R_j(y)}
  =
  \frac{h_i(y)-Q_i(y)}{h_j(y)-Q_j(y)}
  \frac{\delta_j}{\delta_i}
\]
gives
\begin{equation}
  \frac{\eta}{C_d}\frac{h_i(y)}{h_j(y)}
  \le
  \frac{R_i(y)}{R_j(y)}
  \le
  \frac{C_d}{\eta}\frac{h_i(y)}{h_j(y)}.
  \label{eq:residual-ratio-bounds}
\end{equation}
Taking logarithms in the lower and upper bounds of
\eqref{eq:residual-ratio-bounds} yields, respectively,
\[
  \log\frac{h_i(y)}{h_j(y)}-\log\frac{C_d}{\eta}
  \le
  \log\frac{R_i(y)}{R_j(y)}
  \le
  \log\frac{h_i(y)}{h_j(y)}+\log\frac{C_d}{\eta}.
\]
As a result,
\[
  \left|\log\frac{R_i(y)}{R_j(y)}\right|
  \le
  \left|\log\frac{h_i(y)}{h_j(y)}\right|
  +\log\frac{C_d}{\eta}.
\]
Maximizing over states and outputs gives
\begin{equation}
  \Delta(R)
  \le
  \Delta(H)+\log\frac{C_d}{\eta}.
  \label{eq:residual-diameter}
\end{equation}

\paragraph{Step 7: Repairing the residual and defining the reproduction}
Apply \Cref{lem:diameter-repair} to the residual experiment in
\eqref{eq:residual}; we refer to this step as \emph{repairing} the residual.
Its deterministic repair length satisfies
\begin{equation}
  n(R)
  \le C_E\bigl(1+\Delta(R)\bigr)
  \le
  C_E\left[
    1+\Delta(H)+\log\frac1\eta
  \right],
  \label{eq:actual-repair-length}
\end{equation}
by \eqref{eq:residual-diameter}, with the fixed term $\log C_d$ absorbed into
$C_E$.

Run this residual reproduction on the unused source tail after
$\sigma^{\rm v}$ only on the event $\{Y^{(0)}=\bot\}$. On the acceptance
event, select instead the zero-duration policy with a fixed output
$y_\star\in\mathcal Z_H$. This selector is known at
$\sigma^{\rm v}$, so \Cref{lem:fresh-tail-pasting} supplies globally
defined selected variables $(\bar N^{\rm r},\bar Y^{\rm r})$ and a valid
pasted stopping time. On failure, $\bar Y^{\rm r}$ has the residual law;
on acceptance, $(\bar N^{\rm r},\bar Y^{\rm r})=(0,y_\star)$. Define the
final reproduction explicitly by
\[
  N:=\sigma^{\rm v}+\bar N^{\rm r},
  \qquad
  Y:=
  \begin{cases}
    Y^{\rm p},&\text{if }Y^{(0)}\ne\bot,\\
    \bar Y^{\rm r},&\text{if }Y^{(0)}=\bot.
  \end{cases}
\]
\Cref{lem:fresh-tail-pasting} shows that $N$ is a stopping time. To verify the
output claim explicitly, let $B\subseteq\mathcal Z_H$. For every
deterministic $n$,
\begin{align*}
  \{Y\in B,N\le n\}
  ={}&
  \{Y^{(0)}\ne\bot,\ Y^{\rm p}\in B,\ \sigma^{\rm v}\le n\}\\
  &\cup
  \{Y^{(0)}=\bot,\ \bar Y^{\rm r}\in B,\ N\le n\}.
\end{align*}
On acceptance, $\bar N^{\rm r}=0$, so $N=\sigma^{\rm v}$; all events
in the first line are therefore in $\cF_n$, because $Y^{\rm p}$ was
known at time $\ell_\eta\le\sigma^{\rm v}$ and $Y^{(0)}$ is measurable
at $\sigma^{\rm v}$. The second line is the
stopped-output event supplied by \Cref{lem:fresh-tail-pasting} for the repair branch,
intersected with its selector event. It too belongs to $\cF_n$. Thus
$\{Y\in B,N\le n\}\in\cF_n$ for every $B,n$, proving
$Y\in\cF_N$.

Under state $i$, the accepted first stage contributes mass $Q_i(y)$ to output
$y$, while the failure event has probability $\delta_i$ and carries the fresh
repair output, whose law is $R_i$. Partitioning on acceptance and then using
the definition \eqref{eq:residual} of the residual,
\[
  \Prob_i^E(Y=y)
  =Q_i(y)+\delta_iR_i(y)
  =h_i(y).
\]
The reproduction is thus exact output by output.

Repair is run only on failure, and there the conditional mean duration is the
deterministic length $n(R)$. Since $\Prob_i^E(Y^{(0)}=\bot)=\delta_i$, the
conditional mean assertion in \Cref{lem:fresh-tail-pasting} gives
\begin{equation}
  \E_i^E\bar N^{\rm r}
  =
  \E_i^E\!\left[
    \one_{\{Y^{(0)}=\bot\}}
    \E_i^E(\bar N^{\rm r}\mid\cF_{\sigma^{\rm v}})
  \right]
  =\delta_i n(R)
  \le
  C_E\eta
  \left[
    1+\Delta(H)+\log\frac1\eta
  \right],
  \label{eq:expected-repair-cost}
\end{equation}
the last step combining $\delta_i\le C_d\eta$ from \eqref{eq:failure-mass}
with \eqref{eq:actual-repair-length} and absorbing $C_d$ into $C_E$.

Because $N=\sigma^{\rm v}+\bar N^{\rm r}$, adding
\eqref{eq:pre-repair-cost} and \eqref{eq:expected-repair-cost} first gives
\[
  \E_i^E N
  \le
  B_i(H;E)
  +C_E\left[
    \sqrt{1+B_i(H;E)}+\log\frac1\eta
    +\eta\left(1+\Delta(H)+\log\frac1\eta\right)
  \right].
\]
The repair contribution
$C_E\eta\log(1/\eta)$ is uniformly bounded by $C_E/e$; this fixed quantity
is absorbed by $C_E\sqrt{1+B_i(H;E)}$, whose square-root factor is at least
one. The other repair contributions are bounded by
$C_E\eta(1+\Delta(H))$. Hence, simultaneously for every state $i$,
\[
  \E_i^E N
  \le
  B_i(H;E)
  +C_E\left[
    \sqrt{1+B_i(H;E)}
    +\log\frac1\eta
    +\eta\bigl(1+\Delta(H)\bigr)
  \right].
\]
Every quantity on the right is finite. Thus $N$ is integrable and in
particular finite almost surely in every state. This proves
\eqref{eq:exact-simulator} and completes the proof of
\Cref{lem:exact-simulator}.
\end{proof}

\subsection{Policies with bounded stopping times}
\label{subsec:bounded}

Fix the source $F$ and target $G$ from \Cref{thm:conversion}, and write
$\mathcal X_G$ for the finite signal space of $G$.

We first treat target policies with bounded stopping times.
Consider a finite-output stopping $G$-policy $(T,Z)$ with
$T\le m$ pathwise, where $m\ge1$ is an integer.

\begin{proposition}[Conversion of policies with bounded stopping times]
\label{prop:bounded}
There is $C_{F,G}<\infty$, independent of $m$ and the policy, for which an
exact $F$-reproduction $(N,Y)$ exists and satisfies
\begin{equation}
  \E_i^FN
  \le
  \rho_i(F\to G)\E_i^GT+C_{F,G}\sqrt m
  \qquad(i\in\Theta).
  \label{eq:bounded-conversion}
\end{equation}
The reproduction may be chosen with $N\ge1$ pathwise.
\end{proposition}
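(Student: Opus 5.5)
The plan is to reduce the conversion to a single application of \Cref{lem:exact-simulator} with source $E=F$. The target experiment $H$ will be chosen to Blackwell dominate $H(T,Z)$ while having a likelihood-ratio structure we can compute exactly.

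\textbf{Choice of target.} By \Cref{lem:behavioral}, we may assume $(T,Z)$ is behavioral with state-independent kernels $q_h$. Let $H$ be the state-indexed law of the stopped record
\[
  W:=(T,X_1,\ldots,X_T,Z).
\]
Since $T\le m$ and $\mathcal Z$ is finite, $W$ has finite range. Full support of $G$ makes every record reachable in one state reachable in all, so $H$ has common support. Reproducing $H$ and then projecting to the last coordinate reproduces $H(T,Z)$ exactly.

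The stopped-record probability formula in the proof of \Cref{lem:behavioral} shows that the decision factors $q_{h_t}$ cancel in likelihood ratios. Hence
\[
  \lambda_{ij}(W)=S^G_{ij}(T),
  \qquad
  \Delta(H)\le L_G m,
\]
where $L_G$ bounds the one-step $G$ log-likelihood ratios.

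\textbf{Bounding $B_i(H;F)$.} Decompose
\[
  S^G_{ij}(T)=I_{ij}(G)\,T+M_{ij}(T),
\]
where $M_{ij}(n):=S^G_{ij}(n)-nI_{ij}(G)$ is a state-$i$ martingale with bounded increments. Then, pathwise,
\[
  \max_{j\ne i}\frac{[S^G_{ij}(T)]^+}{I_{ij}(F)}
  \le \rho_i(F\to G)\,T+\sum_{j\ne i}\frac{|M_{ij}(T)|}{I_{ij}(F)}.
\]
Because $T$ is bounded, optional stopping and orthogonality of increments give
\[
  \E_i^G M_{ij}(T)^2=\mathrm{Var}_i\bigl(\log(G_i/G_j)\bigr)\,\E_i^G T\le C_G m.
\]
Cauchy--Schwarz then yields
\[
  B_i(H;F)\le\rho_i\,\E_i^G T+C_{F,G}\sqrt m.
\]
This uniform fluctuation control is the main step. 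I expect the only subtlety to be that the maximum over $j$ must be handled by the crude sum over $d$ alternatives, which costs only a constant.

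\textbf{Choice of $\eta$ and collecting terms.} Take $\eta:=\min\{\eta_F,1/m\}$ in \Cref{lem:exact-simulator}. Then the remaining error terms are controlled as follows:
\begin{itemize}
  \item $\eta(1+\Delta(H))\le \eta(1+L_G m)\le C$;
  \item $\log(1/\eta)\le\log m+C\le C\sqrt m$;
  \item $\sqrt{1+B_i(H;F)}\le\sqrt{1+Cm}\le C\sqrt m$, using $B_i\le\max_j L_G m/I_{ij}(F)$.
\end{itemize}
Substituting these into \eqref{eq:exact-simulator} gives \eqref{eq:bounded-conversion}, with a constant depending only on $(F,G)$. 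It is independent of $m$, of the policy, and of $\mathcal Z$, since only $|\Theta|$, $L_G$ and the $F$-constants enter.

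Finally, $N\ge1$ pathwise holds automatically, because the reproduction of \Cref{lem:exact-simulator} always begins with $\ell_\eta\ge1$ classification observations.
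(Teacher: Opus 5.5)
Your proof is correct and follows the paper's route: a single application of \Cref{lem:exact-simulator} with source $F$ and $\eta$ of order $1/m$, the split $S^G_{ij}(T)=I_{ij}(G)T+M_{ij}(T)$ with the $L^2$ martingale bound, and $\Delta(H)\le L_G m$. The differences are minor. By reproducing the full stopped record instead of $\Law_i^G(Z)$ and then projecting to $Z$, you get $\lambda_{ij}(W)=S^G_{ij}(T)$ exactly and bypass the paper's conditional-Jensen step. Your $N\ge1$ claim relies on the internal fact $\ell_\eta\ge1$ in the lemma's construction, whereas the paper pads with $N+\one_{\{N=0\}}$ so that the lemma is used only as a black box.
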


\begin{proof}[Proof of \cref{prop:bounded}]
By \Cref{lem:behavioral} we may represent the randomization of $(T,Z)$ by
state-independent behavioral kernels. Let
\[
  H_i:=\Law_i^G(Z).
\]
These laws have common support. Indeed, a terminal output with positive
probability in state $i$ has the explicit decomposition
\begin{align*}
  H_i(z)
  ={}&q_\varnothing\bigl((\mathsf{STOP},z)\bigr)\\
  &+
  \sum_{n=1}^m\ \sum_{x_{1:n}\in\mathcal X_G^n}
  q_\varnothing(\mathsf{CONTINUE})
  \left(\prod_{t=1}^nG_i(x_t)\right)\\
  &\qquad{}\times
  \left(\prod_{t=1}^{n-1}
    q_{x_{1:t}}(\mathsf{CONTINUE})\right)
  q_{x_{1:n}}\bigl((\mathsf{STOP},z)\bigr).
\end{align*}
The first term is stopping at time zero. In a summand indexed by $n\ge1$,
the first behavioral factor records continuation at time zero, the middle
behavioral product records continuation after the first $n-1$ signals, and
the final factor records stopping with output $z$ after signal $n$. Empty
products are understood to equal one. All summands are nonnegative and there
are finitely many. If $H_i(z)>0$, at least one summand is positive. Its
behavioral factors are state-independent, while its signal factors are
strictly positive in every state by full support. The same summand is
therefore positive in every other state. This proves common support.
For $z$ in this common support, write
\[
  \lambda_{ij}(z):=\log\frac{H_i(z)}{H_j(z)}.
\]

Let $X_1^G,X_2^G,\ldots$ denote the target signals and let
$(\cF_n^G)_{n\ge0}$ be their filtration augmented by the target policy's
state-independent randomization. Fix $i\ne j$, and define
\[
  S_{ij}^G(n)
  :=
  \sum_{t=1}^n\log\frac{G_i(X_t^G)}{G_j(X_t^G)},
  \qquad
  M_{ij}(n):=S_{ij}^G(n)-I_{ij}(G)n.
\]
Then $M_{ij}$ is a martingale with bounded centered increments, and at the
target stopping time,
\begin{equation}
  S_{ij}^G(T)=I_{ij}(G)T+M_{ij}(T).
  \label{eq:stopped-score}
\end{equation}

By \Cref{lem:stopped-likelihood},
\[
  \frac{\dd\Prob_j^G|_{\cF_T^G}}
       {\dd\Prob_i^G|_{\cF_T^G}}
  =e^{-S_{ij}^G(T)}.
\]
Because $Z$ is $\cF_T^G$-measurable, for every output $z$ in the common
support,
\[
  H_j(z)
  =
  \E_i^G\!\left[
    e^{-S_{ij}^G(T)}\one_{\{Z=z\}}
  \right],
\]
and division by $H_i(z)>0$ gives
\[
  \frac{H_j(z)}{H_i(z)}
  =
  \E_i^G\!\left[e^{-S_{ij}^G(T)}\mid Z=z\right].
\]
Hence $\lambda_{ij}(z)=-\log\E_i^G[e^{-S_{ij}^G(T)}\mid Z=z]$, and since
$x\mapsto-\log x$ is convex, conditional Jensen bounds the likelihood
requirement of an output by the evidence the target policy actually
accumulated before reporting it:
\[
  \lambda_{ij}(z)
  \le\E_i^G\bigl[S_{ij}^G(T)\mid Z=z\bigr].
\]

Fix $j\ne i$. Dividing by $I_{ij}(F)>0$ preserves this inequality, and
monotonicity and convexity of $x\mapsto x^+$ then give
\begin{align*}
  \left[\frac{\lambda_{ij}(Z)}{I_{ij}(F)}\right]^+
  &\le
  \left[
    \E_i^G\!\left(
      \left.\frac{S_{ij}^G(T)}{I_{ij}(F)}\right|Z
    \right)
  \right]^+\\
  &\le
  \E_i^G\!\left[
    \left.
    \left[\frac{S_{ij}^G(T)}{I_{ij}(F)}\right]^+
    \right|Z
  \right].
\end{align*}
Taking the maximum over the
finitely many $j$'s yields
\begin{equation}
  \max_{j\ne i}
  \left[\frac{\lambda_{ij}(Z)}{I_{ij}(F)}\right]^+
  \le
  \E_i^G\!\left[
    \left.
    \max_{j\ne i}
    \left[\frac{S_{ij}^G(T)}{I_{ij}(F)}\right]^+
    \right|Z
  \right].
  \label{eq:conditional-max}
\end{equation}

For every $j\ne i$,
\[
  \left[
    \frac{I_{ij}(G)}{I_{ij}(F)}T
    +\frac{M_{ij}(T)}{I_{ij}(F)}
  \right]^+
  \le
  \rho_i(F\to G)T
  +
  \frac{|M_{ij}(T)|}{I_{ij}(F)}.
\]
To verify this inequality, put
$a_j:=I_{ij}(G)/I_{ij}(F)$. Both $a_j$ and $T$ are nonnegative, so
$[a_jT+b]^+\le a_jT+|b|$. The definition of
$\rho_i(F\to G)$ gives $a_j\le\rho_i(F\to G)$.

Average \eqref{eq:conditional-max} over $Z$. The tower property removes
the conditioning. Apply \eqref{eq:stopped-score} and the last pathwise inequality, and then use
$\max_j u_j\le\sum_j u_j$ for nonnegative $u_j$. This gives
\begin{align}
  B_i(H;F)
  &\le
  \E_i^G
  \max_{j\ne i}
  \left[
    \frac{S_{ij}^G(T)}{I_{ij}(F)}
  \right]^+
  \notag\\
  &\le
  \rho_i(F\to G)\E_i^GT
  +
  \sum_{j\ne i}
  \frac{\E_i^G|M_{ij}(T)|}{I_{ij}(F)}
  .
  \label{eq:boundary-before-l2}
\end{align}

Define
\[
  \xi_t^{ij}
  :=
  \log\frac{G_i(X_t^G)}{G_j(X_t^G)}-I_{ij}(G),
  \qquad
  v_{ij}:=\E_i^G[(\xi_t^{ij})^2].
\]
Full support and finiteness of the signal space make the increments bounded.
They are independent of $\cF_{t-1}^G$, have conditional mean zero, and have
constant conditional second moment $v_{ij}$. Since $T\le m$,
\[
  M_{ij}(T)
  =
  \sum_{t=1}^m\xi_t^{ij}\one_{\{T\ge t\}}.
\]
If $s<t$, then
$\xi_s^{ij}\one_{\{T\ge s\}}\one_{\{T\ge t\}}$ is
$\cF_{t-1}^G$-measurable. Therefore the corresponding cross term vanishes:
\begin{align*}
  &\E_i^G\!\left[
    \xi_s^{ij}\xi_t^{ij}
    \one_{\{T\ge s\}}\one_{\{T\ge t\}}
  \right]\\
  &\qquad=
  \E_i^G\!\left[
    \xi_s^{ij}\one_{\{T\ge s\}}\one_{\{T\ge t\}}
    \E_i^G(\xi_t^{ij}\mid\cF_{t-1}^G)
  \right]
  =0.
\end{align*}
For a diagonal term, predictability of $\one_{\{T\ge t\}}$ gives
\[
  \E_i^G[(\xi_t^{ij})^2\one_{\{T\ge t\}}]
  =v_{ij}\Prob_i^G(T\ge t).
\]
Expanding the square, using the vanished cross terms, and then applying the
tail-sum formula yields
\[
  \E_i^G[M_{ij}(T)^2]
  =v_{ij}\sum_{t=1}^m\Prob_i^G(T\ge t)
  =v_{ij}\E_i^GT.
\]
Thus Cauchy--Schwarz gives
\[
  \E_i^G|M_{ij}(T)|
  \le\sqrt{v_{ij}}\sqrt{\E_i^GT}.
\]
The constant
\[
  C_{F,G}^{(1)}
  :=
  \max_i\sum_{j\ne i}\frac{\sqrt{v_{ij}}}{I_{ij}(F)}
\]
is finite because there are finitely many ordered pairs and every
$I_{ij}(F)>0$. Substitution into \eqref{eq:boundary-before-l2} proves
\begin{equation}
  B_i(H;F)
  \le
  \rho_i(F\to G)\E_i^GT
  +C_{F,G}^{(1)}\sqrt{\E_i^GT}.
  \label{eq:boundary-from-stopping}
\end{equation}

Let
\[
  L_G:=
  \max_{r,s\in\Theta}\ \max_{x\in\mathcal X_G}
  \left|\log\frac{G_r(x)}{G_s(x)}\right|.
\]
This constant is finite by full support and finiteness of the signal space.
Since $T\le m$,
\[
  |S_{ij}^G(T)|\le mL_G
  \qquad\text{pathwise}.
\]
Therefore
$e^{-S_{ij}^G(T)}\in[e^{-mL_G},e^{mL_G}]$ pathwise. A conditional
expectation of a random variable in this interval remains in the same
interval. The conditional likelihood-ratio formula thus gives
\[
  e^{-mL_G}
  \le
  \frac{H_j(z)}{H_i(z)}
  \le
  e^{mL_G}.
\]
Taking logarithms yields
$|\log(H_i(z)/H_j(z))|\le mL_G$. Maximizing over states and outputs gives
\begin{equation}
  \Delta(H)\le mL_G.
  \label{eq:bounded-diameter}
\end{equation}

Let $\eta_F$ denote the constant $\eta_E$ supplied by
\Cref{lem:exact-simulator} when its source experiment is $E=F$. Apply
that lemma with source $F$ and
\[
 \eta=\frac{\eta_F}{m}.
\]
Because $m\ge1$, this choice belongs to $(0,\eta_F]$. Put
\[
  t_i:=\E_i^GT\le m.
\]
Equation
\eqref{eq:boundary-from-stopping} gives
\[
  B_i(H;F)
  \le\rho_i(F\to G)t_i+C_{F,G}^{(1)}\sqrt{t_i}.
\]
Since $t_i\le m$, $\sqrt m\le m$, and $m\ge1$,
\begin{align*}
  1+B_i(H;F)
  &\le
  1+\rho_i(F\to G)m+C_{F,G}^{(1)}\sqrt m\\
  &\le
  \bigl(1+\rho_i(F\to G)+C_{F,G}^{(1)}\bigr)m.
\end{align*}
There are finitely many states. Taking their maximum and then taking square
roots gives a policy-independent constant such that
\begin{equation}
  \sqrt{1+B_i(H;F)}
  \le C_{F,G}\sqrt m.
  \label{eq:bounded-square-root-term}
\end{equation}
Moreover,
\begin{equation}
  \log\frac1\eta
  =
  \log\frac1{\eta_F}+\log m
  \le C_{F,G}\sqrt m.
  \label{eq:bounded-log-term}
\end{equation}
Here we used $\log m\le\sqrt m$ and $1\le\sqrt m$ for $m\ge1$.
Finally, \eqref{eq:bounded-diameter} gives
\begin{align}
  \eta(1+\Delta(H))
  &\le
  \frac{\eta_F}{m}(1+mL_G)
  \notag\\
  &\le
  \eta_F(1+L_G).
  \label{eq:bounded-diameter-term}
\end{align}

Apply \Cref{lem:exact-simulator} with this $\eta$. Its leading boundary
term is at most
$\rho_i(F\to G)t_i+C_{F,G}^{(1)}\sqrt{t_i}$. Its three remainder terms
are bounded by constant multiples of $\sqrt m$ by
\eqref{eq:bounded-square-root-term}, \eqref{eq:bounded-log-term}, and
\eqref{eq:bounded-diameter-term}. Since $\sqrt{t_i}\le\sqrt m$, adding
these bounds proves \eqref{eq:bounded-conversion} with a constant depending
only on $F$ and $G$.

It remains to enforce the positive-duration clause. On any path on which the
reproduction just constructed stops at source time zero, wait for one unused
source observation and then report the output already determined. Formally,
from the reproduction $(N,Y)$ just obtained, define
\[
  N^+:=N+\one_{\{N=0\}},
  \qquad
  Y^+:=Y.
\]
Then $N^+\ge1$ pathwise. For $n=0$, the event $\{N^+\le n\}$ is empty;
for every $n\ge1$, it equals $\{N\le n\}$. Thus $N^+$ is a stopping
time. The same identities with an additional intersection by
$\{Y\in B\}$ show that $Y^+$ is measurable at $N^+$. Its output law
is unchanged, and
\[
  \E_i^F N^+
  =\E_i^F N+\Prob_i^F(N=0)
  \le\E_i^F N+1.
\]
Since $1\le\sqrt m$, enlarging $C_{F,G}$ absorbs the additional cost.
Relabel $(N^+,Y^+)$ as $(N,Y)$; this proves the positive-duration clause.
\end{proof}

\subsection{Extending to an arbitrary integrable target policy}
\label{subsec:dyadic}

Let $(T,Z)$ now be an arbitrary finite-output stopping $G$-policy that is
integrable in every state, and write $\mathcal Z$ for the finite range of
$Z$ and $\mathcal X_G$ for the finite signal space of $G$. We construct
its exact $F$-reproduction. All auxiliary randomizations used in different
source stages are mutually independent, independent of both signal streams,
and revealed only when used.

\begin{proof}[Proof of \Cref{thm:conversion}]

We assemble the reproductions of \Cref{prop:bounded} into a
single infinite-horizon policy, then verify the recursion, its exactness, and
its expected cost.

\paragraph{Step 1: Specifying every bounded target continuation}

Use \Cref{lem:behavioral} to represent the target policy by
state-independent kernels $q_h$ at finite $G$-signal histories. A draw
from $q_h$ is either $\mathsf{CONTINUE}$ or
$(\mathsf{STOP},z)$. We use independent behavioral draws at the different
histories. This representation has the same joint law of the target stopping
time, terminal output, and observed signals under every state.
Let $(\cF_n^G)_{n\ge0}$ be the filtration generated by the first $n$
target signals and all behavioral draws revealed through the decision after
signal $n$, including the time-zero draw when $n=0$.

Set
\[
  L_k:=2^k,
  \qquad
  s_{-1}:=0,
  \qquad
  s_k:=\sum_{a=0}^kL_a=2^{k+1}-1
  \quad(k\ge0).
\]
We call the target signals $X^G_{s_{k-1}+1},\ldots,X^G_{s_k}$ the $k$th
\emph{block}; the source observations used to reproduce block $k$
constitute source \emph{stage} $k$. The target enters block $k$ on the target-space event
\[
  A_k^G:=\{T>s_{k-1}\}.
\]
Because $T$ is a stopping time,
$A_k^G=\{T\le s_{k-1}\}^{\mathsf c}\in\cF_{s_{k-1}}^G$. Thus entry into the block is known at its checkpoint, before any signal from that block is observed.
For $k=0$, this means that the behavioral draw from $q_\varnothing$ at time
zero was $\mathsf{CONTINUE}$. For $k\ge1$, it means that the target policy
continued after seeing its first $s_{k-1}$ signals.

Fix $k\ge0$ and a history
\[
  h\in\mathcal X_G^{s_{k-1}},
\]
where $\mathcal X_G^0=\{\varnothing\}$. Starting after the decision to
continue at $h$, define a bounded target continuation
\[
  (\tau_{k,h},W_{k,h})
\]
as follows. Draw fresh $G$-signals one at a time and apply the behavioral
kernel after each enlarged history. If the policy stops after
$r\in\{1,\ldots,L_k\}$ new signals
$x_{1:r}=(x_1,\ldots,x_r)$ and reports $z$, set
\[
  \tau_{k,h}=r,
  \qquad
  W_{k,h}
  =
  (\mathsf{STOP},r,x_{1:r},z).
\]
If it chooses $\mathsf{CONTINUE}$ after all $L_k$ new signals, set
\[
  \tau_{k,h}=L_k,
  \qquad
  W_{k,h}
  =
  (\mathsf{CONTINUE},x_{1:L_k}).
\]
In particular, stopping after the last signal of the block produces a stop
report, while a continuation report records that the behavioral decision at
the block endpoint was $\mathsf{CONTINUE}$. Thus there is no ambiguity at a
checkpoint.

This is an ordinary finite-output stopping $G$-policy with
$\tau_{k,h}\le L_k$. Its report includes all the target-side information
needed either to identify the stopped record or to resume at the next
checkpoint. Its report laws have common support, as we now verify locally.
For a fixed stop report
$(\mathsf{STOP},r,x_{1:r},z)$, its state-$i$ probability is
\[
  \left(\prod_{u=1}^rG_i(x_u)\right)
  \left(\prod_{u=1}^{r-1}
    q_{(h,x_{1:u})}(\mathsf{CONTINUE})\right)
  q_{(h,x_{1:r})}\bigl((\mathsf{STOP},z)\bigr),
\]
where the second product is one when $r=1$, and
$(h,x_{1:u})$ denotes concatenation. For a continuation report
$(\mathsf{CONTINUE},x_{1:L_k})$, the probability is
\[
  \left(\prod_{u=1}^{L_k}G_i(x_u)\right)
  \left(\prod_{u=1}^{L_k}
    q_{(h,x_{1:u})}(\mathsf{CONTINUE})\right).
\]
The first display records continuation after the first $r-1$ new signals
and a stop after the $r$th; the second records continuation after every new
signal, including the block endpoint. Every behavioral factor in these
displays is state-independent, and every signal factor is strictly positive
in every state by full support. Therefore a report has positive probability
in one state if and only if its behavioral product is positive, in which case
the same report has positive probability in every state. Histories that are
never reached by the original target policy cause no problem: the kernels
$q_h$ were defined arbitrarily there, and the same formulas still determine
a valid bounded continuation with common-support report laws.

Let
\[
  \mathsf K_{i,k,h}:=\Law_i^G(W_{k,h}),
  \qquad
  e_{i,k}(h):=\E_i^G\tau_{k,h}.
\]
Apply \Cref{prop:bounded} to every pair $(k,h)$. It supplies an exact
source policy
\[
  (N_{k,h}^F,\widehat W_{k,h})
\]
such that, simultaneously for all states $i$,
\begin{align}
  \Law_i^F(\widehat W_{k,h})
  &=
  \mathsf K_{i,k,h},
  \label{eq:stage-exactness}\\
  \E_i^F N_{k,h}^F
  &\le
  \rho_i(F\to G)e_{i,k}(h)
  +C_{F,G}\sqrt{L_k}.
  \label{eq:stage-cost}
\end{align}
For each fixed $k$, there are finitely many histories $h$, and their union
over $k\ge0$ is countable. We therefore fix one reproduction for every pair
$(k,h)$ once and for all. The final clause of \Cref{prop:bounded} lets us
choose each of them with $N_{k,h}^F\ge1$ pathwise.

For later pasting, place every report in one common countable space. Define
\begin{align*}
  \mathcal W_k
  :={}&
  \left\{
    (\mathsf{STOP},r,x_{1:r},z):
    1\le r\le L_k,\ x_{1:r}\in\mathcal X_G^r,\ z\in\mathcal Z
  \right\}\\
  &\cup
  \left\{
    (\mathsf{CONTINUE},x_{1:L_k}):
    x_{1:L_k}\in\mathcal X_G^{L_k}
  \right\},\\
  \mathcal W
  :={}&\{\dagger\}\cup\bigcup_{k\ge0}\mathcal W_k.
\end{align*}
Thus every $\widehat W_{k,h}$ is regarded as $\mathcal W$-valued, and
the idle branch has the fixed output $y_\dagger:=\dagger$.

\paragraph{Step 2: Defining the recursive source construction}

We now work on the source $F$-signal space, with augmented filtration
$(\cF_n^F)_{n\ge0}$, and set the initial source completion time to
\[
  \sigma_{-1}:=0.
\]
Draw from the state-independent kernel
$q_\varnothing$, using auxiliary randomization. If its outcome is
$(\mathsf{STOP},z)$, set $Y=z$; the construction stops at source time zero
and enters no source stage. If its outcome is $\mathsf{CONTINUE}$, enter
source stage zero with this stored history.
This handles a policy that mixes stopping and continuation at time zero.

Let $\widetilde A_k$ denote the source-space event that source stage $k$
is entered. In particular, $\widetilde A_0$ is the event that the initial
draw says $\mathsf{CONTINUE}$. Define a global checkpoint record
$\widetilde H_0$ by setting it equal to $\varnothing$ on
$\widetilde A_0$ and equal to the idle symbol $\dagger$ on
$\widetilde A_0^{\mathsf c}$. Recursively, $\widetilde H_k$ will belong
to $\mathcal X_G^{s_{k-1}}$ on $\widetilde A_k$ and will equal
$\dagger$ off $\widetilde A_k$. This convention makes every checkpoint
record a formally defined random variable on the whole source space.

Recursively, suppose source stage $k$ is entered and the stored simulated
target history is
\[
  \widetilde H_k=h
  \in\mathcal X_G^{s_{k-1}}.
\]
Run the fixed reproduction
$(N_{k,h}^F,\widehat W_{k,h})$ on the unused $F$-signal tail after
$\sigma_{k-1}$, with fresh state-independent randomization. Denote the
selected duration and report on the source probability space by
$(N_k^F,\widehat W_k)$. Thus this pair is, by definition, the shifted
application of $(N_{k,h}^F,\widehat W_{k,h})$ on the event that stage $k$
is entered with stored history $h$.
If the source stage is not entered because the construction has already
terminated, set $N_k^F:=0$ and $\widehat W_k:=\dagger$, where
$\dagger$ is the idle report. In both cases put
\[
  \sigma_k:=\sigma_{k-1}+N_k^F.
\]

When the stage report is
\[
  \widehat W_k
  =
  (\mathsf{STOP},r,x_{1:r},z),
\]
set $Y=z$ and terminate at source time $\sigma_k$. When it is
\[
  \widehat W_k
  =
  (\mathsf{CONTINUE},x_{1:L_k}),
\]
store the enlarged simulated target history
\[
  \widetilde H_{k+1}:=(h,x_{1:L_k})
\]
and enter stage $k+1$. If stage $k$ is not entered or its report is a
stop report, set $\widetilde H_{k+1}:=\dagger$. Hence, on every path,
$\widetilde A_{k+1}=\{\widetilde H_{k+1}\ne\dagger\}$.

To define the terminal variables on every source path, fix an arbitrary
$z_0\in\mathcal Z$. On a path on which the recursive construction never
terminates, set $Y:=z_0$; in all cases define
\begin{equation}
  N:=\lim_{k\to\infty}\sigma_k
  \quad\text{in }\N_0\cup\{\infty\}.
  \label{eq:source-limit-time}
\end{equation}
Because every entered stage consumes at least one source observation, $N$ is
infinite on a path with infinitely many entered stages. Step~4 will show that
such paths have probability zero in every state, and Step~6 will show that
$N$ is integrable.

\paragraph{Step 3: Verifying every restart and filtration claim}

We prove by induction on $k$ that $\sigma_{k-1}$ is an almost surely
finite $F$-stopping time in every state and that the event
$\widetilde A_k$, together with the stored history on that event, is
measurable at $\sigma_{k-1}$. These are exactly the hypotheses needed to
restart stage $k$.

For $k=0$, $\sigma_{-1}=0$. The event $\widetilde A_0$ is determined
by the time-zero draw from $q_\varnothing$, and the stored history is the
deterministic empty history. Thus the induction assertions hold at the base
stage.

Suppose they hold at stage $k$. The event $\widetilde A_k$ and, on that
event, the value of $\widetilde H_k$, are known at source time
$\sigma_{k-1}$ by the induction hypothesis. Hence the selector
\[
  J_k
  :=
  \begin{cases}
    (k,h),&\text{on }
      \{\widetilde A_k,
        \widetilde H_k=h\},\\
    \dagger,&\text{on }\widetilde A_k^{\mathsf c}
  \end{cases}
\]
is $\cF_{\sigma_{k-1}}^F$-measurable and countable-valued. The value $\dagger$
selects the idle policy; every other value selects the fixed integrable reproduction for $(k,h)$.

Apply \Cref{lem:fresh-tail-pasting} at $\sigma_{k-1}$. It proves that
$\sigma_k=\sigma_{k-1}+N_k^F$ is a stopping time and that
$\widehat W_k$ is measurable at $\sigma_k$. It also gives, under state
$i$, for every bounded measurable test function $\varphi$, and on the
reach event
$\{\widetilde A_k,\widetilde H_k=h\}$, the identity
\begin{equation}
  \E_i^F\!\left[
    \varphi(N_k^F,\widehat W_k)
    \mid\cF_{\sigma_{k-1}}^F
  \right]
  =
  \E_i^F\varphi(N_{k,h}^F,\widehat W_{k,h}).
  \label{eq:conditional-stage-law}
\end{equation}
For the unbounded duration coordinate, the conditional-mean clause of that
lemma separately gives
\begin{equation}
  \E_i^F[N_k^F\mid\cF_{\sigma_{k-1}}^F]
  =
  \E_i^F N_{k,h}^F
  \quad\text{on the same reach event}.
  \label{eq:conditional-stage-mean}
\end{equation}

We must also check finiteness before invoking the lemma at the next stage. For
fixed $k$, the set $\mathcal X_G^{s_{k-1}}$ is finite. Each
$N_{k,h}^F$ is integrable in every state by
\eqref{eq:stage-cost}. Because $N_k^F=0$ off $\widetilde A_k$, the tower property and
\eqref{eq:conditional-stage-mean} give
\[
  \E_i^F N_k^F
  =
  \sum_{h\in\mathcal X_G^{s_{k-1}}}
  \Prob_i^F(\widetilde A_k,\widetilde H_k=h)
  \E_i^F N_{k,h}^F.
\]
This is a finite sum of finite terms, so $N_k^F$ is integrable and hence
finite almost surely. Since $\sigma_{k-1}$ is finite almost surely, so is
$\sigma_k=\sigma_{k-1}+N_k^F$.

Finally, $\widehat W_k$, together with the old record $\widetilde H_k$,
determines whether the construction terminates or enters stage $k+1$; in
the latter case it determines the concatenated record $\widetilde H_{k+1}$.
The old record is measurable at $\sigma_{k-1}$, hence also at
$\sigma_k$, and $\widehat W_k$ is measurable at $\sigma_k$. Therefore
both $\widetilde A_{k+1}$ and the globally defined
$\widetilde H_{k+1}$ are $\cF_{\sigma_k}^F$-measurable. This closes the induction and proves every
restart, stopping-time, finiteness, and conditional-law claim for finite
stages.

\paragraph{Step 4: Verifying exactness by induction}

The target event $A_k^G$ and the source event $\widetilde A_k$ live on
different probability spaces. We compare their probabilities and records;
we never identify the events themselves. For
$h\in\mathcal X_G^{s_{k-1}}$, define the target and source reach events
\begin{align*}
  B_{k,h}^G
  &:=\{A_k^G,X_{1:s_{k-1}}^G=h\},\\
  \widetilde B_{k,h}
  &:=\{\widetilde A_k,
    \widetilde H_k=h\}.
\end{align*}
The checkpoint measurability proved in Step~3 gives
$\widetilde B_{k,h}\in\cF_{\sigma_{k-1}}^F$. On the target side,
$A_k^G\in\cF_{s_{k-1}}^G$ and the signal record through that checkpoint is
also measurable there, so $B_{k,h}^G\in\cF_{s_{k-1}}^G$. Define the
target reach mass by
\[
  \pi_{i,k}(h):=\Prob_i^G(B_{k,h}^G).
\]
We claim that, for every state $i$, every stage $k$, and every such history
$h$,
\begin{equation}
  \Prob_i^F(\widetilde B_{k,h})
  =
  \pi_{i,k}(h).
  \label{eq:reach-induction}
\end{equation}

For $k=0$, the only history is $\varnothing$. Both sides of
\eqref{eq:reach-induction} are the probability that a draw from the same
state-independent kernel $q_\varnothing$ says
$\mathsf{CONTINUE}$. This proves the base case.

Suppose \eqref{eq:reach-induction} holds at stage $k$, and fix a reached
history $h$. The reach event is measurable at $\sigma_{k-1}$ by Step~3, so
the tower property applies; feeding the bounded test function
$\varphi(n,u)=\one_{\{u=w\}}$ into \eqref{eq:conditional-stage-law} and then
invoking \eqref{eq:stage-exactness} and the induction hypothesis, we get, for
every possible report $w$,
\begin{align}
  \Prob_i^F(\widetilde B_{k,h},\widehat W_k=w)
  &=
  \E_i^F\!\left[
    \one_{\widetilde B_{k,h}}
    \Prob_i^F(\widehat W_k=w\mid\cF_{\sigma_{k-1}}^F)
  \right]
  \notag\\
  &=
  \Prob_i^F(\widetilde B_{k,h})\mathsf K_{i,k,h}(w)
  \notag\\
  &=
  \pi_{i,k}(h)\mathsf K_{i,k,h}(w).
  \label{eq:record-induction}
\end{align}

On the target probability space, define the block report on every path by
\[
  W_k^G
  :=
  \begin{cases}
    \dagger,
      &T\le s_{k-1},\\
    (\mathsf{STOP},T-s_{k-1},
      X_{s_{k-1}+1:T}^G,Z),
      &s_{k-1}<T\le s_k,\\
    (\mathsf{CONTINUE},X_{s_{k-1}+1:s_k}^G),
      &T>s_k.
  \end{cases}
\]
If $\pi_{i,k}(h)=0$, then both sides of the next display are zero. Suppose
instead that $\pi_{i,k}(h)>0$. The event $B_{k,h}^G$ depends only on
$X_{1:s_{k-1}}^G$ and on the behavioral draws at the prefixes of $h$, whereas
the block $(\tau_{k,h},W_{k,h})$ is built from the signals after $s_{k-1}$ and
from the draws at strict extensions of $h$. A path visits each history at most
once, so these two index sets are disjoint, and the draws at different
histories were taken independently in Step~1. Conditioning on $B_{k,h}^G$
therefore leaves the subsequent signals and draws with their unconditional
laws, and the conditional law of $W_k^G$ is $\mathsf K_{i,k,h}$. Multiplying that conditional
probability by $\pi_{i,k}(h)$ gives, for every possible report $w$,
\begin{equation}
  \Prob_i^G(B_{k,h}^G,W_k^G=w)
  =
  \pi_{i,k}(h)\mathsf K_{i,k,h}(w).
  \label{eq:target-record-factorization}
\end{equation}
Comparing \eqref{eq:record-induction} and
\eqref{eq:target-record-factorization} proves equality of every complete report conditional on its reach record.

Fix an extended history
$h'=(h,x_{1:L_k})\in\mathcal X_G^{s_k}$. By the recursive definition,
\begin{align*}
  \Prob_i^F(\widetilde B_{k+1,h'})
  &=
  \Prob_i^F\!\left(
    \widetilde B_{k,h},
    \widehat W_k=(\mathsf{CONTINUE},x_{1:L_k})
  \right)\\
  &=
  \pi_{i,k}(h)
  \mathsf K_{i,k,h}(\mathsf{CONTINUE},x_{1:L_k})\\
  &=
  \Prob_i^G(B_{k+1,h'}^G).
\end{align*}
The first equality is what it means to enter the next source stage with
history $h'$, the second is \eqref{eq:record-induction}, and the third is
\eqref{eq:target-record-factorization}, since on the target side a
continuation report is exactly the event $T>s_k$ with
$X_{1:s_k}^G=h'$. This proves \eqref{eq:reach-induction} at stage $k+1$.

For a stop report
$w=(\mathsf{STOP},r,x_{1:r},z)$, comparison of
\eqref{eq:record-induction} and
\eqref{eq:target-record-factorization} likewise gives equality of the source
and target probabilities of the complete stopped record. Summing this
equality over the finitely many histories
$h\in\mathcal X_G^{s_{k-1}}$, stopping positions
$r\in\{1,\ldots,L_k\}$, and signal strings
$x_{1:r}\in\mathcal X_G^r$ yields
\begin{equation}
  \Prob_i^F(Y=z,\text{ source stops in stage }k)
  =
  \Prob_i^G(Z=z,s_{k-1}<T\le s_k).
  \label{eq:one-stage-output}
\end{equation}
The events being summed on each side are disjoint, and the report records
every index over which we sum. At time zero, both policies use the same
kernel $q_\varnothing$, so their probabilities of stopping immediately
with output $z$ also agree. Summing that time-zero equality and
\eqref{eq:one-stage-output} over $k=0,\ldots,K$ gives, for every $K\ge0$,
\begin{equation}
  \Prob_i^F
  \left(
    Y=z,\,
    \text{the source stops at time zero or in a stage }k\le K
  \right)
  =
  \Prob_i^G(Z=z,T\le s_K).
  \label{eq:finite-stage-output}
\end{equation}
The target events on the right are disjoint and their union is
$\{Z=z,T\le s_K\}$; this explains the right side of the display. We have
therefore proved both the reach induction and exactness at every finite
checkpoint.

Because $T$ is integrable under state $i$, it is finite almost surely, and
\[
  \Prob_i^G(T>s_{k-1})\longrightarrow0.
\]
Summing \eqref{eq:reach-induction} over $h$ gives
\[
  \Prob_i^F(\widetilde A_k)
  =
  \Prob_i^G(T>s_{k-1}).
\]
The source reach events $\widetilde A_k$ are decreasing in $k$. Hence
\[
  \Prob_i^F(\text{infinitely many source stages are entered})
  =
  \lim_{k\to\infty}\Prob_i^F(\widetilde A_k)
  =0.
\]
Thus the source construction terminates almost surely. Letting $K\to\infty$
in \eqref{eq:finite-stage-output}, the source events increase to
$\{Y=z,\text{ the construction terminates}\}$, while the target events
increase to $\{Z=z,T<\infty\}$. The first event differs from $\{Y=z\}$
only on the null nontermination event. The second equals $\{Z=z\}$ almost
surely because integrability of $T$ implies $T<\infty$. Continuity from
below on both probability spaces therefore gives
\[
  \Prob_i^F(Y=z)=\Prob_i^G(Z=z)
\]
for every $z$. The terminal reproduction is exact in every state.

\paragraph{Step 5: Decomposing and summing the stage costs}

Define the target-side number of observations used in block $k$ by
\[
  D_k
  :=
  (T\wedge s_k)-(T\wedge s_{k-1}).
\]
These variables are nonnegative. For every $K$,
\[
  \sum_{k=0}^K D_k=T\wedge s_K,
\]
so, because $s_K\uparrow\infty$,
\begin{equation}
  \sum_{k\ge0}D_k=T
  \qquad\text{pathwise}.
  \label{eq:dyadic-telescope}
\end{equation}

Off $A_k^G$, the target has stopped by $s_{k-1}$, and the definition gives
$D_k=0$. On $B_{k,h}^G$, the policy has just continued at history $h$.
If it stops during the next block, $D_k$ is the number of new target signals
until that stop; if it survives the block, $D_k=L_k$. Thus its conditional
law is exactly the law of $\tau_{k,h}$, and its conditional mean is
$e_{i,k}(h)$. Conditioning on the reach-event partition gives
\begin{align}
  \E_i^G D_k
  &=\E_i^G[D_k\one_{A_k^G}]
  \notag\\
  &=\sum_{h:\,\pi_{i,k}(h)>0}
    \pi_{i,k}(h)\E_i^G[D_k\mid B_{k,h}^G]
  \notag\\
  &=\sum_h \pi_{i,k}(h)e_{i,k}(h),
  \label{eq:target-stage-mean}
\end{align}
using in turn that $D_k=0$ off $A_k^G$, total expectation over the finite
partition of $A_k^G$, and the conditional law just identified. Zero-mass
histories may be restored to the sum because their coefficients vanish.

On the source side, $N_k^F=\sigma_k-\sigma_{k-1}$ vanishes off
$\widetilde A_k$. Partitioning on the source reach events and applying the
tower property gives
\begin{align}
  \E_i^F N_k^F
  &=
  \sum_h\E_i^F[N_k^F\one_{\widetilde B_{k,h}}]
  \notag\\
  &=
  \sum_h\E_i^F\!\left[
    \one_{\widetilde B_{k,h}}
    \E_i^F(N_k^F\mid\cF_{\sigma_{k-1}}^F)
  \right]
  \notag\\
  &=
  \sum_h\Prob_i^F(\widetilde B_{k,h})
  \E_i^F N_{k,h}^F
  \notag\\
  &=
  \sum_h\pi_{i,k}(h)\E_i^F N_{k,h}^F
  \notag\\
  &\le
  \sum_h \pi_{i,k}(h)
  \left[
    \rho_i(F\to G)e_{i,k}(h)
    +C_{F,G}\sqrt{L_k}
  \right]
  \notag\\
  &=
  \rho_i(F\to G)\E_i^G D_k
  +C_{F,G}\Prob_i^G(A_k^G)\sqrt{L_k},
  \label{eq:source-stage-mean}
\end{align}
where the conditional mean \eqref{eq:conditional-stage-mean} evaluates the
inner expectation, the reach induction \eqref{eq:reach-induction} replaces
$\Prob_i^F(\widetilde B_{k,h})$ by $\pi_{i,k}(h)$, and the inequality is the
stage cost \eqref{eq:stage-cost}. The last line then follows from
\eqref{eq:target-stage-mean} together with
$\sum_h\pi_{i,k}(h)=\Prob_i^G(A_k^G)$.

The sequence $(\sigma_k)$ is nondecreasing and remains constant after source
termination. By \eqref{eq:source-limit-time} and the definition of its
increments, for every finite $K$,
\[
  \sigma_K
  =\sigma_{-1}+\sum_{k=0}^KN_k^F
  =\sum_{k=0}^KN_k^F,
\]
telescoping $N_k^F=\sigma_k-\sigma_{k-1}$ and using $\sigma_{-1}=0$. Letting
$K\to\infty$ gives $N=\sum_{k\ge0}N_k^F$ pathwise in $[0,\infty]$, and the
summands are nonnegative, so monotone convergence gives
$\E_i^FN=\sum_k\E_i^FN_k^F$. Summing the stagewise bounds
\eqref{eq:source-stage-mean} — legitimate before finiteness is known, again by
nonnegativity — and simplifying the two resulting series yields
\begin{align}
  \E_i^FN
  &=
  \sum_{k\ge0}\E_i^FN_k^F
  \notag\\
  &\le
  \rho_i(F\to G)\sum_{k\ge0}\E_i^GD_k
  +
  C_{F,G}\sum_{k\ge0}
  \Prob_i^G(T>s_{k-1})\sqrt{L_k}
  \notag\\
  &=
  \rho_i(F\to G)\E_i^GT
  +
  C_{F,G}\sum_{k\ge0}
  \Prob_i^G(T>2^k-1)2^{k/2}.
  \label{eq:dyadic-before}
\end{align}
For the final equality, Tonelli's theorem and \eqref{eq:dyadic-telescope} give
$\sum_{k\ge0}\E_i^GD_k=\E_i^G\sum_{k\ge0}D_k=\E_i^GT$.
Also $A_k^G=\{T>s_{k-1}\}$, while
$s_{k-1}=2^k-1$ and $L_k=2^k$. These identities produce the final
series in \eqref{eq:dyadic-before}.

\paragraph{Step 6: Controlling the dyadic remainder and finishing the policy}

Define
\[
  C_{\rm geo}:=\frac{\sqrt2}{\sqrt2-1}.
\]
We first prove the following pathwise bound for every nonnegative integer
$t$:
\begin{equation}
  \sum_{k:\,t>2^k-1}2^{k/2}
  \le C_{\rm geo}\sqrt t.
  \label{eq:dyadic-pathwise}
\end{equation}
When $t=0$, both sides are zero. When $t\ge1$, choose the unique integer
$r\ge0$ such that
\[
  2^r\le t<2^{r+1}.
\]
Because $t$ is an integer,
\[
  t>2^k-1
  \quad\Longleftrightarrow\quad
  t\ge2^k.
\]
Thus the indices on the left side of \eqref{eq:dyadic-pathwise} are exactly
$k=0,\ldots,r$, and
\[
  \sum_{k=0}^r2^{k/2}
  =
  \frac{2^{(r+1)/2}-1}{\sqrt2-1}
  \le C_{\rm geo}\,2^{r/2}
  \le C_{\rm geo}\sqrt t.
\]
This proves \eqref{eq:dyadic-pathwise}.

Using Tonelli's theorem to write the series in
\eqref{eq:dyadic-before} as an expectation, then applying
\eqref{eq:dyadic-pathwise}, gives
\begin{align*}
  \sum_{k\ge0}
  \Prob_i^G(T>2^k-1)2^{k/2}
  &=
  \E_i^G
  \sum_{k\ge0}
  \one_{\{T>2^k-1\}}2^{k/2}\\
  &\le
  C_{\rm geo}\E_i^G\sqrt T\\
  &\le
  C_{\rm geo}\sqrt{\E_i^GT}.
\end{align*}
The second line applies \eqref{eq:dyadic-pathwise} with $t=T$ on the almost
sure event $\{T<\infty\}$, and the third is Jensen's inequality for the concave
square root. Substituting into
\eqref{eq:dyadic-before} produces the remainder coefficient
$C_{F,G}C_{\rm geo}$; since $C_{\rm geo}$ is a numerical constant, we may
enlarge $C_{F,G}$ once more and write
\[
  \E_i^FN
  \le
  \rho_i(F\to G)\E_i^GT
  +C_{F,G}\sqrt{\E_i^GT}.
\]

It remains only to verify the terminal measurability assertions. Step~3 proved
inductively that every $\sigma_k$ is a stopping time. The sequence is
increasing, so in discrete time
\[
  \{N\le n\}
  =
  \bigcap_{k\ge0}\{\sigma_k\le n\}
  \in\cF_n^F.
\]
Hence $N$ is a stopping time. The expectation bound just proved makes it
integrable, and therefore finite almost surely, in every state. On the event
of a stage-$k$ stop, $Y$ is part of the report measurable at
$\sigma_k=N$; on a time-zero stop it is $\cF_0^F$-measurable. More
explicitly, for $B\subseteq\mathcal Z$, let $E_{k,B}$ be the event that
stage $k$ reports
$(\mathsf{STOP},r,x_{1:r},z)$ for some $r,x_{1:r}$ and some $z\in B$.
Step~3 gives $E_{k,B}\in\cF_{\sigma_k}^F$, and therefore
\[
  E_{k,B}\cap\{\sigma_k\le n\}\in\cF_n^F.
\]
The nontermination event contributes nothing to $\{Y\in B,N\le n\}$,
because the positive-duration convention makes $N=\infty$ whenever
infinitely many stages are entered. Consequently,
\begin{align*}
  \{Y\in B,N\le n\}
  ={}&
  \{\text{the time-zero draw is }(\mathsf{STOP},z)
    \text{ for some }z\in B\}\\
  &\cup
  \bigcup_{k\ge0}
  \bigl(E_{k,B}\cap\{\sigma_k\le n\}\bigr).
\end{align*}
The time-zero event is in $\cF_0^F$, and every event in the countable union
is in $\cF_n^F$. Thus
$\{Y\in B,N\le n\}\in\cF_n^F$ for every $B,n$, which is exactly
$\cF_N^F$-measurability of $Y$. Together with the exact terminal law
established in Step~4, this completes the proof of
\Cref{thm:conversion}.
\end{proof}
\end{document}